\newtheorem{theorem}{Theorem}[section] %
\newtheorem{lemma}[theorem]{Lemma}
\newtheorem{corollary}[theorem]{Corollary}
\newtheorem{observation}[theorem]{Observation}
\newtheorem{proposition}[theorem]{Proposition}
\newtheorem{definition}[theorem]{Definition}
\newtheorem{claim}[theorem]{Claim}
\theoremstyle{definition} %
\newtheorem{remark}{Remark}
\newcommand{\changelocaltocdepth}[1]{%
  \addtocontents{toc}{\protect\setcounter{tocdepth}{#1}}%
  \setcounter{tocdepth}{#1}%
}
\newcommand{\F}{\mathbb{F}}
\newcommand{\N}{\mathbb{N}}
\newcommand{\EHC}{\mathsf{EHC}}
\newcommand{\eps}{\varepsilon}
\newcommand{\ch}{\mathsf{Ch}}
\newcommand{\eqdef}{\stackrel{\rm def}{=}}
\newcommand{\poly}{\mathrm{poly}}
\newcommand{\Err}{\mathsf{Err}}
\newcommand{\CC}{\mathsf{CC}}
\newcommand{\RC}{\mathsf{RC}}
\newcommand{\K}{K}
\newcommand{\Unif}{\mathcal{U}}
\newcommand{\var}{var}
\newcommand{\D}{\mathsf{D}^*}
\newcommand{\wt}{\widetilde}
\newcommand{\wtD}{\wt{\D}}
\newcommand{\depth}{\mathsf{depth}}
\newcommand{\Alg}{Algorithm~{1}\xspace}
\newcommand{\AlgCRS}{Algorithm~{A}\xspace}  %
\newcommand{\AlgNO}{Algorithm~{B}\xspace}    %
\newcommand{\gnote}[1]{}
\newcommand{\ran}[1]{}
\newcommand\Tstrut{\rule{0pt}{2.3ex}}
\begin{document}

\title{Efficient Multiparty Interactive Coding, Part~I:  Oblivious Insertions, Deletions and Substitutions}

\author{Ran Gelles, Yael T. Kalai, and Govind Ramnarayan%
\thanks{A preliminary version of this work~\cite{GKR19} appeared in the Proceedings of the 2019 ACM Symposium on Principles of Distributed Computing (PODC~2019).}%
\thanks{R.~Gelles is with the Faculty of Engineering, Bar-Ilan University, Israel (e-mail: ran.gelles@biu.ac.il). Work supported in part by the Israel Science Foundation (ISF) through grant No.\@ 1078/17.}%
\thanks{Y.~T.~Kalai is with Microsoft Research and MIT (e-mail: yael@microsoft.com).}
\thanks{G.~Ramnarayan was with the EECS Department at MIT, and currently works for Neural Magic (e-mail: govind.ramnarayan@gmail.com). Work supported in part by Army Research Office grant W911NF1910217,  NSF award CCF 1665252 and NSF award DMS-1737944.}%
}

\maketitle

\begin{abstract}
In the field of interactive coding,
two or more parties wish to carry out a distributed computation
over a communication network that may be noisy.
The ultimate goal is to develop efficient coding schemes that can tolerate a high
level of noise while increasing the communication by only a constant factor (i.e., constant rate).

In this work we consider synchronous communication networks over an arbitrary topology, in the powerful adversarial insertion-deletion noise model.  Namely, the noisy channel may adversarially alter the content of any transmitted symbol, as well as completely remove a transmitted symbol or inject a new symbol into the channel.

We provide an efficient, 
constant rate scheme that conducts any computation on any arbitrary network,
and succeeds with high probability
as long as an oblivious adversary corrupts at most $\frac{\eps}{m}$ fraction of the total communication,
where $m$ is the number of links in the network and $\eps$ is a small constant.
In this work (the first part), 
our scheme assumes that the parties share a random string to which the adversarial noise is oblivious.

While previous work considered the insertion-deletion noise model in the two-party setting,
to the best of our knowledge, our scheme is the first multiparty scheme that is resilient to insertions and deletions.
Furthermore, our scheme is the first \emph{computationally efficient} scheme in the multiparty setting that is resilient to \emph{adversarial} noise.

\end{abstract}

\begin{IEEEkeywords}
Coding for interactive communication, distributed computing, communication protocols
\end{IEEEkeywords}

\setcounter{page}{1}	%

\section{Introduction}
\IEEEPARstart{C}{ommunication} channels may introduce noise of different types, e.g., flipping transmitted bits. 
One notorious type of
noise is \emph{insertion and deletion} noise, that may add or remove bits from the transmissions due to synchronization mismatch~\cite{sellers62}. 
The seminal work of Levenshtein~\cite{levenshtein1966bcc} was the first to consider codes that correct insertions and deletions leading to a long line of research on correcting such errors, and bounding the capabilities of codes correcting such errors, e.g.,
\cite{Ullman67,TK76,Tenengolts84,Levenshtein92,SZ99,DM01,GL16,SWGY17}. These codes apply to sending a transmission over a uni-directional channel.

In the early 90's, Schulman~\cite{schulman92,schulman96} introduced and studied the problem of performing an \emph{interactive} two-party computation over a noisy communication
channel, rather than communicating in a uni-directional manner. 
In \cite{RS94}, Rajagopalan and Schulman extended the two-party case and considered a network of~$n$ parties that wish to compute some function of their private inputs
by communicating over an arbitrary\footnote{By ``arbitrary'' we mean that the topology of the network can be an arbitrary graph~$G=(V,E)$ where each node is a party and each edge is a communication channel connecting the parties associated with these nodes.} \emph{noisy} network.
The work of~\cite{RS94} shows that if each channel
is the binary symmetric channel\footnote{That is, a channel that flips every bit with some constant probability $\eps \in (0,1/2)$.} (BSC$_\eps$), then one can obtain a \emph{coding scheme}
that takes any protocol $\Pi$ that assumes noiseless communication, and converts it into
a resilient protocol that computes the same task over the noisy network.

The coding scheme in~\cite{RS94} defies noise by adding redundancy.
The amount of added redundancy is usually measured with respect to the noiseless setting---the \emph{rate} of the coding is the communication of the noiseless protocol divided by the communication of the noise-resilient one.
The rate assumes values between zero and one, and ideally is bounded away from zero, commonly known as \emph{constant} or \emph{positive} rate.
The rate may vary according to the network in consideration, for instance, the rate in~\cite{RS94}
behaves as $1/O(\log (d+1))$ where $d$ is the maximal degree in the network. Hence, for networks where the maximal degree is non-constant, the rate approaches zero as the network size increases.

The next major step for multiparty coding schemes was provided by Jain et~al.~\cite{JKL15} and by Hoza and Schulman~\cite{HS16}.  In these works the noise is no longer assumed to be stochastic but instead is adversarial. That is, they consider worst-case noise where the only limit is the number of bits flipped by the adversary. They showed that as long as the adversary flips at most $\eps/m$-fraction of the total communication, a coding scheme with a constant rate can be achieved, where $\eps$ is some small constant, and $m$ is the number of communication links in the network.\footnote{Jain et al.~\cite{JKL15} obtained this result for the specific star network, whereas \cite{HS16} generalized this result to a network with arbitrary topology.} 

While both these works consider adversarial errors, they consider different communication models.~\cite{HS16} assumes that every party sends a single bit to all its neighbors in each round; this model is sometimes called \emph{fully utilized}, and is commonly considered in multiparty interactive coding~\cite{RS94,HS16,ABEGH19,BEGH18}. 
On the other hand,~\cite{JKL15} uses a relaxed communication setting, in which parties may or may not speak in a given round; this setting is very common for distributed computations, and is relatively less studied in previous work on interactive coding. Furthermore, they do not assume that the \emph{underlying} protocol is fully utilized, and simply use the (weaker) assumption that the order of speaking in the underlying protocol is known, and is independent of the parties' inputs. Naturally, one can convert any protocol in the non-fully-utilized model to a fully-utilized protocol by forcing all parties to speak at every round, and then apply an interactive coding scheme to the fully-utilized protocol. However, the conversion to a fully-utilized protocol may cause the communication complexity to increase by a factor of up to~$m$, greatly harming the rate of the coding scheme.

The above work focused on bit-flips (\emph{substitution noise}). 
In this work we consider the stronger type of noise of \emph{insertions and deletions} %
where the noise may completely remove a transmission (so that the receiver is not aware that a bit was sent to him), or inject new transmissions (so that the receiver receives a bit while the sender didn't send anything). Insertion and deletion noise is more general, and is considered to be more difficult, than bit-flips. Indeed, a bit flip can be simulated by a deletion followed by an insertion. 

In asynchronous communication networks, insertions and deletions have a destructive effect: a party might wait indefinitely for a message that was deleted while assuming it hasn't arrived yet due to the asynchronous nature of the channel. Very few coding schemes were designed for asynchronous networks (see \cite{JKL15,CGH19} and related work below). In this work we focus on synchronous networks only.

Coping with insertions and deletions in \emph{synchronous} networks is highly non-trivial, even if these are less harmful than in the asynchronous case. 
Note that insertions and deletions are trivially correctable in the fully utilized communication model; indeed, each party expects to hear a message from each of its neighbors in each round, so a deletion reduces to an erasure. 
By contrast, in the non-fully-utilized setting, insertions and deletions seem to be hard to correct. In fact, as first noted by Hoza~\cite{HozaPC15}, it seems \emph{crucial} to allow insertion and deletion errors to make the problem non-trivial. To see this, consider a party that speaks once every two rounds---on an even round to communicate the bit `0' and on odd round to denote~`1'. This communication is completely resilient to noise that only flips bits since only the timing of the transmission matters. Yet, insertions and deletions corrupt this timing-encoding and call for more sophisticated coding solutions.

\subsection{Our Contributions}
In this work 
we give an efficient interactive coding scheme with constant rate for arbitrary synchronous networks (not necessarily fully-utilized) that suffer from a certain fraction of insertion, deletion and substitution noise.
We design coding schemes for adversarial errors in the same communication model as~\cite{JKL15}; namely, over a synchronous non-fully-utilized network, where parties may or not speak in a given round, and the underlying protocol is only assumed to have a fixed speaking order. One of the most interesting aspects of our work is that our coding schemes are \emph{computationally efficient}, unlike either~\cite{JKL15} or~\cite{HS16}\footnote{These coding schemes utilized a combinatorial object known a \emph{tree code}, for which no efficient construction is known.}. 
Further, our coding scheme works on arbitrary topologies, in contrast to~\cite{JKL15}, which only applies to the star graphs. As mentioned above,~\cite{HS16} also handles arbitrary graphs, albeit in the fully utilized model.

In this paper (the first part) we assume that the parties pre-share a common random string (CRS), and assume the adversary is \emph{oblivious} (i.e. their corruptions are fixed in advance, independent of the CRS, inputs, and communication).

\begin{theorem}[Coding for oblivious noise assuming  shared randomness, informal]\label{thm:main}
Let $G=(V,E)$ be an arbitrary synchronous network with $n=|V|$ nodes and $m=|E|$ links, and assume
any two neighbours share a random string.
For any noiseless protocol $\Pi$ over~$G$ with a predetermined order of speaking, 
and for any sufficiently small constant $\eps$,
there exists an efficient coding scheme that simulates~$\Pi$
over a noisy network~$G$.  The simulated protocol is robust to adversarial insertion, deletion, and substitution noise, assuming at most  $\eps / m$-fraction of the communication is corrupted. The simulated protocol communicates $O(\CC(\Pi))$ bits, and succeeds with probability at least $1-\exp(-\CC(\Pi) /m)$, assuming the noise is oblivious.
\end{theorem}
We remark that, with $1/m$ noise rate,
the adversary can completely corrupt a single link. %
Therefore, it is natural to allow the adversary to alter at most $\eps / m$ communication, as we do above.

\gnote{2021: Added this from part 2.}
While our scheme features a constant blowup in their communication rate, the blowup in the round complexity may be up to $O(m)$; furthermore, the round complexity will \emph{not} depend on the noise, and is fixed at the beginning of the scheme. We also note that if there is no bound on the blowup in round complexity, trivial coding schemes exist, e.g., by encoding the input $x\in\{0,1\}^*$ as a single message sent at round~$x$.

In the second part of this work~\cite{GKR-2}, we show how to remove the shared randomness assumption. Further, we obtain schemes that work even when the adversarial noise is non-oblivious, albeit, 
with slightly smaller noise resilience of~$\eps/ m\log m $.

\subsection{Organization}
In the next subsections we overview the key ideas required for performing interactive coding 
in the multiparty setting with constant overhead, and discuss related work. 
Section~\ref{sec:preliminaries} fixes some notations and defines the model.
Our efficient coding scheme against an oblivious adversary assuming shared randomness
is formally described in Section~\ref{sec:protocol-obliv-crs} and analyzed in Section~\ref{sec:analysis-oblivious-crs}. Some parts of the coding scheme are based on the meeting point mechanism of~\cite{haeupler14} and are given in Appendix~\ref{app:meetingpoints} for completeness.

In the second part of this work~\cite{GKR-2},
we remove the assumption that the parties pre-share a random string, while keeping all the other properties of the coding scheme (Theorem~\ref{thm:main}). 
In addition, we eliminate the restriction to oblivious adversaries at the cost of a somewhat reduced resilience. 
Namely, building on top of the scheme presented in Section~\ref{sec:protocol-obliv-crs}, we develop in~\cite{GKR-2} an efficient coding scheme with a constant rate that is resilient against any \emph{non-oblivious} adversary that is limited to corrupting a fraction $\eps/m\log m$ of insertions and deletions.
We also develop in~\cite{GKR-2} a coding scheme that is resilient to a somewhat higher noise level of $\eps / m\log\log m$-fraction of insertion and deletion noise, while still incurring a constant blowup in the communication, by assuming again that the parties pre-share randomness.

\subsection{Coding Scheme: Key Ideas}
\label{sec:techniques}

\label{sec:high-level-motivation}
In this section we motivate the elements of our coding scheme at a high level. The basic idea towards constructing a multiparty coding scheme
is to have each pair of parties perform a two-party coding scheme~\cite{RS94,JKL15,HS16}.
However, merely correcting errors in a pairwise manner is insufficient, since if a pair of parties found an inconsistency and backtracked, this may cause new inconsistencies (in particular, between these parties and their other neighbors).
In \cite{JKL15}, this problem was solved by assuming there is one party who is connected to all parties (i.e., the star topology).
This party has a global view on the progress of the simulation
in the \emph{entire network}, and hence can function as the ``conductor of the orchestra.''

In our setting, 
no such central party exists and consequently no party knows the state of the simulation
in the entire network. Instead, each party only sees its local neighborhood, and needs to propagate its ``status'' to the entire network in a completely decentralized manner.

We mention that Hoza and Schulman~\cite{HS16} also considered an arbitrary topology, but they consider the fully-utilized model. Correcting errors efficiently in the non-fully-utilized model seems to be trickier; we elaborate on this towards the end of this section.

In order to keep our simulation \emph{efficient}, as opposed to previous works in the multiparty setting which used the (inefficient) tree-code approach, we use the rewind-if-error
approach~\cite{schulman92,BK12,KR13,haeupler14,GHKRW18} (see also~\cite{gelles17}).  Namely, each two neighboring parties send a hash of their simulated pairwise transcripts, and if the hashes do not match, then an error is detected, and the two parties initiate a ``meeting-points'' mechanism~\cite{schulman92,haeupler14}
in order to find a previous point where their transcripts agree. 

Since we want a constant-rate coding scheme with error rate up to $\Omega(1/m)$, our goal is to ensure that any corruption only causes the network to waste $O(m)$ communication. 
Indeed, if the parties need $K$ communication to correct a single error, then constant-rate coding implies a maximal noise level of~$O(1/K)$. If the noise level is higher, the communication required to correct the errors is already too high to preserve a constant-rate coding.
As a consequence of the above, in our coding schemes the parties will have to detect errors by communicating $O(1)$ bits per check, so that the consistency-checking costs $O(m)$ communication overall. In hindsight, this will restrict the parties to use hash functions with \emph{constant-sized outputs}.

\medskip
Let us now describe a preliminary attempt at a coding scheme. The coding scheme will have many \emph{iterations}, where each iteration will consist of the following. First, every pair of parties will send each other hashes of their simulated pairwise transcripts. If these match, they continue simulating with each other for a small number of rounds; else, they use the meeting points mechanism to find a common ``meeting'' point where both their partial transcripts agree and rewind their partial transcript back to that point. Until they have done so, they refuse to simulate with any other party. 
The parties continue repeating these two steps (1. Send Hashes / Do Meeting Points and 2. Simulate / Stay Silent) until they finish simulating the underlying protocol. Each step will take a fixed number of rounds, so, since we are in the synchronous model, the parties will always be able to tell which step currently is being carried out.

This naive coding scheme has a clear but important flaw. 
Once a party $u$ decides to rewind due to an error on the link $(u,v)$,
this has an effect on the simulation of~$u$ with its other neighbors.
In order to ensure correctness of the overall simulation, 
$u$ must rewind the simulation with each and every one of its neighbors
that may be affected.

It is tempting to rewind $u$'s neighborhood by having $u$ immediately truncate its simulated transcript for each adjacent link $(u,w)$, to the same place it rewound the link~$(u,v)$.
After such a truncation, a discrepancy would appear on the link~$(u,w)$ since $w$ did not change its simulated transaction, and the meeting points mechanism would resolve any such resulting discrepancy in the next iteration. 
In particular, an error on the link $(u,v)$ would cause $u$ and $v$ to perform meeting points and rewind their transcript to a consistent point. Then, in the next iteration, $u$ will be inconsistent with its other neighbors and will  initiate meeting points with all its other neighbors, resulting in them rewinding their transcripts accordingly.
In the following iteration, any neighbor $w$ of $u$ would find an inconsistency with any other party in its \emph{own} neighborhood, initiate meeting points with them, and so on until the entire network has rewound.

While the entire network eventually rewinds in this scheme, the number of iterations this takes is proportional to the diameter of the network. Recall that to ensure constant rate, we need to correct any error with at most $O(m)$ communication (with high probability), which in our case means just $O(1)$ iterations. So this naive approach could \emph{not} achieve constant rate. 
Even worse, if magically we could augment the consistency check step so that news propagates through the entire network in just a single iteration, this approach would still fail with high probability.
The consistency checking involves exchanging hashes of transcripts and verifying whether they match. \emph{Hash collisions} are a major problem here; a hash collision will cause parties $u$ and $v$ to believe that their transcripts are consistent, when really they are not. 
Since we are restricted to use constant-sized hashes to keep the rate constant, 
hash collisions occur with constant probability in each exchange. Hence, in the course of the (at least) $n-1$ hash exchanges required to inform all $n$ parties about the error, hash collisions occur with overwhelming probability. If so, with high probability many parties will not be informed of the error and will not rewind their transcript.

What saves us here is that $u$ does not need to initiate a meeting points protocol to tell $w$ to rewind---$u$ can tell $w$ to rewind \emph{directly}. The meeting points protocol is designed to find a point where the transcripts of $u$ and $w$ are consistent, when neither $u$ nor $w$ has any idea where that point may be. But in the above case, $u$ is confident that $w$ should rewind, and should simply inform~$w$ it should rewind.
To summarize the above ideas, an iteration of our coding scheme now consists of the following steps: 1. Consistency Check, 2. Simulate / Stay Silent, and 3. Rewind request (if necessary).
As before, each of these phases takes a fixed number of rounds, so the parties are always in sync.

\smallskip
Some care 
needs to be taken with the Rewind phase. Specifically, $u$ may not be able to instantly communicate to its neighbor $w$ how far it wants $w$ to rewind. 
For instance, $u$ may have rewound 
a very large number of rounds, and even telling its neighbor how many rounds to rewind would require a  large amount of communication; 
this cannot be communicated in the fixed number of rounds that the rewind phase contains. 
Instead, $u$ should simply tell $w$ to rewind a small amount at a time,
spanning the rewind process over several consecutive iterations if needed.
To summarize, the rewind phase allows the parties to gradually synchronize with each other without the use of 
unreliable hash functions, 
allowing the parties to return to simulation while circumventing the issues 
that hash collisions bring. This is one of the key conceptual observations behind our work.

\smallskip
There is one final issue that is specific to the non-fully-utilized model. 
Recall that in the non-fully-utilized model, communication may be sparse. For concreteness, consider the line network (i.e., a path graph, where party $i\in\{1,2,\ldots,n -1\}$ is connected to party~$i+1$).
Consider the example protocol, where party 1 communicates back-and-forth with party 2 for $n$ rounds while other parties remain silent, then party $2$ communicates with party $3$ for $n$ rounds while other parties remain silent, and so on, until the communication reaches parties $n-1$ and $n$, and bounces back towards party 1. The communication that occurs in any set of $n$ rounds of this underlying protocol is $n$, since only a single party is speaking at any given round. An iteration of our coding scheme would give the parties $n$ rounds to simulate, after which they perform a single consistency check.

However, when \emph{simulating} the underlying protocol in our coding scheme, the parties may not all be in agreement about which round they are simulating. For example, parties 1, 2, and 3 could think they are simulating round 1 (where parties $1$ and $2$ talk a lot), while party 4, 5, and 6 think they are simulating round $3n+1$ (where parties 4 and 5 talk a lot), and generally parties $3i+1, 3i+2,$ and $3i+3$ think they are simulating round $i \cdot n + 1$. In this case, there will be $\Omega(n^2)$ communication in just one iteration of our coding scheme (in $O(n)$ rounds). This is potentially disastrous. Not only might this lead to the communication blowup of the coding scheme being super-constant, but also the adversary would have the budget to place $\Omega(\eps n)$ additional errors, and may be able to use these errors to derail any attempts of getting the simulation back on track.

To avoid this issue altogether, we introduce a ``flag-passing'' phase in which each party informs the entire \emph{network}
whether all seems correct and the simulation should continue, or if it sees an inconsistency and the
network should idle while this is fixed. This fixes the issue above, since party 3 will realize that there is 
an inconsistency between its transcripts with parties 2 and 4, and will notify the network to avoid simulation.

Putting it all together, our resilient protocol consists of repeatedly executing the following four steps in order:
(i) consistency check,
(ii) flag passing,
(iii) simulation,
and (iv) rewind.
The coding scheme cycles through these four phases in a fixed manner, where each step takes a fixed amount of rounds to avoid ambiguity.

The formal specification of each phase, along with the formal coding scheme assuming oblivious noise and a common shared randomness is depicted as Algorithms \ref{alg:robust-protocol}--\ref{alg:flag-passing} below (see Section~\ref{sec:protocol-obliv-crs} for full details and notations). The meeting points procedure is given in Algorithm \ref{alg:meeting-points}.

\subsection{Related work}
As mentioned above, interactive coding was initiated by Schulman~\cite{schulman92,schulman96}.
Over the last several years there has been tremendous amount of work on interactive coding schemes in the two-party setting (e.g., \cite{GMS14,BR14,BE17,BKN14,GH14,GHKRW18}),  and in the multi-party setting (detailed below). We refer the reader to~\cite{gelles17} for a survey on the field (and to references therein).

In what follows we only mention the schemes that are closely related to our setting, namely, ones that are either in the multiparty setting, or ones that are in the two-party setting but are  resilient to insertions and deletions.

Coding schemes for insertions and deletions in the two-party setting were first constructed by
Braverman, Gelles, Mao, and Ostrovsky~\cite{BGMO17}.  As was noted above, in the model where in each round each party sends a single bit (which is the model used by most previous works, including~\cite{BGMO17}), insertions and deletions are only meaningful in the {\em asynchronous} model, as otherwise, such an error model is equivalent to the erasure model. Indeed,~\cite{BGMO17} considered the asynchronous model.  We note that in the asynchronous model, a single deletion can cause
a ``deadlock'', where both parties wait for the next incoming message. Therefore, Braverman et al.\@ considered a model where any deletion is followed by an insertion, thus the protocol never ``halts'' due to noise. They note that the noise may delete a certain message and then inject a spoofed ``reply'' to the original sender. In this case, one party believes that the protocol has progressed by one step, while the other party is completely oblivious to this. This type of noise was called a \emph{synchronization attack} as it brings the parties out of synch.

Braverman et al.~\cite{BGMO17} constructed a coding scheme for insertions and deletions in this model with constant communication rate, and resilience to constant fraction of noise.
Later, Sherstov and Wu~\cite{SW19} showed that a very similar scheme can actually resist an \emph{optimal} noise level.
Both these schemes are computationally inefficient.
Haeupler, Shahrasbi, and Vitercik~\cite{HSV18} constructed an efficient scheme that is resilient to (a small) constant fraction of insertions and deletions. Furthermore, they constructed a scheme where the communication rate approaches~1 as the noise level approaches~0. Efremenko, Haramaty, and Kalai~\cite{EHK20} considered the {\em synchronous} setting, where parties can send messages of arbitrary length in each round, and the adversary may insert and delete bits in the content of each message. They construct an efficient coding scheme with constant communication rate, and constant blowup in the round complexity, that is resilient to a small constant fraction of noise. All these works (\cite{BGMO17, SW19, HSV18, EHK20}) were in the two-party setting.

In the multiparty setting, Rajagopalan and Schulman~\cite{RS94} constructed a coding scheme for stochastic noise with rate $1/O(\log (d+1))$ for networks with maximal degree~$d$.
This implies a constant rate coding scheme for graphs with constant degree.
Alon et al.~\cite{ABEGH19} showed that if the topology is a clique, or a dense $d$-regular graph, then constant rate coding is also possible. Yet, Braverman, Efremenko, Gelles, and Haeupler~\cite{BEGH18} proved that a constant rate is impossible if the topology is a star. All the above works assume a synchronous fully-utilized network.
Gelles and Kalai~\cite{GK19} showed that constant rate coding schemes are impossible also on graphs with constant degree, such as a cycle, assuming a synchronous, yet not fully-utilized model.

The case of adversarial noise in the multiparty setting was first considered by Jain, Kalai, and Lewko~\cite{JKL15}, who constructed a constant-rate coding scheme over a synchronous star network that is resilient to $O(1/n)$ fraction of noise. They did not assume that the network is fully utilized, and only assumed that the underlying noiseless protocol has a fixed speaking order. 
Hoza and Schulman~\cite{HS16} considered the fully utilized model with arbitrary topology and constructed a constant rate coding scheme that is resilient to $O(1/m)$ noise. Via routing and scheduling techniques, they show how to resist a fraction of $O(1/n)$-noise, while reducing the rate to $O(n/m\log n)$.  Both these schemes use tree-codes, and therefore are computationally inefficient. Further, \cite{HS16} considers coding with different modeling assumptions such as assuming directional links or the case where the fraction of noise per link is limited.

Aggarwal, Dani, Hayes, and Saia~\cite{ADHS17} constructed an efficient synchronous coding scheme, assuming the parties use private point-to-point channels (i.e., with an oblivious adversary).
They consider a variant of the fully-utilized model, where parties are allowed to remain silent, but the receiver will hear a constant bit set by the adversary for free (or further corrupted by the adversary as any other transmission). In their model the length of the protocol is not predetermined and may vary with the noise (similar to the two-party adaptive notion of~\cite{AGS16}). Their coding scheme is resilient to an arbitrary (and \emph{a priori} unknown) amount of bit-flips (as long as the noise-pattern is predetermined and independent of parties shared randomness), and has a rate of $O(1/\log (n \CC(\Pi)))$.

Censor-Hillel, Gelles, and Haeupler~\cite{CGH19} constructed \emph{asynchronous} coding schemes, where the parties do not know the topology of the network (an assumption that is very common in the distributed computation community). Their scheme is resilient to $O(1/n)$ noise and has a rate of $O(1/n\log^2 n)$, over an arbitrary topology.

\section{Preliminaries}
\label{sec:preliminaries}

\paragraph{Notations and basic properties}
For $n\in \mathbb{N}$ we denote by $[n]$ the set $\{1,2,\dotsc,n\}$. The $\log(\cdot)$ function is taken to base~2.
For a distribution~$D$ we use $x \sim D$ to denote that $x$ is sampled according to the distribution~$D$.
For a finite set~$\Omega$ we let $\Unif_{\Omega}$ be the uniform distribution over~$\Omega$; we commonly omit~$\Omega$ and write $x\sim\Unif$ when the domain is clear from context.

\paragraph{Multiparty interactive communication model}
We assume an undirected network $G=(V,E)$ of $n=|V|$ parties, $p_1,\ldots, p_n$, and
$|E| = m$ edges, where $p_i$ is connected to $p_j$ if and only if $(p_i,p_j)\in E$.
We identify parties with nodes, and treat $p_i$ and node~$i$ as one.
For $v \in V$, let $N(v)$ denote the neighborhood of $v$ in $G$, i.e. $N(v) = \{u: (u,v) \in E \}$.
The network $G$ is assumed to be a connected simple graph (i.e., without self-loops or multi-edges).

The communication model works in synchronous rounds as follows.
At each round, any subset of parties may decide to speak. Each link is allowed to transmit at most one symbol per round in each direction from a constant-sized alphabet~$\Sigma$. We will assume throughout this paper that $\Sigma=\{0,1\}$ (both in the noiseless and noisy settings), however, our results extend to a larger alphabet as well.
At each round, a party is allowed to send multiple (possibly different) symbols over multiple links.
A \emph{transmission} over a certain link at a certain round is the event of a party sending a message on this link at that round (if both parties send messages these are two separate transmissions).

We emphasize that, contrary to most previous work, our communication model is not fully-utilized and does not demand all parties to speak at each round on every communication channel connected to them; in fact we don't demand a certain party to speak at all at any given round.

\paragraph{Multiparty protocol}
Each party is given an input $x_i$, and its desire is to output $f_i(x_1,\ldots,x_n)$ for some predefined function~$f_i$ at the end of the process.
An interactive  \emph{protocol} $\Pi$ dictates to each party what is the next symbol to send over which channel (if any), as a function of the party's input, the round number, and all the communication that the party has observed so far. After a fixed and predetermined number of rounds, the protocol terminates and each party outputs a value as a function of its input and observed transcript.
The \emph{length of the protocol}, also called its \emph{round complexity} $\RC(\Pi)$
is the maximal number of rounds $\Pi$ takes to complete on any possible input. 
The communication complexity of the protocol (in bits), denoted by $\CC(\Pi)$, is the total number of transmissions in the protocol times $\log |\Sigma|$. Since we assume $\Sigma=\{0,1\}$, the communication complexity equals the number of transmissions.

\paragraph{Noise model}
We concern ourselves with simulating a multiparty interactive protocol~$\Pi$ among $n$ parties over a \emph{noisy} network. 
To do so, we run a different protocol, $\widetilde \Pi$, whose task is to compute the transcript of~$\Pi$ in the noisy network.

A single communication event over a noisy channel with alphabet $\Sigma$ is defined using a symbol from $\Sigma \cup \{ * \} $ that is sent on the channel, where $*$ is a special symbol that means ``no message''.

In this part of our work, we concern ourselves on a restricted type of noise, namely, \emph{oblivious noise} or \emph{oblivious adversary}.\footnote{In contrast, part II of our work~\cite{GKR-2} deals with  general, non-oblivious noise.}
An oblivious adversary is an adversary that pre-determines its noise attack, independently of the inputs and randomness of the parties. Namely, recall that the round complexity of our protocols is fixed. An oblivious adversary is one that pre-determines, at the onset of the specific instance of the protocol, which noise it will put on each link and each round. %

We consider two types of 
oblivious adversaries: \emph{fixing} adversaries and \emph{additive} adversaries. 
Assuming a binary alphabet, an oblivious fixing adversary fixes a noise pattern 
$e={\{0,1,*,\perp\}}^{2|E| \cdot \RC(\widetilde\Pi)}$.
In particular, the entry $e_{i,(u,v)} \in \{0, 1, *, \perp\}$ determines whether noise occurs on the link $u\to v$ in round $i$
and, if so, the new message on the link $u\to v$ in round $i$. 
We interpret $\perp$ as leaving the communication
as-is, $*$ as fixing no message, and 0 and 1 as fixing 0 and 1 on the link, respectively.
Note that $\RC(\widetilde\Pi)$ might depend on $n,m$ and $\Pi$ but it is assumed to be independent of the specific inputs and the specific noise, i.e., any execution takes the same amount of rounds.
The number of errors is the number of non-$\perp$ entries of $e$.\footnote{Note, this might be somewhat unintuitive since the adversary is charged an error even if its interference did not eventually change the output symbol of the channel.}

We also consider additive adversaries~\cite{BBT60} (see, e.g., \cite{CDFPW08,GS10,GIPST14} for applications). 
An oblivious additive adversary fixes a noise pattern $e=\{0,1,2\}^{2|E| \cdot \RC(\widetilde\Pi)}$ 
that defines the noise per each link in each round of the protocol. 
In particular, the entry $e_{i,(u,v)}\in \{0,1,2\}$ determines the noise added to the link $u\to v$ in round $i$. 
Assuming $u$ transmits to~$v$ in iteration~$i$ the message  $t\in\{0,1,2\}$ (where $2$ denotes the case of no message, i.e., $*$), then $v$ receives the transmission~$t+e_{i,(u,v)}\mod 3$.
The number of errors is the number of non-zero entries in~$e$.

While our results in this work (i.e., the first part) 
are stated for fixing and additive adversaries, 
they in fact apply to any oblivious adversary with a sufficiently limited number of errors.
That is, our results apply to \emph{any} (oblivious) adversary that commits to error functions $\ch_{i,(u,v)}: \Sigma \cup \{ * \} \to \Sigma \cup \{ * \}$ for each link $(u,v)\in E$ and each round~$i\in \RC(\widetilde\Pi)$, such that the adversary is charged an error whenever the function is not the identity.\footnote{We thank the anonymous referee for suggesting this general description.} 

In the presence of noise, an instance (i.e., a specific run) of some protocol~$\widetilde\Pi$ might behave differently than an instance when the channels commit no errors. 
While the round complexity of all our schemes are fixed as mentioned above,
the communication in a noisy instance of a protocol may vary
as a function of the adversary's noise: the noise can make parties think it is their turn to talk when it is not, and vice versa. Hence, in the noisy setting we define $\CC(\widetilde\Pi)$ to be the communication complexity of a \emph{given instance} of the protocol~$\widetilde\Pi$, and this quantity is determined by the inputs, the randomness, and the specific noise pattern observed in this instance. The communication complexity of a noisy instance instance of~$\pi$ is defined to be the number of bits that are \emph{sent} by parties in~$\pi$ (namely, the communication is counted even if the message is deleted, and not counted for an inserted message).

Moreover, we emphasize that, opposed to the fully utilized model where the number of rounds fixes the communication complexity, in our model these two are only related by the trivial bound $\CC(\widetilde\Pi) \le 2|E| \log |\Sigma|\cdot \RC(\widetilde\Pi)$.
Furthermore, for our protocols this bound can be very loose.

The \emph{fraction of noise} observed in a given instance,
is the fraction of corrupt transmissions out of all the transmissions in that instance.
For the binary case the noise fraction can be written as,
\[
\mu = \frac{\#\text{errors}}{\CC(\widetilde\Pi)}.
\]
This is also known as \emph{relative} noise. Perhaps unintuitively, there are trivial examples where the relative noise can be greater than 1; for example, a completely silent protocol in which the adversary inserts a message. However, we will only concern ourselves with relative noise much less than 1 in this paper. See also~\cite{AGS16} for a similar notion of relative noise in the two-party setting.

\begin{remark}\label{rem:additive}
We prove our results for both oblivious fixing and additive adversaries. As far as oblivious adversaries go, 
the fixing adversary seems more natural to us, and so for the purposes of this paper (Part~I) in isolation, we find the result 
most interesting with the fixing adversary. However, our end goal is to get a result for \emph{non-oblivious} adversaries, 
which we do in Part~II~\cite{GKR-2}. To make our proof in~\cite{GKR-2} simpler, we need to prove resilience against the oblivious 
\emph{additive} adversary in this paper.

We emphasize that 
for both adversaries we have described, the number of errors they are charged is fixed
the moment they fix their error pattern. Specifically, the number of errors the adversary is \emph{deterministic}
once the adversary chooses a pattern---it does not depend on any randomness the parties use.
\end{remark}

\paragraph{Coding scheme---a noise-resilient protocol} 

A coding scheme is a scheme that converts any protocol~$\Pi$ into a noise-resilient protocol~$\widetilde\Pi$ that simulates~$\Pi$ correctly over the noisy network with high probability on any given input.
We say that a protocol $\widetilde \Pi$ simulates~$\Pi$ correctly on a given input if each party can obtain its output corresponding to~$\Pi$ from the transcript it sees when executing~$\widetilde \Pi$.
Informally, the protocol $\widetilde \Pi$ is said to be resilient to $\mu$-fraction of noise (with probability~$p$),
if it simulates $\Pi$ correctly (with probability at least~$p$)
when executed over a noisy network with adversarial noise that commits up to $\mu \CC(\widetilde \Pi)$ errors.

To be more precise, recall that in our model the number of transmissions depends on the adversarial errors and the randomness that the parties use. So, $\CC(\widetilde \Pi)$ is not known a priori to the adversary or to the parties. Hence, we somewhat relax this condition; we say that $\widetilde \Pi$ is resilient to $\mu$ fraction of noise if, for \emph{any} oblivious adversary and any inputs for the parties, the probability of the event that the adversary commits at most $\mu \CC(\widetilde \Pi)$ errors and $\Pi$ is not simulated correctly, is at most~$1-p$.

We will assume that the noiseless protocol to be simulated, $\Pi$, has the property that the speaking order is independent of the inputs that the parties receive. Note that this assumption restricts generality.
However, 
we emphasize that this requirement does not apply on the coding scheme that simulates~$\Pi$ over the noisy network.\footnote{This decision resembles the model of~\cite{JKL15}.}
We will usually use the notation $|\Pi|$ to denote the length of the (noiseless) protocol \emph{in chunks} rather than in rounds; see Section~\ref{sec:chunks} for details on partitioning protocols into chunks.

\paragraph{Hash functions}

We use an inner-product based hash function. The hash function is seeded with a random string $s$ such that each bit of the output is an inner product between the input $x$ and a certain part of $s$ (using independent parts of $s$ for different output bits). Formally,

\begin{definition}[Inner Product Hash Function]
\label{def:ip-hash}
The inner product hash function 
$h: \{0,1\}^* \times \{0,1\}^* \to \{0,1\}^\tau$ 
is defined for any input $x$ of length $|x|=L$ and seed~$s$ of length $|s|=\tau L$ (for $\tau \in \mathbb{N}$), as the concatenation
of $\tau$ inner products between~$x$ and disjoint parts of~$s$, namely,
\[
 h(x,s) = \langle x, s[1,L] \rangle \circ \cdots \circ  \langle x, s[(\tau-1)L+1,\tau L] \rangle.
\]
We use the shorthand $h_s(x) \eqdef h(x,s)$.
\end{definition}
We sometimes abuse notation and hash a ternary-string $x$ (or a string over a larger alphabet). In this case, assume we first convert $x$ into a binary string in the natural manner (each symbol separately, using $\lceil \log_2 3\rceil =2$ bits) and then hash the binary string. The seed length should increase appropriately (by at most a constant).

The following is a trivial property of this hash function, stating that, given a uniform seed, the output is also uniformly distributed.
\begin{lemma}\label{lem:unifHash}
For any $\tau, L \in \mathbb{N}$, $x \in \{0,1\}^{L} \setminus \{0\}$, and any $r\in\{0,1\}^\tau$
\[
\Pr_{s\sim \Unif} [ h_s(x) = r] = 2^{-\tau}.
\]
where $\Unif$ is the uniform distribution over $\{0,1\}^{\tau \cdot L}$.
\end{lemma}
It is easy to see Lemma~\ref{lem:unifHash} also implies that the collision probability of the inner product hash function with output length $\tau$ is exactly $2^{-\tau}$, since given two strings $x$ and $y$ (of the same length)
such that $x \neq y$, the Lemma implies that the probability that $\Pr_{s\sim \Unif} [ h_s(x-y) = \vec{0}] = 2^{-\tau}$.

\section{Coding scheme for oblivious adversarial channels}
\label{sec:protocol-obliv-crs}
\subsection{Overview}
\label{sec:protocol-overview}
The high-level description of the simulation is as follows.
The basic mechanism is the rewind-if-error approach from previous works \cite{schulman92,BK12,haeupler14} (see also~\cite{gelles17}). 
In particular, the parties execute the noiseless protocol~$\Pi$ for some rounds and then exchange some information to verify if there were any errors. If everything seems consistent, the simulation proceeds to the next part; otherwise, the parties rewind to a previous (hopefully consistent) point in $\Pi$ and proceed from there.

Note that since multiple parties are involved, it may be that some parties believe the simulation so far is correct while others believe it is not.
Yet, even if one party notices an inconsistency, the entire network may need to rewind. Hence, we need a mechanism that allows propagating the local view of each party to the entire network.

Our simulation algorithm consists of repeatedly executing the following four phases:
(i) consistency check,
(ii) flag passing,
(iii) simulation,
and (iv) rewind.
The simulation protocol cycles through these four phases in a fixed manner, and each such cycle is referred to as an  {\em iteration}.  Each phase consists of a fixed number of rounds (independent of the parties' inputs and the content of the messages exchanged).  Therefore, there is never an ambiguity as to which phase (and which iteration) is being executed. 
We next describe each phase (not in the order they are performed in the protocol).

\begin{itemize}
\item [(i)] \textbf{Simulation:}
In this phase the parties simulate \emph{a single chunk} of the protocol $\Pi$. Specifically,
we split $\Pi$ into chunks---consecutive sets of rounds---where at each chunk $5\K$ bits are being communicated, for some $\K \geq m$ that is fixed throughout the simulation 
and such that $\K$ is divisible by $m$.  Jumping ahead, we note that $\K$ is set to be $m$ in the protocol we construct in this paper, which only considers to oblivious adversaries, and is set to $m\log m$ in our protocol in~\cite{GKR-2} which considers arbitrary (non-oblivious) adversaries.
Note that since the speaking order in~$\Pi$ is fixed and predetermined, the partition into chunks is independent of the inputs and can be done in advance. We assume without loss of generality that each party speaks at least once in each chunk (this is without loss of generality since one can preprocess $\Pi$ to achieve this property while increasing the communication complexity by only a constant factor).

In this phase, the parties ``execute'' the next chunk of $\Pi$, sending and receiving messages as dictated by the protocol~$\Pi$.

This phase \emph{always} takes $5\K$ rounds, which is the maximal number of rounds required to simulate $5\K$ transmissions of~$\Pi$. It may be that the simulation of a specific chunk takes fewer rounds; in this case, the phase still takes $5\K$ rounds where all the parties remain silent after the chunk's simulation has completed until $5\K$ rounds have passed.

We note that some parties may be aware that the simulation so far contains errors that were not corrected yet (jumping ahead, this information can be obtained via local consistency checks that failed or via the global flag-passing phase, described below).
When we reach the simulation phase, these parties will send a dummy message~$\bot$ to their neighbors and remain silent for $5\K$ rounds until the simulation phase completes.

\item [(ii)] \textbf{Consistency check:}
The main purpose of this phase is to check whether each two \emph{neighboring parties} $u,v\in V$
have consistent transcripts and can continue to simulate, or whether instead they need to correct prior errors.
This phase is based on the \emph{meeting points} mechanism~\cite{schulman92,haeupler14},
which allows the parties to efficiently find the highest chunk number up to which they both agree.

Roughly speaking, every time the parties enter this phase, they exchange a hash of their current transcripts with each other. If the hashes agree, the parties believe that everything is consistent and effectively continue with simulating $\Pi$.
If the hashes do not agree, the parties try to figure out the longest point in their transcript where they do agree.
To this end, they send hashes of prefixes of their transcript until the hashes agree. In our setting, each time the parties
enter the ``consistency check'' phase they perform a \emph{single} iteration of the meeting-points mechanism~\cite{haeupler14}, which consists of sending two hash values. If the hashes mismatch, they will send the next two hash values (of some prefixes of the transcript, as instructed by the meeting-points mechanism) \emph{next time they enter the consistency check phase}.\footnote{In addition to exchanging two hash values corresponding to prefixes of the transcript, the parties also exchange a hash indicating how long they have been running the meeting-points mechanism; see Section~\ref{sec:meeting-points} for a full description.}

Note that the above is performed between
each pair of adjacent parties, in parallel over the entire network.

\item [(iii)] \textbf{Flag passing:}
In the flag passing phase, the parties attempt to synchronize whether or not they should continue the simulation of $\Pi$ in the next simulation phase. As mentioned, it may be that some parties believe that the simulation so far is flawless while others may notice that there are some inconsistencies. In this phase the information about (2-party) inconsistencies is propagated to all the parties.

Roughly, if any party believes it shouldn't continue with the simulation, it notifies all its neighbors, which propagate the message to the rest of the network, and no party will simulate in the upcoming simulation phase.
However, if all parties believe everything is consistent then no such message will be sent, and all the parties will continue simulating the next chunk of~$\Pi$.

Technically speaking, the parties accomplish this synchronization step by passing a ``flag'' (i.e., a stop/continue bit) along a spanning tree $\mathcal{T}$ of~$G$. Namely, each party receives flags from each of its children in~$\mathcal{T}$. If one of the flags is stop, or if the party sees inconsistency with one of its neighbors, it sends a stop flag to its parent in the tree. Otherwise, it sends its parent the continue flag.
After the root of $\mathcal{T}$ receives all the flags, the root propagates the computed flag in the opposite direction back to the leafs. If there is no channel noise in this phase, it is clear that all parties are synchronized regarding whether the simulation should continue or not (recall that a party sends a dummy message during the simulation phase if its flag is set to stop).

\item [(iv)] \textbf{Rewind:}
In the rewind phase, each party tries to correct any obvious (i.e., length-wise) inconsistencies with their neighbors.
Recall that the meeting-points mechanism allows two neighboring parties to truncate their mis-matching transcripts to a prefix  on which both parties agree. However, this may cause inconsistencies with all their other neighbors.
Indeed, if $u$ and $v$ rewind several chunks off their transcript with each-other, then $u$ must inform any other party $z\in N(u)$ to rewind the same amount of chunks. This rewinding happens \emph{even if the transcripts on the link $(u,z)$ are consistent} at both ends.

Technically, if  the transcript of $u$ and $v$ consists of $k$ chunks, then
$u$ will send a ``rewind'' message to any neighbor $z$ for which the transcript of $u$ and $z$ contains more than $k$ chunks.
However, there are a few caveats. First, any party $z$ that is currently trying to find agreement with $u$ via the meeting-points mechanism should not rewind the transcript with~$u$. Intuitively, we can see that any such rewind seems unnecessary, since $z$ is already going to truncate its transcript when it eventually finds agreement with $u$ in the meeting-points subroutine.
Furthermore, an underlying assumption of the meeting-points protocol is that, until the parties decide to truncate their transcripts in the protocol, their transcripts do not change.
This is crucial since the MP mechanism may span over many rounds, and any abort (caused by a rewind) may make all this progress void.

Additionally, we restrict each party to rewinding at most one chunk in each of its pairwise transcripts. This is primarily for ease of analysis: it means that no matter what kind of errors the adversary induces, there is only so much harm that can be done during the rewind phase. The upshot of this is that it is not necessarily true that after the rewind phase, $u$ sees exactly the same amount of simulated chunks with all $z\in N(u)$.

Once a party $u$ sends a rewind message to a neighbor~$z$, party~$z$ will truncate one chunk of the transcript that corresponds to the link~$(u,z)$, and might then want to send rewind messages to its own neighbors.
These rewinds could trigger more rewinds, leading to a wave of rewinds going through the network.
By providing $n$ rounds in the rewind phase, we make sure that this wave has enough time to go through the entire network.\footnote{Alternatively, we could have fixed the rewind phase to consist of $D$ rounds (rather than~$n$ rounds), where $D$ is the the diameter of the graph~$G$.} This is critical to guaranteeing that we fix errors quickly enough to simulate $\Pi$ with constant overhead.
\end{itemize}

\begin{algorithm}[htp]
\caption{A noise-resilient simulation of $\Pi$ (for party $u$)}
\label{alg:robust-protocol}
\begin{algorithmic}[1]
\small
\Statex 
Let $T_{u,v}$ denote the partial, pairwise transcript between $u$ and $v$ \emph{according to $u$}, and let $|T_{u,v}|$ denote the number of \emph{chunks} simulated so far in $T_{u,v}$.
\Statex
\State \Call{InitializeState}{\,}
\Statex
\For{$i=1$ to $100|\Pi|$}
  \ForAll{$v \in N(u)$ in parallel}	  \Comment{\textbf{meeting points}}
    \State $status_{u,v} \gets \Call{MeetingPoints}{u,v, S_{i,u,v}}$ \label{step:mp-called}
  \EndFor
  \State $minChunk \gets \min_{v \in N(u)} |T_{u,v}|$
  \If{exists $v$ such that $status_{u,v} = \text{``meeting points''}$}
    \State{$status_u \gets 0$}
  \ElsIf{exists $v$ such that $|T_{u,v}| > minChunk$}\label{step:minChunkCheck}
    \State{$status_u \gets 0$}
  \Else
    \State{$status_u \gets 1$}
  \EndIf
  \Statex \Comment{\textbf{flag passing}}
  \State $netCorrect_u \gets \Call{FlagPassing}{u, status_u}$ 
  \Statex
  \If{$netCorrect_u = 1$} \Comment{\textbf{simulation}}
    \State Listen for one round. \label{step:listen-bot}
    \State \parbox[t]{0.8\columnwidth}{Simulate chunk $|T_{u,v}|+1$ with each party $v \in N(u)$ from whom we have not received $\perp$ in line~\ref{step:listen-bot}. The simulation is based on the partial transcript $T_{u,w}$ for each $w \in N(u)$, as well as the input to $u$. \strut \rule[-.4\baselineskip]{0pt}{\baselineskip}}
     \State \parbox[t]{0.8\columnwidth}{If the above step took less than $5\K$ rounds, wait until $5\K$ rounds have passed.\strut}
    \If{received no $\perp$'s in Line~\ref{step:listen-bot} in this iteration}
      \State $minChunk \gets minChunk + 1$
    \EndIf
  \Else
    \State Send a single $\perp$ to each neighbor, and wait $5\K$ rounds.
  \EndIf
  \Statex
  \For{round $r=1$ to $n$} \Comment{\textbf{rewind}}
    \ForAll{$v \in N(u)$ in parallel}
      \If{{$status_{u,v} \neq \text{``meeting points''}$ \Statex\hspace{\dimexpr \algorithmicindent * 4} and $alreadyRewound_{u,v}=0$}}
        \If{$|T_{u,v}| > minChunk$} \label{step:truncate}
          \State Send a rewind message to $v$.
          \State truncate $T_{u,v}$ by one chunk.
          \State  $minChunk \gets \min_{v \in N(u)} |T_{u,v}|$
          \State{$alreadyRewound_{u,v} \gets 1$}
        \EndIf
      \EndIf
      \If{a rewind message is received from $v$}
        \If{$status_{u,v} \neq$ ``meeting points'' \Statex\hspace{\dimexpr \algorithmicindent * 5} and $alreadyRewound_{u,v}=0$}
          \State Truncate $T_{u,v}$ by one chunk.
          \State $minChunk \gets \min_{v \in N(u)} |T_{u,v}|$
          \State{$alreadyRewound_{u,v} \gets 1$}
        \EndIf
      \EndIf
    \EndFor
  \EndFor
\EndFor
\end{algorithmic}
\end{algorithm}

\begin{algorithm}[htp]
\caption{InitializeState()}
\label{alg:initialize-state}
\begin{algorithmic}[1]
\small
\State $\K \gets m$
\ForAll{neighbors $v \in N(u)$}
  \State Initialize $T_{u,v} = \emptyset$
  \State $status_{u,v} \gets$ ``simulate''
  \State $alreadyRewound_{u,v} \gets 0$
  \State $S_{u,v} := (S_{i,u,v})_{i \in [100|\Pi|]} \stackrel{\text{unif}}{\gets}  \left(\{ 0,1\}^{\Theta(|\Pi|\K)}\right)^{100|\Pi|}$ \Statex \hspace{\algorithmicindent}uniform bits of randomness.
\EndFor
\State $status_u \gets 1$.
\State $netCorrect_u \gets 1$
\end{algorithmic}
\end{algorithm}

\begin{algorithm}[htp]
\caption{FlagPassing($u$, $status$)}
\label{alg:flag-passing}
\begin{algorithmic}[1]
\small
\Statex
Let $\mathcal{T}$ be a spanning tree of~$G=(V,E)$ with a root $\rho\in V$, known to all parties. 
Denote by $d(v)$ the distance (in edges) of $v\in V$ from~$\rho$.  
Let $\depth(\mathcal{T)}=\max_v d(v)$ be the depth of $\mathcal{T}$.

 \Statex
\Statex  // convergecasting the status to the root~$\rho$.
\State Wait until round $r= \depth(\mathcal{T})-d(u)+1$. 
\State $netCorrect \gets status \wedge \left( \bigwedge_i s_i \right)$, where $\{s_i\}$ are messages received in round $r-1$ from $u$'s children (if any).
\State Send $netCorrect$ to parent.

 \Statex
\Statex  // downcasting the result to the entire~$\mathcal{T} $
\State Wait until round $r'=r+2d(u)$.
\State $netCorrect \gets status \wedge s_p$, where $s_p $ is the message received in round $r'-1$ from parent (if any).
\State Send $netCorrect$ to children.
\Statex
\State \Return $netCorrect$
\end{algorithmic}
\end{algorithm}

\subsection{The coding scheme}
We now formally describe the coding scheme assuming a common random string (CRS) and oblivious noise (Algorithm~\ref{alg:robust-protocol}).
Let $\Pi$ be a noiseless protocol over $G=(V,E)$, with $R$ rounds and $C$ transmissions  throughout.
Assume that the communication pattern and amount is predetermined, and independent of the parties' inputs and the transcript.
Namely, let $T_\Pi = (m_1 m_2 \cdots m_{C})$ be the noiseless transcript of $\Pi$; the content of the messages $m_i$ depend on the specific inputs, however their order, timing, source and destination are fixed for~$\Pi$.

We partition $T_\Pi$ into rounds according to $\Pi$, and group the rounds into \emph{chunks}, where each chunk is a 
set of contiguous rounds with total communication complexity exactly $5\K$.
Specifically, we keep adding rounds to a chunk until adding a round would cause the communication to exceed~$5\K$. Note that without the last round, the communication in the chunk is at least $5\K - 2m + 1$ bits.
We can then add a virtual round that makes the communication in the chunk be exactly~$5\K$ bits.
This addition affects the communication complexity by a constant factor. From this point on, we assume that~$\Pi$ adheres to our required structure.

We number the chunks in order, starting from 1.
For any (possibly partial) transcript $T$, we let $|T|$ denote the \emph{number of chunks} contained in the transcript~$T$. In particular, $|\Pi|$ is the number of chunks in~$\Pi$.
\label{sec:chunks}  %
We assume without loss of generality that in each chunk, each party sends at least one bit to each of its neighbors (again, this can easily be achieved by pre-processing~$\Pi$ while increasing its communication by a constant factor).  In addition, we assume that the protocol $\Pi$ is padded with enough dummy chunks where parties simply send zeros. Concretely, we assume the parties simulate a protocol~$\hat{\Pi}$, where $\hat{\Pi}$ is formed by just taking $\Pi$ and adding $99|\Pi|$ extra chunks of padding. This padding is standard in the literature on interactive coding, and is added to deal with the case that the adversary behaves honestly in all the rounds until the last few rounds, and fully corrupts the last few rounds. For the remainder of the paper, we omit explicit mention of $\hat{\Pi}$, with the understanding that the parties are running the simulation on this padded version of $\Pi$.

The parties simulate $\Pi$ one chunk at a time, by cycling through the following phases in the following order:
consistency check, flag passing, simulation, and rewind.
Each phase takes a number of rounds that is a priori fixed, and since our model is synchronous, the parties are always in agreement regarding which phase is being executed.

Let $T_{u,v}$ denote the pairwise transcript of the link $(u,v)$ as seen by~$u$, where $v \in N(u)$; similarly,
$T_{v,u}$ is the transcript of the same link as seen by~$v$ (which may differ from $T_{u,v}$ due to channel noise).
In more detail, $T_{u,v}$ is the concatenation of the transcripts generated at each chunk, where the transcript of chunk~$i$
consists of two parts: 
(1) the simulated communication of chunk~$i$, and (2) the chunk number~$i$.\footnote{It is important to add the chunk number since the inner-product hash function we use (Lemma~\ref{def:ip-hash}) has the property that for any string $x$, $h(x)=h(x\circ 0)$.}
The structure of part (1) is as follows.
Assume that in the $i$-th chunk in $\Pi$,  $j$ bits are exchanged over the $(u,v)$ link in rounds $t_1,\ldots, t_j$.
Then $T_{u,v}$ holds a string of length $j$ over $\{0,1,*\}$ describing the communication
at times $t_1,\ldots, t_j$, \emph{as observed by $u$}. 
The symbol $*$ denotes the event of not receiving a bit at the specific round (i.e., due to a deletion). 
The transcript $T_{v,u}$ is defined analogously from $v$'s point of view. 
Note that restricted to the substrings that belongs to chunk~$i$, $T_{u,v}=T_{v,u}$ if and only if there where no errors
at rounds $t_1,\ldots,t_j$ in the simulation phase; insertions and deletions at other rounds are ignored.
We abuse notation and define $|T_{u,v}|$ to be the number of \emph{chunks} that appear in~$T_{u,v}$.

\medskip

In Algorithm~\ref{alg:robust-protocol} we describe the noise-resilient protocol for a fixed party $u$.
The parties start by initializing their state with a call to InitializeState()  (Algorithm~\ref{alg:initialize-state}).

Next, the parties perform a single iteration of the meeting-points mechanism  (Algorithm~\ref{alg:meeting-points} in Appendix~\ref{sec:meeting-points}).
Given a pair of adjacent parties $u$ and $v$, the meeting-points mechanism outputs a variable $status_{u,v}$, which indicates whether the parties want to simulate (in which case $status_{u,v} = \text{``simulate''}$) or continue with the meeting-points mechanism (in which case $status_{u,v} = \text{``meeting points''}$).

Then, according to the output of the meeting-point mechanism and according to any apparent inconsistencies in the simulated transcripts with its neighbors,
each party sets its ``flag'' $status_u$ to denote whether it should continue with the simulation or not.
This status is used as an input to the flag-passing phase, described in Algorithm~\ref{alg:flag-passing}, where $status_u = 0$ denotes that a ``stop'' flag should be propagated.

Each party ends the flag-passing phase with a flag denoted $netCorrect_u$ that is set to~$1$ if the network as a whole seems to be correct.
Then, the parties perform a simulation phase. If the $netCorrect$ flag is set to~$1$, they execute $\Pi$ for one additional chunk, according to the place they believe they are at. Otherwise, they send a special symbol~$\bot$ to indicate they are not participating in the current  simulation  phase.

Finally, the rewind phase begins, where any party $u$ that sees an obvious discrepancy in the lengths of the transcripts in its neighborhood, sends a single rewind request to any neighbor $v$ which is ahead of the rest, conditioned that $u$ and $v$ are not currently in the middle of a meeting-points process.

\section{Coding scheme for oblivious channels: Analysis}
\label{sec:analysis-oblivious-crs}

In this section we analyze the coding scheme presented in Section~\ref{sec:protocol-obliv-crs} and prove the following theorem.

\gnote{Read here 2021}\gnote{The difference between what this theorem says (assuming an adversary of rate ...) versus what we actually prove (that with high probability, either the parties simulate correctly OR the adversary commits a large fraction of errors) has been quite jarring to our reviewer. Can we please change this?}

\begin{theorem}\label{thm:oblivious-adv-crs}
Let $G=(V,E)$ be a network with $n=|V|$ parties and $m=|E|$ links, 
and let $\Pi$ be a multiparty protocol on the network~$G$ with communication complexity $\CC(\Pi)$, binary alphabet and fixed order of speaking.
Let $|\Pi| := \frac{\CC(\Pi)}{5m}$ and let $\eps > 0$ be a sufficiently small constant.
Assume an instance of  Algorithm~\ref{alg:robust-protocol} in which an oblivious adversary (fixing or additive) makes $\Err$ errors, and let $\CC$ be the communication complexity of this instance.

Then, with probability at least $1 - \exp(-\Omega(|\Pi|))$,
either Algorithm~\ref{alg:robust-protocol} simulates $\Pi$ correctly with $\CC=\Theta(\CC(\Pi))$, or $\Err > (\eps / m)\CC(\Pi)$.
\end{theorem}
\ran{Changed Theorem's statement. Please verify}
We emphasize that Theorem~\ref{thm:oblivious-adv-crs} applies to both fixing and additive oblivious adversaries (see Section~\ref{sec:preliminaries} for definitions). In fact, all 
we need here is for the transcripts to be independent of the seeds used to hash them; see Section~\ref{sec:bound-hash-overview} for 
an expanded discussion.

In order to prove the above theorem we define a \emph{potential function} that measures the progress of the simulation at every iteration. In Section~\ref{sec:potential} we define the potential function and  intuitively explain  most of its terms.
In Section~\ref{sec:potential-increase} we prove that in every iteration\footnote{Recall that a single iteration of \Alg consists of a consistency phase, flag-passing phase, simulation phase and a rewind phase.}  the potential increases by at least~$\K$, while the communication increases by \emph{at most} $\K \times \ell$, where $\ell$ measures the number of channel errors and hash collisions that occurred in that specific iteration.

We split the analysis of the potential into two parts: the meeting points mechanism and the rest of the coding scheme. The first part, in Section~\ref{subsec:MP}, re-iterates the analysis of~\cite{haeupler14} with minor adaptations. We defer the full proofs to Appendix~\ref{sec:meeting-points}. The rest of the potential analysis is novel and performed in  Sections~\ref{sec:potential-rises-no-err} and~\ref{sec:potential-rises-err-mp}.  Specifically, in  Section~\ref{sec:potential-rises-no-err} we focus on the iterations with no errors/hash-collisions, and in Section~\ref{sec:potential-rises-err-mp} we focus on iterations that suffer from errors/hash-collisions.
Then, in Section~\ref{sec:bounding-collisions} we bound (with high probability) the number of hash-collisions that may happen throughout the entire execution of the coding scheme.
Finally, in Section~\ref{sec:proof-main-thm} we complete the proof of Theorem~\ref{thm:oblivious-adv-crs}, by showing that
the potential at the end of the coding scheme must be high enough to imply a correct simulation of~$\Pi$, given the bounded amount of errors and hash-collisions.

In the following, all our quantities measure progress in chunks, where each chunk contains exactly $5\K=5m$ bits.
Recall that we denote by~$|\Pi|$ the number of chunks in the noiseless protocol~$\Pi$, and we denote by $|T_{u,v}|$ the number of \emph{chunks} in the simulated (partial) transcript~$T_{u,v}$.

\subsection{The potential function}\label{sec:potential}
Our potential function~$\phi$ will measure the progress of the network towards simulating the underlying interactive protocol $\Pi$ correctly. Naturally, $\phi$ changes as 
the simulation of \Alg 
progresses, and so depends on the round number. In what follows, for ease of notation, we omit the current round number in all the terms used to define~$\phi$.

For each adjacent pair of parties $u$ and $v$, define
\begin{equation}\label{eq:def:G}
G_{u, v}
\end{equation}
to be the size (in chunks)
of the longest common prefix of $T_{u,v}$ and $T_{v,u}$.
Namely, $G_{u, v}$ is the length of the largest prefix of communication between parties $u$ and $v$ in~$\Pi$, that these parties agree on.
Define $B_{u, v}$ to be
\begin{equation}\label{eq:def:B}
B_{u, v} \eqdef \max(|T_{u,v}|, |T_{v,u}|) - G_{u, v}.
\end{equation}
Namely, $B_{u,v}$ is the gap between how far one of the parties \emph{thinks} they have simulated and how far they have simulated correctly.\footnote{Note that we can have $B_{u,v} = 0$ even when there have been errors in the network, as long as those errors were corrected.}
Note that $B_{u,v}$ is always nonnegative by design. Furthermore, $B_{u,v}=0$ if and only if the parties have no differences in their pairwise transcripts with each other.

Define 
\begin{equation}\label{eq:def:G*}
G^* \eqdef \min_{(u,v) \in E}G_{u, v}
\end{equation}
to be the largest chunk number through which the network \emph{as a whole} has correctly simulated.
Let 
\begin{equation}\label{eq:def:H*}
H^* \eqdef \max_{u}\max_{v \in N(u)} |T_{u,v}|
\end{equation}
denote the largest chunk number which any party in the network \emph{thinks} it has simulated; note that, by definition, $H^* \geq G^*$.
Finally, we define 
\begin{equation}\label{eq:def:B*}
B^* \eqdef H^* - G^*.
\end{equation}

In addition, our potential function also quantifies the progress of the meeting-points mechanism between any two adjacent parties in the network (which we elaborate on in Section~\ref{subsec:MP} below, and in Appendix~\ref{sec:meeting-points}).
This is done via the term $\varphi_{u,v}$ defined in Eq.~\eqref{eqn:Hpot} in Section~\ref{sec:app-mp-potential},
which is closely inspired by the potential function stated in~\cite{haeupler14}.
Intuitively, $\varphi_{u,v}$ is the number of iterations of the meeting-points mechanism that parties $u$ and $v$ need to do to make $B_{u,v} = 0$; indeed, for all pairs $(u,v) \in E$ it holds that  (Proposition~\ref{prop:phi-nonnegative})
\[ 0 \leq B_{u,v} \leq \varphi_{u,v},\]
and in particular, $\varphi_{u,v} = 0$ implies that $B_{u,v} = 0$ .

Finally, let $\EHC$
denote the number of errors and hash collisions that have occurred in the protocol until the current round of \Alg (hash collisions may occur in the meeting-points mechanism in Appendix~\ref{sec:meeting-points}). Similarly to all the other terms in the potential, we drop the dependence on the round~$r$.

Our potential function is defined to be:
\begin{equation}\label{eqn:potential}
\phi \eqdef \sum_{(u,v) \in E} \left( \frac{\K}{m}G_{u,v} - \K \cdot \varphi_{u,v}\right) - C_1 \K B^* + C_7 \K \cdot \EHC
\end{equation}
where $C_1$ and $C_7$ are constants such that $C_1$ is sufficiently larger than 2, but smaller than all the constants $C_2, \ldots, C_6$ defined in Eq.~\eqref{eqn:Hpot}, and $C_7$ is a constant sufficiently larger than $C_2, \ldots, C_6$. We refer the reader to Table~\ref{table:potential} for a summary of the definitions of all variables defined in this section.

\begin{table*}
\caption{Definition of terms related to the potential function $\phi$.}
\label{table:potential}
\begin{center}
\begin{tabular}{ cl }
 \hline\hline
\textbf{Parameter} & \textbf{Definition} \\
\hline
 $T_{u,v}$ & Transcript of communication between $u$ and $v$ according to $u$ \\
 $G_{u,v}$ & Size of longest common prefix of $T_{u,v}$ and $T_{v,u}$ (in chunks)\\
 $B_{u,v}$ & $\max(|T_{u,v}|, |T_{v,u}|) - G_{u,v}$ \\ 
 $G^*$ & $\min_{(u,v) \in E}G_{u, v}$ \\ 
 $H^*$ & $\max_{u}\max_{v \in N(u)} |T_{u,v}|$ \\
 $B^*$ & $H^* - G^*$\\  
$\varphi_{u,v}$ & Meeting points potential between $u$ and $v$\\
$\EHC$ & Number of errors and hash collisions that have occurred overall \\
$\phi$ & Overall potential in network \\
\hline\hline
\end{tabular}
\end{center}
\end{table*}

\begin{remark}[Remark on Notation]
For any variable $\var$ that represents the state of some party in \Alg, including all the ones in Table~\ref{table:potential}, we let $\var(i)$ denote the value of the variable $\var$ at the beginning of iteration $i$. For example, $T_{u,v}(10)$ denotes the value of the partial transcript $T_{u,v}$ at the very start of the tenth iteration of 
\Alg.
\end{remark}

\subsection{The meeting-points mechanism and potential $\varphi_{u,v}$}\label{subsec:MP}

In what follows, we briefly recall the meeting-points mechanism and why we use it. We defer the formal definition of $\varphi_{u,v}$ and all the proofs regarding it to Appendix \ref{sec:meeting-points}. 

If two adjacent parties $u$ and $v$ have $T_{u,v} \neq T_{v,u}$ (or equivalently, $B_{u,v} > 0$), then they should not simulate further with each other without repairing the differences in their transcripts. If $u$ and $v$ knew which of them needs to roll back and by how much, they could simply roll back the simulated chunks until $T_{u,v} = T_{v,u}$, at which point they can continue the simulation. However, they do not know this information. Furthermore, they cannot afford to communicate $|T_{u,v}|$ or $|T_{v,u}|$, since these numbers potentially require $\log |\Pi|$ bits to communicate.

This problem is solved via the ``meeting-points'' mechanism~\cite{haeupler14} which is designed to roll back $T_{u,v}$ and $T_{v,u}$ to a point where $T_{u,v} = T_{v,u}$, while only requiring $O(B_{u,v})$ exchanges of hashes between parties~$u$ and~$v$, and guaranteeing that (in the absence of error) neither $u$ nor $v$ truncate their transcript too much. That is, $u$ (resp.~$v$) truncates $T_{u,v}$ (resp.~$T_{v,u}$) by at most $2B_{u,v}$~chunks. 
While errors and hash collisions can mess up this guarantee, each error or hash collision causes only a bounded amount of damage.  
Since the adversary's allowed error rate is sufficiently small, the simulation overcomes this damage with high probability.

As mentioned above, our analysis of the meeting-points mechanism  essentially follows that of Haeupler~\cite{haeupler14} after adapting it to our construction, where the meeting-points mechanism is interleaved over several iterations, rather than performed all at once.
Specifically, for each link $(u,v)\in E$ we define a ``meeting-points potential'' term~$\varphi_{u,v}$ that approximately measures the number of hash exchanges it will require for $u$ and $v$ to roll back $T_{u,v}$ and $T_{v,u}$ to a common point. 
While our analysis of how $\varphi_{u,v}$ changes in the meeting-points phase naturally repeats the analysis of~\cite{haeupler14}, $\varphi_{u,v}$ can also change during the other phases of the protocol, especially when noise is present. Our analysis bounds the change in~$\varphi_{u,v}$ in all the phases as a function of the errors and hash collisions that occur throughout the iteration. This allows us to bound the change in the overall potential~$\phi$. We bound the changes in~$\varphi_{u,v}$ in the Flag Passing, Rewind, and Simulation phases in Claim~\ref{claim:simple-Hpot-claim}. The changes in~$\varphi_{u,v}$ in the Meeting Points phase are addressed in Lemma~\ref{lem:potential-rises-overall-1} (analogous to Lemma 7.4 in~\cite{haeupler14}) and Proposition~\ref{prop:status-simulate}, and are combined to establish how the potential $\phi$ changes in the Meeting Points phase (Lemma~\ref{lem:network-potential-rises}).

We defer the formal definition of the meeting-points mechanism and the proofs of the relevant properties to Appendix~\ref{sec:meeting-points}.

\subsection{Bounding the potential increase and communication per iteration} %
\label{sec:potential-increase}
In this section we prove the following technical lemma that
says that the potential $\phi$ (Eq.~\eqref{eqn:potential}) increases in each iteration by at least~$\K$.
Furthermore, the amount of communication performed during a single iteration
can be bounded by roughly $\K$~times the amount of links (i.e., pairs of parties) that suffer from channel-noise during this iteration, or links  that experienced an event of hash-collision during this iteration.
\begin{lemma}
\label{lem:phi-rises-enough}
Fix any iteration of Algorithm~\ref{alg:robust-protocol} and let $\ell$ be the number of links with errors or hash collisions on them during this iteration. Then,
\begin{enumerate}
\item The potential $\phi$ increases by at least $\K$ in this iteration.
\item The amount of communication in the entire network during this iteration ($\CC$) satisfies
\[
\CC \leq \alpha(1 + \ell)\K,
\]
where $\alpha$ is a sufficiently large constant.
\end{enumerate}
\end{lemma}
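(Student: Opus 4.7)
The proof is a two-part argument: bounding the communication per iteration is the easier direction, while bounding the potential increase requires a careful case analysis that separates error-free iterations from those containing errors or hash collisions.

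For the communication bound, I would first account for the honest communication in each of the four phases of an iteration. The meeting-points phase uses a constant number of rounds per link (just two hash exchanges per link per iteration, plus a hash of the iteration counter), contributing $O(m)$ bits total. The flag-passing phase traverses the spanning tree $\mathcal{T}$ twice, contributing $O(n)$ bits. The simulation phase communicates exactly one chunk, namely $5\K$ bits, or a single $\bot$ per link if $netCorrect_u = 0$. The rewind phase sends at most one rewind message per link and lasts $n$ rounds, but only $O(m)$ bits of honest content flow. Since $\K \geq m \geq n-1$, the honest communication per iteration is $O(\K)$. Next I would account for adversarial bits: each insertion on a link adds one bit, and on any given link the number of rounds per iteration is at most $O(\K + n) = O(\K)$, so the adversary can add at most $O(\K)$ extra bits per ``error-link.'' Summing over the $\ell$ links that experience errors or (harmless-to-communication) hash collisions yields $\CC \leq \alpha(1+\ell)\K$ for some constant $\alpha$.

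For the potential increase I would split into two cases. \emph{Case 1 (no errors or hash collisions in this iteration):} Here the meeting-points mechanism behaves as designed, and flag passing correctly propagates the AND of local flags. Either every party sets $status_u = 1$, in which case each $G_{u,v}$ increases by exactly $1$ while $\varphi_{u,v}$ and $B^*$ are unchanged (or decrease), producing a net increase of $\sum_{(u,v)\in E}\tfrac{\K}{m}\cdot 1 = \K$; or at least one party had $status_u = 0$, in which case the simulation phase is a no-op, but either $\varphi_{u,v}$ drops on the links running meeting points (each such drop contributes $+\K$ to $\phi$, by Lemmas in Appendix~\ref{sec:meeting-points}) or $B^*$ shrinks through the rewind phase (contributing $+C_1 \K$). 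Using that $C_1$ exceeds 2 and that the constants $C_2,\ldots,C_6$ inside $\varphi_{u,v}$ are compatible, one verifies the net change is at least $\K$; this amounts to invoking the ``no error'' sub-lemmas that will be proved in Section~\ref{sec:potential-rises-no-err}. \emph{Case 2 ($\ell \geq 1$):} Each error or hash collision increments $\EHC$ by $1$, contributing $+C_7 \K$ per event to $\phi$. The remaining terms can decrease, but I would show (echoing the analysis deferred to Section~\ref{sec:potential-rises-err-mp} and Claim~\ref{claim:simple-Hpot-claim}) that the decrease per error is at most $(C_2+\cdots+C_6+C_1+1)\K$, which is strictly less than $C_7\K$ by our choice of $C_7$. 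Hence $\Delta \phi \geq (C_7 - O(1))\K \cdot \ell \geq \K$ whenever $\ell \geq 1$.

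The main obstacle is the meeting-points potential $\varphi_{u,v}$: because we interleave a single step of the meeting-points protocol across many iterations of \Alg, rather than running it to completion at once, I must argue that changes in $\varphi_{u,v}$ during the flag-passing, simulation, and rewind phases (when meeting points is technically not active) are controlled. This is where the bookkeeping of $status_{u,v}$, $alreadyRewound_{u,v}$, and the guarded truncation of $T_{u,v}$ in Line~\ref{step:truncate} becomes essential: these guards ensure that the meeting-points invariants are preserved across phases, so that the charge of each hash collision or erroneous bit against $\EHC$ truly covers the worst-case damage done to $\varphi_{u,v}$, $B^*$, and $G_{u,v}$ simultaneously. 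Once those cross-phase bounds are established (which is precisely the content of Claim~\ref{claim:simple-Hpot-claim} and Lemmas~\ref{lem:potential-rises-overall-1} and~\ref{lem:network-potential-rises}), combining them yields both parts of the lemma.
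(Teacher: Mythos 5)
Your potential-increase argument has the right skeleton — split into error-free and error-containing iterations, charge decreases in $\varphi_{u,v}$, $B^*$, $G_{u,v}$ against the $+C_7\K$ credit per event, and defer the per-phase control of $\varphi_{u,v}$ to Claim~\ref{claim:simple-Hpot-claim} and the meeting-points lemmas. That matches the paper's decomposition into Lemmas~\ref{lem:no-err:potential} and~\ref{lem:errs-in-mp-and-fp}, though you gloss over the nontrivial sub-case of Lemma~\ref{lem:no-err:potential} where all pairwise statuses say ``simulate'' yet some $status_u=0$: there the paper needs a round-by-round propagation argument through the rewind phase to show $H^*$ actually drops by one, and it is not just ``$B^*$ shrinks.''

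The communication bound, however, has a genuine gap. You write that the simulation phase ``communicates exactly one chunk, namely $5\K$ bits,'' and then attribute any excess to adversarial \emph{insertions}, with hash collisions declared ``harmless-to-communication.'' Both of these are wrong. The dominant source of excess communication is not inserted bits; it is that errors and hash collisions in the meeting-points and flag-passing phases desynchronize the parties' notion of the current chunk, and then the \emph{honest} simulation phase communicates $5\K$ bits for \emph{each distinct chunk number} being simulated somewhere in the network. A single hash collision that slips a bad transcript past the consistency check can quietly fork the network into components simulating different chunks, so hash collisions are very much relevant to the communication count. The paper's Lemma~\ref{lem:errs-in-mp-and-fp} handles this by defining the subgraph $\mathcal{H}$ of the spanning tree where $netCorrect$ flags agree and transcript lengths match, observing that each connected component of $\mathcal{H}$ simulates a single chunk number, and then proving Claim~\ref{claim:communication-bound}: the number of such components that actually simulate is at most $1 + \ell_1 + \ell_2$, by exhibiting, for each ``extra'' component root in $\mathcal{T}\setminus\mathcal{H}$, an error or hash collision on the tree edge above it during the meeting-points or flag-passing phase. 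Your proposal has no analogue of this component-counting argument, and a per-link insertion budget does not recover it: the blowup is honest communication triggered by a small number of desynchronizing events, not injected bits.
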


The next sections are devoted to proving the above lemma. Let us begin by giving a high-level overview of the proof.
\subsubsection{Proof Overview}

We proceed to prove the lemma in two conceptual steps.

\begin{enumerate}
\item First, in Section~\ref{sec:potential-rises-no-err}, we consider iterations that have
no errors or hash collisions.

We first establish that the communication in this case is at most $O(\K)$.
To this end, we first argue that the communication in the meeting-points, flag-passing, and rewind
phases is \emph{always} bounded by $O(\K)$ (Proposition \ref{prop:easy-comm-bound}), regardless of errors
committed by the adversary.  Therefore, it suffices to bound the communication in the simulation phase.
If every party is simulating the same chunk, then the
communication is easily bounded by $O(\K)$. However, if the parties are simulating many different chunks, then the
communication could be much larger. This is where the flag-passing phase is useful:
if there are no errors, then the flags will prevent all parties from simulating when two parties are at different chunks.

We next establish that the potential increases by at least~$\K$, as follows.
If the parties simulate, then since there are no errors or hash collisions, $\sum G_{u,v}$ increase by $\K$, and none of the other terms change. If the parties do not simulate, then either some adjacent parties did not pass their consistency check, in which case $\varphi_{u,v}$ decreases by $\Omega(1)$ (Lemma \ref{lem:potential-rises-overall-1}) and none of the other terms change, or some parties rewind, in which case $B^*$ decreases and none of the other terms change.

\item Next, in Section~\ref{sec:potential-rises-err-mp}, we consider iterations that have errors or hash collisions.

We first argue that errors and hash collisions increase $\phi$ by at least $\K$.
To this end, note that errors may cause some terms of $\phi$ to decrease,
but this is compensated for by the accompanying increase in $\EHC$, and since $C_7$ is set to be large enough, even though some of the terms decrease, overall the potential increases by at least~$\K$.

We would then like to argue that the communication increases by at most $O(\K)$, though unfortunately, this claim is false.  The communication in an iteration can actually greatly exceed $O(\K)$, though we show that in these cases, there were many errors or hash collisions in the iteration.
Specifically, we argue that each error or hash collision {\em individually} does not cause too much extra communication.  This is formalized in Lemma~\ref{lem:errs-in-mp-and-fp}.

\end{enumerate}

First, we prove a simple proposition, which says that the communication in the meeting-points, flag-passing, and rewind phases is bounded. This reduces bounding the overall communication
in an iteration to bounding the communication in the corresponding simulation phase.

\begin{proposition}
\label{prop:easy-comm-bound}
The communication during the flag-passing and rewind phases is $O(m)$ in total, and the communication in the meeting-points phase is $O(\K)$, regardless of errors or hash collisions in the iteration. \end{proposition}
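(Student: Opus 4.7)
The plan is to bound the three phases separately, and my main observation is that in every one of them the per-party speaking schedule is \emph{hard-coded} into the algorithm and does not react to anything the adversary does in the current phase. Because the communication complexity $\CC$ counts only sender-originated transmissions (Section~\ref{sec:preliminaries}), bits that the adversary inserts are not counted against $\CC$; and because each party sends at most as many bits as the algorithm instructs, neither insertions, deletions, substitutions, nor hash collisions can increase the sender-side count. So the entire proof reduces to three direct structural inspections of Algorithms~\ref{alg:robust-protocol}, \ref{alg:flag-passing}, and the MeetingPoints subroutine.

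For flag passing I would inspect Algorithm~\ref{alg:flag-passing} and observe that each party transmits at most once on the upward sweep (to its parent in the spanning tree $\mathcal{T}$) and at most once on the downward sweep (to each child), sending a single bit each time. Summing over $\mathcal{T}$, which has $n-1$ edges, the total number of sender-originated transmissions is at most $2(n-1)=O(n)=O(m)$. For the rewind phase, I would use the boolean $alreadyRewound_{u,v}$: the very act of party $u$ sending a rewind bit to $v$ sets $alreadyRewound_{u,v}\gets 1$, and every subsequent send on that link in this iteration is gated on that flag being $0$. Hence across the $n$ rewind rounds party $u$ sends at most one $O(1)$-bit rewind message per neighbor, for a total of at most $2m$ transmissions over the whole network, i.e.\ $O(m)$.

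For the meeting-points phase I would appeal to the structure of MeetingPoints (Algorithm~\ref{alg:meeting-points} in Appendix~\ref{sec:meeting-points}): a single invocation on a link exchanges a constant number of hash values plus a constant number of control bits, each hash of length $\tau$. All $m$ invocations occur in parallel, so the total communication is $O(m\tau)$. Since $\K$ is chosen so that $\K = \Theta(m\tau)$ in both coding schemes (namely $\tau=O(1)$ with $\K=\Theta(m)$ for the oblivious-adversary scheme of this section, and $\tau=O(\log m)$ with $\K=\Theta(m\log m)$ for the non-oblivious scheme), this is $O(\K)$. I do not anticipate any real obstacle here: the statement is essentially an accounting lemma, and the only thing one must be slightly careful about is to use the transmission-based definition of $\CC$ and not, for example, a ``rounds used'' quantity (which \emph{can} blow up with adversarial behavior in other parts of the analysis).
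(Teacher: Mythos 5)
Your proof is correct and matches the paper's argument closely. The only difference is cosmetic: for the rewind phase you walk through the $alreadyRewound_{u,v}$ gating explicitly to conclude at most one rewind per directed link, whereas the paper just states that each link carries at most one valid rewind message (with the same parenthetical caveat that inserted messages do not count toward~$\CC$). Your framing via the transmission-counting convention, and your observation that it is needed precisely because rounds—unlike transmissions—can blow up, is a sound way to phrase the same accounting.
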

\begin{proof}
In the meeting-points phase, each adjacent pair of parties exchange hashes of their transcripts (see Algorithm \ref{alg:meeting-points}), where the output length of the hash functions is $\Theta(\K / m)$. Hence, there is $O(\K)$ communication in the meeting-points phase. \\
The communication pattern in the flag-passing phase is deterministic and consists of two messages per link of a the spanning tree~$\cal{T}$, hence it is upper bounded by $O(n)=O(m)$.
Finally, each link can have at most one valid ``rewind'' message in the rewind phase (note that messages that are inserted do not count towards our communication bound).
\end{proof}

\subsubsection{Iterations with no errors or hash collisions}
\label{sec:potential-rises-no-err}
\begin{lemma}
\label{lem:no-err:comm}
Suppose that there are no errors or hash collisions in a single iteration of Algorithm~\ref{alg:robust-protocol}. 
Then the overall  communication in the network is~$O(\K)$.
\end{lemma}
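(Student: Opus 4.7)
The plan is to split the iteration into its four phases and bound the communication of each separately. By Proposition~\ref{prop:easy-comm-bound}, the meeting-points, flag-passing, and rewind phases each contribute $O(\K)$ bits regardless of the noise pattern, so the entire claim reduces to bounding the simulation phase by $O(\K)$.

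For the simulation phase I would first observe that, with no errors during the flag-passing phase, the convergecast-then-broadcast along the spanning tree $\mathcal{T}$ delivers the same value $netCorrect_u = \bigwedge_{w \in V} status_w$ to every party (the final AND with the local $status$ on the way down is idempotent since $status_u$ already appears in the conjunction). Hence either all parties send $\bot$ in the simulation phase, or none of them do. The easy case is when some $status_w=0$: every party sends a single $\bot$ to each neighbor and stays silent, contributing $O(m)=O(\K)$ bits. The main case is when $status_u=1$ for every $u$, where the claim is that all parties jointly simulate the \emph{same} chunk of $\Pi$, so this phase communicates exactly $5\K$ bits by our chunking convention. Here I would combine the two consequences of $status_u=1$ guaranteed by Algorithm~\ref{alg:robust-protocol}: that $status_{u,v}=\text{``simulate''}$ for every neighbor $v$, and that all $|T_{u,v}|$ for $v\in N(u)$ equal a common value $minChunk_u$. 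Under the no-error/no-collision hypothesis, the first consequence forces $T_{u,v}=T_{v,u}$, because \textsc{MeetingPoints} outputs ``simulate'' only when a top-level hash comparison on the full transcripts succeeds, and absent a hash collision this equality of hashes lifts to equality of transcripts. In particular $|T_{u,v}|=|T_{v,u}|$, so combining with the second consequence gives $minChunk_u=minChunk_v$ on every edge $(u,v)\in E$; connectivity of $G$ then yields a common value $\ell$ so that every party simulates chunk $\ell+1$, which by construction carries exactly $5\K=O(\K)$ bits.

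The only subtle point I anticipate is justifying that $status_{u,v}=\text{``simulate''}$ really implies $T_{u,v}=T_{v,u}$ (not just $|T_{u,v}|=|T_{v,u}|$) in a no-error/no-collision iteration. This hinges on the precise behaviour of \textsc{MeetingPoints} (Algorithm~\ref{alg:meeting-points}) and will be supported by the analysis deferred to Appendix~\ref{sec:meeting-points}; once that is in hand, summing the four phase-bounds yields the desired $O(\K)$ total.
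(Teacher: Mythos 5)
Your proposal is correct and follows essentially the same route as the paper: bound the three non-simulation phases via Proposition~\ref{prop:easy-comm-bound}, split the simulation phase into the two cases of $netCorrect_u$, and in the all-$1$ case use $status_u=1$ to conclude that every pairwise transcript has a common length, so the whole network simulates a single chunk of size $5\K$. The one point you flag as subtle---that $status_{u,v}=\text{``simulate''}$ with no error or hash collision forces $T_{u,v}=T_{v,u}$---is indeed the same contrapositive the paper invokes (``otherwise the hashes would indicate a mismatch''), so your deferral to Appendix~\ref{sec:meeting-points} is exactly the right place to discharge it.
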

\begin{proof}
By Proposition \ref{prop:easy-comm-bound}, the communication in all the phases except of the simulation phase are bounded by~$O(\K)$, and we are left to bound the communication in the simulation phases.

In the simulation phase, each party either sends $\bot$ or simulates a specific chunk. Say that $v$ simulates chunk number $n_v$ with all its neighbors if it didn't send~$\bot$.
Each chunk contains at most $5\K$ bits of communication, hence, the total amount of communication in the simulation phase is
bounded by $5\K$ times the number of distinct chunk numbers being simulated in the network.
In other words, it is bounded by $5\K \cdot | \{ n_v \mid v \in V\}|$, up to additional $2m$ $\bot$ ``messages'' (which in our case are merely $2m$ bits).

Therefore, to finish the proof it remains to argue that if there are no errors or hash collisions then $ | \{ n_v \mid v \in V\}| \leq 1$.  
We consider several different cases according to the state of the network at the beginning of the iteration,
specifically, whether the parties have set $netCorrect=1$ or not.

\paragraph{Case 1: At the end of the flag-passing phase, $netCorrect_u=1$ for every party $u$.}

Since there were no errors or hash collisions, the fact that $netCorrect_u=1$ means that each party~$u$ had $status_u=1$ before the flag-passing phase. This follows since by the definition of the flag-passing phase, for every party $v\in V$, $netCorrect_v=\bigwedge_{u\in V} status_u$.   Note that, since we assume that none of the parties have $netCorrect_u = 0$, and we assume no errors, the are no $\perp$ symbols sent in the simulation phase.

The fact that each party $u$ has $status_u=1$
implies that for all $v, w \in N(u)$,  it holds that $|T_{u, v}| = |T_{u, w}|$.
Further, for any $v \in N(u)$, $T_{u,v}=T_{v,u}$, or otherwise the hashes would indicate a mismatch and the parties would have set $status=0$.
Putting these two facts together, we get that $G^* = H^*$ and hence $B^* = 0$, which implies that indeed $ | \{ n_v \mid v \in V\}|=1$, as desired.

\paragraph{Case 2: At the end of the flag-passing phase, some party~$u$ has $netCorrect_u=0$.}
Since $netCorrect_u=0$ for some party~$u$, there must be some party $v$ such that $status_v = 0$, and hence we have that $netCorrect_u=0$ for all $u \in V$.
Since $netCorrect_u = 0$ for all parties $u$, we know that none of the parties will simulate (they will only send $\perp$s) and hence the overall communication in the iteration will be $2m=O(\K)$.

\end{proof}

We next show that the potential increases by at least~$K$ in any such iteration.

\begin{lemma}
\label{lem:no-err:potential}
Suppose that there are no errors or hash collisions in a single iteration of Algorithm~\ref{alg:robust-protocol}. 
Then the potential $\phi$ increases by at least~$\K$ during this iteration.
\end{lemma}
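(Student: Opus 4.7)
I will prove this by case analysis on the outcome of the flag-passing phase.

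If every party exits the flag-passing phase with $netCorrect_u = 1$, then the same argument as in the proof of Lemma~\ref{lem:no-err:comm} gives $T_{u,v} = T_{v,u}$ and $\varphi_{u,v} = 0$ on every edge, together with $B^* = 0$. In this case every party simulates exactly one new chunk correctly against every neighbor, so each $G_{u,v}$ rises by $1$ while $\varphi_{u,v}$, $B^*$, and $\EHC$ stay put, giving $\Delta \phi = \sum_{(u,v) \in E} \tfrac{K}{m} = K$ as required.

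Otherwise at least one party has $netCorrect_u = 0$, and because the flag-passing phase is error-free, every party does. No party simulates, so the $G_{u,v}$ values are frozen during the simulation phase. Fix an unhappy party $u$ and split on the rule that produced $status_u = 0$. If the trigger was $status_{u,v} = \text{``meeting points''}$ for some neighbor $v$, then an error-free meeting-points iteration ran on $(u,v)$; by Lemma~\ref{lem:potential-rises-overall-1} (combined with Proposition~\ref{prop:status-simulate} for edges where MP completes in the iteration) the aggregate term $\sum (\tfrac{K}{m} G_{u,v} - K\varphi_{u,v})$ rises by at least $K$, and Claim~\ref{claim:simple-Hpot-claim} ensures the rewind phase does not neutralise this gain, since rewinds act only on non-MP edges and cannot increase any $\varphi_{u,v}$ there. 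If instead every edge at $u$ satisfies $status_{u,v} \neq \text{``meeting points''}$, then the trigger must be $|T_{u,v}| > minChunk_u$ for some $v$, and the rewind phase must itself supply the potential gain.

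The main obstacle is the latter sub-case: I need to show the rewind phase decreases $B^*$ by at least $1$, yielding $\Delta \phi \geq C_1 K \geq K$. The easy half is that $G^*$ does not decrease: any rewound edge is in ``simulate'' status, so $T_{u,v} = T_{v,u}$ and $G_{u,v} = |T_{u,v}|$, and the rewind trigger $|T_{u,v}| > minChunk_u \geq G^*$ forces $|T_{u,v}| \geq G^* + 1$; truncating by one leaves $G_{u,v} \geq G^*$, preserving the minimum. The harder half is that $H^*$ strictly drops. My plan here is to exploit connectedness of $G$: since $B^* > 0$, not all edges sit at length $H^*$, so the vertex sets $\{v : v \text{ is incident to some } H^*\text{-edge}\}$ and $\{v : v \text{ is incident to some edge shorter than } H^*\}$ both cover some vertices and together cover $V$; by connectedness they must intersect in some boundary vertex $v^*$. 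Such a $v^*$ has $|T_{v^*, w}| > minChunk_{v^*}$ on one of its $H^*$-edges and so initiates a rewind of a long edge in the very first round of the rewind phase. A round-by-round inductive argument, together with the $n$-round length of the rewind phase (which covers the diameter of $G$) and the $alreadyRewound$ bookkeeping that blocks redundant truncations, then propagates rewinds along every chain of $H^*$-edges so that each $H^*$-edge is truncated within the phase, forcing $H^*$ to drop by at least $1$. Combining the two halves gives $\Delta B^* \leq -1$ and hence $\Delta \phi \geq C_1 K \geq K$, completing the final sub-case.
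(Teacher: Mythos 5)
Your three-case structure mirrors the paper's proof (all-clear; some edge in MP state; all edges in "simulate'' but a length mismatch), and the mechanisms you identify are the right ones. However, two of your cases have genuine gaps in how they close the potential accounting.

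\textbf{Case 2 accounting.} You invoke Lemma~\ref{lem:potential-rises-overall-1} to claim that $\sum_{(u,v)}\bigl(\tfrac{\K}{m}G_{u,v} - \K\varphi_{u,v}\bigr)$ rises by at least $\K$ in the meeting-points phase, and then argue the rewind phase cannot "neutralise'' this because rewinds cannot increase $\varphi_{u,v}$. This tracks the wrong quantity. What rewinds \emph{do} is truncate transcripts, which lowers $G_{u,v}$: each party may drop up to one chunk per link, so $\tfrac{\K}{m}\sum G_{u,v}$ can fall by the full $\K$ during the rewind phase. You also never address $B^*$ in this case, which could a priori increase. With only a $+\K$ guarantee from the MP phase on the table, a $-\K$ loss to $G$ plus any movement in $B^*$ leaves nothing. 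Closing this case requires using the actual $5\K$ guarantee (Lemma~\ref{lem:network-potential-rises} with $c\geq 1$), showing explicitly that the rewind phase costs at most $\K$ from the $G$-terms, and showing that $G^*$ (hence $B^*$) does not move adversely, which the paper does carefully.

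\textbf{The case split is not exhaustive as stated.} You fix one unhappy party $u$ and branch on \emph{$u$'s} trigger. Your rewind-wave argument in the second sub-case --- showing $H^*$ drops by one --- requires \emph{every} link in $G$ to be in "simulate'' status, since rewind messages are discarded on any link in the "meeting points'' state and your wave can be blocked there. It is entirely possible for the fixed $u$ to have a minChunk trigger and no MP links, while a distant party $u'$ sits on an MP link that lies on the path your wave must traverse; your argument would then enter the rewind sub-case but fail to force $H^*$ down. The needed dichotomy is network-wide, not party-local: either \emph{some} link anywhere is in the MP state after the meeting-points phase (and the MP gain of $5\K$ absorbs the rewind cost), or \emph{no} link is (and then the wave propagates unobstructed). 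This is exactly how the paper defines its Cases 2 and 3.

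A minor remark on Case 1: you assert $\varphi_{u,v}=0$ on every edge. This is stronger than what Lemma~\ref{lem:no-err:comm} delivers and harder to certify from the pseudocode (the "simulate'' branch of Algorithm~\ref{alg:meeting-points} does not visibly reset all the bookkeeping feeding $WM$). The proof only needs, and only uses, that $\varphi_{u,v}$ does not \emph{increase}, which follows directly from Proposition~\ref{prop:status-simulate} and Claim~\ref{claim:simple-Hpot-claim}.
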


\begin{proof}
We consider the status of the network at the iteration according to the next three cases.

\paragraph{Case 1: At the end of the flag-passing phase, $netCorrect_u=1$ for every party $u$.}

Recall that since there were no errors or hash collisions, the fact that $netCorrect_u=1$ means that each party~$u$ had $status_u=1$ before the flag-passing phase. 
This in turn implies that 
for all $v, w \in N(u)$,  it holds that $|T_{u, v}| = |T_{u, w}|$.
Further, for any $v \in N(u)$, $T_{u,v}=T_{v,u}$, or otherwise the hashes would indicate a mismatch and the parties would have set $status=0$. Consequently, we have $G^* = H^*$ and $B^* = 0$.
The fact that $netCorrect_u=1$ for every party $u$, together with the fact that $B^*=0$, implies that all parties simulate the same chunk, and the absence of errors in the communication
implies that this simulation is done correctly.
Hence, each $T_{u,v}$ is extended correctly according to~$\Pi$.  This in turn implies that
$G_{u,v}$ increases for each $(u,v) \in E$, which causes $\phi$ to increase by~$\K$.

Next, we argue that none of the other terms of $\phi$ decrease.  We first argue that $B^*$ remains zero at the end of the iteration.
To this end, note that since all parties simulate one chunk in each of their pairwise transcripts, we still have the property that $|T_{u_1, v_1}| = |T_{u_2, v_2}|$ for all $(u_1, v_1) \in E$ and $(u_2, v_2) \in E$ after the simulation phase. Since there were no errors, we also have that $T_{u,v} = T_{v,u}$ for all $(u,v) \in E$. As noted before, this gives us that $B^* = 0$ after the simulation phase, and since there are no errors it remains zero after the rewind phase as well.

It remains to argue that $\varphi_{u,v}$ does not increase for any $(u,v) \in E$. 
By Proposition \ref{prop:status-simulate} we know that
 $\varphi_{u,v}$ does not increase in the meeting-points phase. Furthermore, it does not increase in the flag-passing, simulation or rewind phases either, by Claim \ref{claim:simple-Hpot-claim}.

Putting this all together, we have that each $G_{u,v}$ increases by one, $B^*$ does not change and $\varphi_{u,v}$ does not increase, which implies that the potential $\phi$ increases by at least~$\K$ overall, as desired.

\paragraph{Case 2: Some party $u$ has a neighbor $v \in N(u)$ s.t. $status_{u,v} = \text{``meeting points''}$ after the meeting-points phase.}

Since $status_{u,v} = \text{``meeting points''}$, $u$ has $status_u = 0$ after the meeting-points phase, and therefore, given the lack of errors, each party $x\in V$, will set $netCorrect_x=0$ after the flag-passing phase. 
Note that since none of the parties are simulating the next chunk during the simulation phase, it follows that $\phi$ does not change in the simulation phase.
Next note that the potential increases by at least $5K$ during the meeting-points phase (Lemma \ref{lem:network-potential-rises}).
Since the potential does not change during the flag-passing phase, it remains to argue that the potential function does not decrease by much during the rewind phase.

In the rewind phase, we may have parties that send rewinds. Even though these rewinds seem to take us in the right direction, they may cause a small decrease in some terms of the potential.
However, we argue that in the rewind phase the potential decreases by at most $\K$, and thus in total, $\phi$ increases by at least $5\K - \K = 4\K$, as desired.

To this end, first note that since we limit the number of truncations per link to at most one, it follows that $(\K / m)\sum_{(x,y) \in E} G_{x,y}$ can decrease by at most $\K$.
It remains to argue that $B^*$ and $\{\varphi_{x,y}\}_{(x,y)\in E}$ do not increase in the rewind phase.

The fact that $\varphi_{x,y}$ does not increase follows from Claim \ref{claim:simple-Hpot-claim}.  To argue that $B^*$ does not increase, note that since no party simulates in this iteration, $H^*$ does not increase. We claim a party~$x$ will never truncate a transcript $T_{x,y}$ such that $|T_{x,y}| = G^*$. Clearly $x$ will not send a rewind message to~$y$, since there is no $y^*$ such that $|T_{x,y^*}| < |T_{x,y}|$ by the definition of $G^*$. We also claim that $y$ will not send a rewind message to~$x$. If $|T_{y,x}| = |T_{x,y}| = G^*$, then this follows because there is no $x^*$ such that $|T_{y,x^*}| < |T_{y,x}|$. Otherwise if $|T_{y,x}| \neq |T_{x,y}|$, then since there were no hash collisions or errors in the meeting-points phase we conclude that $status_{y,x} = \text{``meeting points''}$. Therefore $y$ will not send a rewind message to $x$.

\paragraph{Case 3: At end of the meeting-points phase, $status_{u,v} = \text{``simulate''}$ for all $(u,v) \in E$, yet at the end of the flag-passing phase, some party~$u$ has $netCorrect_u=0$.}
Again, since $netCorrect_u=0$ for some party~$u$, 
there must be some party $v$ such that $status_v = 0$, and hence we have that $netCorrect_u=0$ for all $u \in V$.
No party simulates on the next simulation phase (they will only send $\perp$s), hence,
the potential does not change during the simulation phase.  
In addition,
the potential does not decrease in the meeting-points phase (Lemma \ref{lem:network-potential-rises}).  
Furthermore, the potential remains unchanged during the flag-passing phase.
Therefore, all that remains to show is that $\phi$ increases in the rewind phase by at least~$\K$.

  Recall that by Claim \ref{claim:simple-Hpot-claim}, $\varphi_{u,v}$ will not increase during the rewind phase for any $(u,v)$. Furthermore, $\sum G_{u,v}$ can decrease by at most $m$ in the rewind phase, which means $(\K/m)\sum G_{u,v}$ decreases by at most $\K$. Therefore, it suffices to show that $B^*$ decreases by 1, and therefore that $\phi$ increases by $C_1 \K - \K \geq \K$, since $C_1 \geq 2$.

To this end, we first show that $G^*$ does not decrease, and then show that $H^*$ decreases by~1.
For the former, note that a party $u$ will never issue a rewind to a party $v$ for which $|T_{u,v}| = G^*$, since this would mean that there is some party $w \in N(u)$ such that $|T_{u,w}| < G^*$, which contradicts the definition of $G^*$. Therefore, $G^*$ does not decrease.

To argue that $H^*$ decreases by~$1$, fix a party $u$ and a neighbor $v$ such that $|T_{u,v}| = H^*$. We argue that during the rewind phase party~$u$ will rewind this transcript by one chunk. The proof goes by induction on the distance between $u$ and a party whose $minChunk \ne H^*$.
To this end, let $S^* = \{v: \exists w \in N(v) : |T_{v,w}| < H^* \}$ be the set of parties that have some transcript below chunk~$H^*$. 
\begin{claim}
$S^*$ is non-empty.
\end{claim}
\begin{proof}
Indeed, given that all parties have $netCorrect_q=0$ after the flag-passing phase, despite all pairs $(q_1, q_2) \in E$ having $status_{q_1,q_2} = \text{``simulate''}$, it must hold that some party sees an inconsistency in the lengths of the simulated transcript with two of its neighbors (Line~\ref{step:minChunkCheck}). Namely, for some party~$q^*$, there are neighbours $w,w'$ such that $|T_{q^*,w}| \ne |T_{q^*,w'}|$.
It follows that $|T_{q_1,q_2}| = H^*$ cannot hold for all $(q_1,q_2)\in E$.
\end{proof}

Let $d(q,S^*)$ denote the shortest distance in the graph~$G$ between a party~$q$ and some party in~$S^*$, where $d(q,S^*) = 0$ if and only if $u \in S^*$. 
\begin{claim}
Party $u$ truncates $T_{u,v}$ after at most $d(u,S^*)+1$ rounds of the rewind phase.
\end{claim}
\begin{proof}
Note that if $d(u,S^*)=0$, i.e.,
$u \in S^*$, then $minChunk_u < H^*$ by the definition of $S^*$. So, in the rewind phase, $u$ will truncate $T_{u,v}$ (Line~\ref{step:truncate}). 

Next we claim that if $d(u,S^*) = j$ for some $j>0$ at the beginning of some round~$r$ in the rewind phase, then in the beginning of round~$r+1$ it holds that $d(u,S^*) = j-1$. 
Denote $u$ as $a_0$. Let $a_0, a_1, a_2, \ldots, a_j \in S^*$ be the vertices in a shortest path from $a_0$ to $S^*$.
Note that for any two consecutive parties along this path, $|T_{a_i, a_{i-1}}| = |T_{a_{i-1}, a_i}| = H^*$. This is true since $a_0,a_1,\ldots, a_{j-1}\notin S^*$, and since  and $status_{a_j, a_{j-1}} = status_{a_{j-1}, a_{j}} = \text{``simulate''}$, which means any two parties are consistent with their transcripts.

Since $a_{j}\in S^*$ we have that $minChunk_{a_j} < H^*$ and it follows that 
in round~$r$, party~$a_j$ sends a rewind message to~$a_{j-1}$ (Line~\ref{step:truncate}).
We stress that no rewind message has yet been sent on the link $(a_j, a_{j-1})$.
Indeed, if this were not the case, then we would have gotten $|T_{a_j, a_{j-1}}| < H^*$ already in a prior round where the rewind message took place. But this contradicts $a_{j-1} \notin S^*$ in the beginning of round~$r$.
Hence, by the end of round~$r$ we have that $|T_{a_j, a_{j-1}}| = |T_{a_{j-1}, a_j}| = H^*-1$
and thus $a_{j-1}\in S^*$. This means that $d(u, S^*)=j-1$ at the beginning of round~$r+1$.

By employing the same argument inductively we get that,
if at the beginning of the rewind phase $d(u,S^*)=j$, then after
$j$~rounds we have that $u\in S^*$, and after the $(j+1)-th$ round, party 
$u$ has truncated $T_{u,v}$ by at least one chunk, as needed.
\end{proof}
Showing that $H^*$ has decreased by 1 is now straightforward.
For any party $u$ we note that $d(u,S^*)$ can never exceed $n-1$, which is an upper bound of the diameter of~$G$. Since the rewind phase consists of $n$ rounds, after its $(n-1)$-th round, all parties are in~$S^*$, and by the end of the $n$-round of the rewind phase, all pairwise transcripts are of length at most $H^*-1$. This completes the proof of Lemma~\ref{lem:no-err:potential}.
\end{proof}

\subsubsection{Iterations with errors and hash collisions}
\label{sec:potential-rises-err-mp}
\begin{lemma}
\label{lem:errs-in-mp-and-fp}
 Let $\ell$ be the number of links that experienced either errors or hash collisions during a given iteration, and assume $\ell \geq 1$.
 Then the increase in the potential $\phi$ in this iteration is at least $\Omega(C_7 \ell \K)$,
 and the amount of communication in this iteration is at most~$O(\ell\K)$.
\end{lemma}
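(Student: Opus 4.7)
The plan is to establish the two claims separately, exploiting the fact that the $C_7\K\cdot\EHC$ term of $\phi$ grows by at least $C_7\ell\K$ in the iteration (each error or hash-collision on a link contributes at least one to $\EHC$), while every other term of $\phi$ can worsen by at most $O(\ell\K)$. For the potential bound I will then use that $C_7$ is sufficiently larger than $C_1,\dots,C_6$; for the communication bound I will isolate the simulation phase and show that the number of distinct chunks being simulated is $O(\ell+1)$.

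For the potential increase, I intend to show that the combined decrease of all non-$\EHC$ terms in $\phi$ during the iteration is at most $\beta\ell\K$ for some constant $\beta$ depending only on $C_1,\dots,C_6$. Transcripts are only truncated in the meeting-points and rewind phases, so $\sum G_{u,v}$ can drop only there: in the rewind phase each link is truncated at most once because of the $alreadyRewound_{u,v}$ flag, and in the meeting-points phase any drop is absorbed by the accompanying drop in $\varphi_{u,v}$ via Lemma~\ref{lem:network-potential-rises} and Proposition~\ref{prop:status-simulate}, charging at most $O(\K)$ per error or hash collision on that link. The rises of $\varphi_{u,v}$ in the flag-passing, simulation and rewind phases are bounded by $O(1)$ per error or collision on the link via Claim~\ref{claim:simple-Hpot-claim}, and $B^*$ can rise only by the number of corrupted simulation transmissions, contributing $O(\ell)$ to $-C_1\K B^*$. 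Summing, the non-$\EHC$ decrease is $\beta\ell\K$, while $C_7\K\cdot\EHC$ grows by at least $C_7\ell\K$, so the net change is $(C_7-\beta)\ell\K=\Omega(C_7\ell\K)$, as required.

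For the communication bound, Proposition~\ref{prop:easy-comm-bound} already handles the meeting-points, flag-passing and rewind phases, which together contribute $O(\K)$ independently of errors. It therefore remains to bound the simulation phase by $O(\ell\K)$. The bits sent in simulation are at most $5\K\cdot D+2m$, where $D$ is the number of distinct chunk-indices simulated across the network and the additive $2m$ accounts for $\perp$ symbols. The key claim is $D\le O(\ell+1)$, which together with $\ell\ge1$ gives the desired bound. Let $S=\{u:netCorrect_u=1\}$ and let $\mathcal{T}$ be the flag-passing spanning tree. For any $\mathcal{T}$-edge $(u,v)$ with $u,v\in S$ simulating different chunks we must have $|T_{u,v}|\ne|T_{v,u}|$, so the meeting-points hash exchange on $(u,v)$ would have returned $status_{u,v}=\text{``meeting points''}$ unless it was corrupted or hit a hash collision; hence there are at most $\ell$ such ``disagreeing'' $\mathcal{T}$-edges. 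The number of ``boundary'' $\mathcal{T}$-edges between $S$ and $V\setminus S$ is also $O(\ell)$: absent any flag-passing corruption, every party would compute the same $netCorrect$ value, and each corrupted flag-passing transmission flips the computed flag for at most one rooted subtree, so it introduces $O(1)$ additional boundary edges. Consequently $\mathcal{T}[S]$ splits into $O(\ell)+1$ components, within each of which agreeing edges force a common simulated chunk, so $D\le O(\ell+1)$.

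The main obstacle is the bookkeeping in the potential step: the meeting-points subroutine is interleaved over many iterations and its internal state $(k_{u,v},E_{u,v},mpc1_{u,v},mpc2_{u,v})$ can evolve in subtle ways under noise, so I must carefully invoke the per-link Haeupler-style analysis of Appendix~\ref{sec:meeting-points} along with the across-link aggregation of Lemma~\ref{lem:network-potential-rises} to cleanly charge the $\varphi_{u,v}$ and $G_{u,v}$ changes to $\EHC$. Once that accounting is in place, the bounds on $B^*$ and on the number of boundary edges of $S$ in $\mathcal{T}$ are short combinatorial observations about the flag-passing protocol and its two sweeps up and down $\mathcal{T}$, and the final inequality $(C_7-\beta)\ell\K=\Omega(C_7\ell\K)$ follows immediately from the choice of constants.
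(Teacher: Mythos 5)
Your proposal is correct and follows essentially the same two-part strategy as the paper: Proposition~\ref{prop:easy-comm-bound} plus a spanning-tree component count for the communication bound, and a phase-by-phase accounting in which the $C_7\K\cdot\EHC$ gain dominates all other terms for the potential bound. The one imprecision worth flagging --- an upward flag-passing corruption does \emph{not} flip ``at most one rooted subtree'' but can flip the root's final computed flag and hence every party's $netCorrect$ uniformly --- does not harm the argument, since such a uniform flip creates no new boundary edges between $S$ and $V\setminus S$; only downward corruptions do, matching the paper's attribution of each boundary edge in Claim~\ref{claim:communication-bound} to a downward flag-passing error.
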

While $C_7$ is a constant, we include it in the $\Omega$ to indicate that, by making $C_7$ larger,
we can overwhelm the constant hidden in the $\Omega$. This will be useful in proving Lemma~\ref{lem:phi-rises-enough}.

\begin{proof}%
Let $\ell_1$ denote the number of links with errors and hash collisions in the meeting-points phase, let $\ell_2$ denote the number of links with errors in the flag-passing phase, let $\ell_3$ denote the number of links with errors in the simulation phase, and let $\ell_4$ denote the number of links with errors in the rewind phase. Then $\ell \leq \ell_1+\ell_2+\ell_3+\ell_4$.

Let us begin with bounding the communication in this iteration.
By Proposition \ref{prop:easy-comm-bound}, the communication in all the phases except for the simulation phase are bounded by~$O(\K)$, and we are left to bound the communication in the simulation phase.
As explained in the proof of Lemma~\ref{lem:no-err:comm}, the amount of communication in the simulation phase is bounded by $5\K$ times the number of distinct chunk numbers being simulated in the network (plus $2m$ for $\perp$s). We show that this number is proportional to the number of links that experienced errors or hash collisions during that same iteration.

Let $\cal{T}$ be the spanning tree used for the flag passing in Algorithm~\ref{alg:flag-passing}.
Consider the subgraph $\mathcal{H}$ of~$\cal{T}$ induced by only keeping an edge $(u,v) \in \mathcal{T}$ if $netCorrect_u = netCorrect_v = 1$ and $|T_{u,v}| = |T_{v,u}|$.
Recall that $netCorrect_u = 1$ implies that $status_u = 1$,
and thus for any $u$ with $netCorrect_u=1$ we know that $|T_{u,v}| = |T_{u,w}|$ for all $w\in\N(u)$ and in particular for all $w$ such that $(u,w) \in \mathcal{H}$. 
By a straightforward induction, one can argue that for any pair of parties~$u$ and~$x$ that are in the same connected component of $\mathcal{H}$, it holds that $|T_{u,v}| = |T_{x,y}|$ for any $v,y$ s.t. $(u,v) \in \mathcal{H}$ and $(x,y) \in \mathcal{H}$. Hence, in any single connected component of $\mathcal{H}$, at most one chunk of $\Pi$ is being simulated. 
Note that there might be components in $\mathcal{H}$ with \emph{no} chunk being simulated.  Each such connected component consists of a single isolated variable~$u$ such that  $netCorrect_u=0$.

\begin{claim}
\label{claim:communication-bound}
Let $S$ denote the set of connected components in $\mathcal{H}$ such that $netCorrect_u = 1$, and let~$s=|S|$ be the number of components in~$S$.  
Then,
\[
s -1 \le \ell_1 + \ell_2 .
\]
\end{claim}

\begin{proof}
We claim that there are at least $s-1$ edges $(u,v)$ in $\mathcal{T} \setminus \mathcal{H}$ such that $v$ is in a component in $S$ and $\ell(u) < \ell(v)$, where recall that $\ell(u)$ is defined to be the distance of~$u$ from $\rho$, which is the root of~$\mathcal{T}$, plus~$1$. Towards seeing this, note that each connected component in $S$ is a subtree of $\mathcal{T}$, and since they are disjoint,  at least $s-1$ many of them do not have $\rho$ as the root of the subtree. Let $v$ be such a root, and let $u$ be its parent in $\mathcal{T}$. This satisfies the desired conditions.

Fix such an edge $(u,v)$. We argue that there was an error or hash collision on the link $(u,v)$ in either the meeting-points or flag-passing phases, establishing the claim.
Since $(u,v)$ is an edge in $\mathcal{T}\setminus \mathcal{H}$, we know that either $netCorrect_u = 0$, $netCorrect_v = 0$, or $|T_{u,v}| \neq |T_{v,u}|$; otherwise, $(u,v)$ would have been in~$\mathcal{H}$. 
However, we know that $netCorrect_v = 1$, since $v$ is in a connected component in $S$ and by the definition of~$S$. 
Hence, it can either be that $netCorrect_u = 0$ or that $|T_{u,v}| \neq |T_{v,u}|$.

If $|T_{u,v}| \neq |T_{v,u}|$,
then there must have been an error or hash collision in the meeting-points phase, otherwise we would have $status_v=0$ implying $netCorrect_v = 0$, which is a contradiction. 
If, on the other hand, $netCorrect_u = 0$ holds, then there must have been an error in the downward part of the flag-passing phase, since $netCorrect_v = 1$, so clearly $v$ did not correctly receive the flag that $u$ sent.
\end{proof}

As argued above, the communication during the simulation phase is bounded by $s \cdot 5\K + 2m$:
 each connected component in~$S$ jointly simulates a single chunk, and components outside of~$S$ (which consists of a single party $u$ such that $netCorrect_u = 0$) do not simulate; the $2m$ term comes from potential $\bot$ messages.
The above claim implies that $s\le \ell_1+\ell_2 +1 =O(\ell)$, leading to communication of $O(\ell \K)$ in the simulation phase. Since all the other phases have communication $O(\K)$, the communication in the entire iteration is as claimed.

\medskip
To finish the proof of the Lemma it remains to bound the increase in the potential~$\phi$. 
Consider the various phases in the iteration, and the terms of~$\phi$ given by Eq.~\eqref{eqn:potential}.
\begin{itemize}
\item
\textbf{Meeting Points:}
Lemma~\ref{lem:network-potential-rises} guarantees that the potential $\phi$ goes up by at least $5c \K + 0.4C_7 \ell_1  \K$, where $c$ is the number of pairs of parties $(u,v)$ such that $(u,v) \in E$ and $status_{u,v}$ or $status_{v,u}$ is $\text{``meeting points''}$ at the end of the Meeting Points phase.

\item \textbf{Flag Passing:} No direct potential change happens in this phase, other than any increase in potential caused by an error induced by the adversary. So the potential in this phase increases by at least $C_7 \ell_2 \K$.

\item \textbf{Simulation:}
The term $\sum G_{u,v}$ cannot decrease in the simulation phase, since no transcript is being truncated in this phase.
Claim \ref{claim:simple-Hpot-claim} establishes that $\varphi_{u,v}$ increases by $C_3$ in the simulation phase only if there was an error on the link $(u,v)$ somewhere in this iteration. Otherwise, it does not increase. Hence, the term $-\sum \K \cdot \varphi_{u,v}$ decreases by at most $\K C_3 \ell$.
The third term,$-C_1 \K B^*$, decreases by at most $\K C_1$.
Indeed, $B^* = H^* - G^*$ increases by at most 1 in the simulation phase, since $H^*$ can increase by 1 but $G^*$ cannot decrease, since no party truncates in this phase.
The fourth term, $C_7 \K \cdot \EHC$, increases by at least $C_7 \ell_3 \K$.
Thus in total, the potential function increases during the simulation phase by at least
\(  - C_3 \ell \K -  C_1 \K + C_7 \ell_3 \K\).
\item \textbf{Rewind:}
The term $(\K / m)\sum G_{u,v}$ can decrease in the rewind phase by at most~$\K$, since each party rewinds a transcript at most one chunk. The second term $-\sum \K \cdot \varphi_{u,v}$ decreases by at most $\K C_3 \ell$, again, by Claim \ref{claim:simple-Hpot-claim}.
The third term $-C_1 \K B^*$ decreases by at most $C_1\K$ since $B^*$ increases by at most 1 in the rewind phase: $G^*$ can decrease by at most one (since no party rewinds more than one chunk) and $H^*$ cannot increase.
The fourth term $C_7 \K \EHC$  increases by $C_7 \ell_4 \K$.
\end{itemize}

Putting it all together, we get that in the entire iteration $\phi$ increases by at least
\begin{align*}
&5c \K + 0.4C_7 \ell_1  \K   %
+ C_7 \ell_2 \K   %
- C_3 \ell \K -  C_1 \K + C_7 \ell_3 \K   \\
& \phantom{5c \K} %
-\K - C_3 \ell \K -  C_1 \K + C_7 \ell_4 \K  %
\\
&\ge 5c \K +
0.4 C_7(\ell_1 + \ell_2+\ell_3+\ell_4)\K -
2 C_3 \ell \K - 2C_1 \K - \K \\
&\ge 5c \K + (0.4 C_7 - 2C_3)\ell \K - (2C_1 +1)\K
\end{align*}
where the final inequality follows from the fact that $\ell_1 + \ell_2 + \ell_3+\ell_4 \ge \ell$. Since $\ell \geq 1$ (by our assumption), we can take $C_7$ to be sufficiently large compared to $C_3$ and $C_1$, and get that the change in $\phi$ is $\Omega(C_7 \ell \K)$, as desired.
\end{proof}

\subsubsection{Putting it all together}
We can finally complete the proof of Lemma~\ref{lem:phi-rises-enough}.
\begin{proof}[Proof of Lemma~\ref{lem:phi-rises-enough}]
We first recall that Lemma~\ref{lem:no-err:potential} establishes that, in the absence of errors and hash collisions, the potential increases by~$\K$ and Lemma~\ref{lem:no-err:comm} bounds the total communication by~$O(\K)$. Now assume that there is at least one error or hash collision in the iteration, so $\ell \geq 1$. Then Lemma~\ref{lem:errs-in-mp-and-fp} shows that the potential increases by $\Omega(C_7 \ell \K)$, and that the communication in the iteration is at most $O(\ell \K)$.
By taking $C_7$ to be sufficiently large, we can see that $\phi$ increases by at least~$\K$ while the communication is bounded by
$O((\ell+1) K)$, as required.
\end{proof}

\changelocaltocdepth{2}   %
\subsection{Bounding  hash collisions and communication}
\label{sec:bounding-collisions}
In this section we prove that the number of hash collisions throughout the entire simulation is bounded by $O(\eps |\Pi|)$ with high probability, where $|\Pi|$ is the number of chunks in the original, noiseless protocol.

The main lemma we prove in this subsection is the following. We remind the reader that the number of channel errors an oblivious adversary makes is deterministic, and hence does not depend on any randomness used in Algorithm~\ref{alg:robust-protocol} (Remark~\ref{rem:additive}).
\begin{lemma}
\label{lem:bounding-collisions-main}
Let $\eps > 0$ be a sufficiently small number, independent of $m$ and $|\Pi|$.
Suppose we have a hash function $h$ in Algorithm~\ref{alg:robust-protocol} with hash collision~$p$, where
$p < \frac{1}{10C_6}$.
Suppose there is an oblivious adversary, and 
let $\Err$ denote the number of channel errors the adversary makes. Let $\CC$ denote the total communication in the entire execution of Algorithm \ref{alg:robust-protocol}, and $\EHC$ denote the joint number of errors and hash collisions during that execution. Let $k$ be an integer such that $k \geq 10C_6$ and $k = O(1 / \eps)$. Then, with probability $1 - p^{\Omega(k \eps |\Pi|)}$, the event   
\begin{align*}
&\big(
\CC \leq 200\alpha |\Pi| \K 
\quad\text{ and } \quad
\EHC \leq (k+1) \cdot (200 \alpha \eps) |\Pi| \big )  
	\\ & \text{ or } \Err > \frac{\eps}{\K} \CC
\end{align*}
holds,
where $\alpha$ is the constant multiplying the communication complexity of an iteration in Lemma~\ref{lem:phi-rises-enough}.
\end{lemma}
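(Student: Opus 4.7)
My plan is to reduce the problem to a concentration bound on the number of hash collisions. Summing the per-iteration estimate from Lemma~\ref{lem:phi-rises-enough} over the $I = 100|\Pi|$ iterations of \Alg yields
\[
\CC \;\le\; \alpha \K (I + \EHC) \;=\; 100\alpha|\Pi|\K + \alpha \K \cdot \EHC,
\]
so the bound $\CC \le 200\alpha|\Pi|\K$ would follow from $\EHC \le 100|\Pi|$, and therefore a fortiori from the target bound $\EHC \le (k+1)(200\alpha\eps)|\Pi|$ throughout the admissible range of $k$. Writing $\EHC = \Err + H$ and assuming we are not in the lemma's alternative conclusion (so $\Err \le (\eps/\K)\CC$), the established $\CC$ bound yields $\Err \le 200\alpha\eps|\Pi|$. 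The heart of the proof is therefore to show that $H \le k\cdot(200\alpha\eps)|\Pi|$ with probability at least $1 - p^{\Omega(k\eps|\Pi|)}$.

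To bound $H$ I would exploit obliviousness: the adversary's noise pattern is fixed in advance, independently of the seeds $\{S_{i,u,v}\}$ drawn afresh in Algorithm~\ref{alg:initialize-state}. Every hash test on link $(u,v)$ in iteration $i$ is seeded by $S_{i,u,v}$, and by Lemma~\ref{lem:unifHash}, conditioned on the full history of the execution, that test produces a collision with probability at most~$p$ when the two hashed inputs differ, and matches deterministically (not counted as a collision) otherwise. This conditional-independence structure shows that $H$ is stochastically dominated by a Binomial random variable $\mathrm{Bin}(N, p)$, where $N$ is the number of hash tests whose inputs actually differ. A standard binomial tail bound then yields $\Pr[H \ge t] \le \binom{N}{t}p^t$, which becomes $p^{\Omega(t)}$ once $N \le t\cdot p^{-\Theta(1)}$; taking $t = k(200\alpha\eps)|\Pi|$ then gives the stated failure probability.

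The main obstacle is obtaining a good enough bound on $N$. A naive count of all hash exchanges is $O(m|\Pi|)$, which is much too loose when $p$ is only a small constant. To sharpen this I would observe that a hash test has distinct inputs only when $T_{u,v}\ne T_{v,u}$, i.e., there is an unresolved inconsistency on that link; and every such inconsistency is ultimately traceable either to a prior channel error or to a prior hash collision that prevented detection. This yields a self-referential estimate of the form $N = O(\Err + H)$, which together with the budget bound on $\Err$ and an induction or stopping-time argument on $H$ feeds back into the Chernoff step with the correct exponent. Putting it all together delivers $\EHC \le (k+1)(200\alpha\eps)|\Pi|$ and $\CC \le 200\alpha|\Pi|\K$ on the good event, as required.
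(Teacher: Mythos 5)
Your high-level plan -- exploit obliviousness to get per-iteration collision probability $p$, observe that the number of hash tests with differing inputs satisfies a self-referential bound of the form $N=O(\Err+H)$, and then invoke a concentration inequality -- is essentially the strategy the paper pursues. But there are two concrete problems.

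First, the ``a fortiori'' step is false. You argue that $\CC \le 200\alpha|\Pi|\K$ follows from $\EHC \le 100|\Pi|$, and then that this is implied by the target bound $\EHC \le (k+1)(200\alpha\eps)|\Pi|$. But over the admissible range, $k$ may be as large as $1/\eps^2$, in which case $(k+1)(200\alpha\eps)|\Pi| \approx 200\alpha|\Pi|/\eps \gg 100|\Pi|$. So the $\EHC$ bound in the lemma does \emph{not} imply $\EHC \le 100|\Pi|$, and the communication bound must be proved separately rather than inherited a fortiori. The paper handles this by separately showing (Lemma~\ref{lem:errs-bounded-obliv}) that with high probability either $\Err \le 200\alpha\eps|\Pi|$ or $\Err > (\eps/\K)\CC$; the $\CC$ bound then comes out of Lemma~\ref{lem:cc-bound}, not from the $\EHC$ bound.

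Second, and more seriously, you identify but do not resolve the circularity that is the actual content of the proof. You write $N = O(\Err + H)$ and then propose an ``induction or stopping-time argument on $H$'' to close the loop. That hand-waved step \emph{is} the proof. The naive binomial tail $\Pr[H\ge t]\le\binom{N}{t}p^t$ does not apply as stated because $N$ is a random variable that grows with $H$ itself: after a collision the parties stay out of sync for more iterations, which creates new opportunities to collide. The paper resolves this by introducing the deterministic potential process $\psi_{u,v}$ (Lemma~\ref{lem:psi-lb}) that dominates $\varphi_{u,v}$, increases by $O(C_6)$ on a collision or error, and strictly decreases by $5$ on every other dangerous iteration. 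Nonnegativity of $\psi$ then forces a \emph{constant} (not $p$) lower bound on the empirical collision fraction among the dangerous triples whenever their number is large relative to $\Err$ (Lemma~\ref{lem:bound-dangerous-obliv}), after which the KL-divergence Chernoff bound in Claim~\ref{claim:additive-chernoff} supplies the exponent $p^{\Omega(k\Err)}$. This is a path-wise, deterministic constraint, not a conditional-expectation bound, and it is what breaks the circularity your sketch leaves open. Your proposal would need to supply an equivalent Lyapunov-style argument (together with the correct union over dangerous triples with and without errors) to be complete.
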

\ran{reworded similar to 4.1}

\subsubsection{Overview and A Note on Oblivious Adversaries}
\label{sec:bound-hash-overview}
We will use hash functions with constant collision probability $p = 2^{-\Theta(\K/m)}$, which is far higher than the adversarial error rate of $\varepsilon / \K$ when $\K = m$. Despite this, we can still bound the number of hash collisions that occur in the protocol overall by $O(\eps |\Pi|)$. This will follow from the observation that hash collisions are \emph{one-sided}---they can only happen when the transcripts being hashed are different. Since the meeting points protocol lets the parties correct their errors in relatively few steps, there will be few opportunities for hash collisions. 
A similar approach is taken in~\cite{haeupler14} in the two-party setting.

Now we give more details. Fix two parties $u$ and $v$. Note that $\varphi_{u,v}$ roughly measures how many hashes must be passed between the two parties to get back to a consistent transcript, with the property that $u$ and $v$ have a consistent transcript when $\varphi_{u,v} = 0$ (Proposition \ref{prop:phi-nonnegative}). The main idea is that the potential function $\varphi_{u,v}$ can only increase by at most a constant during any single exchange of meeting points, even in the presence of errors or hash collisions. Furthermore, in the absence of errors or hash collisions, it decreases by some constant. Finally, if $\varphi_{u,v} = 0$, then it can only increase above 0 if an error is introduced between $u$ and $v$, because hash collisions do not occur when the transcripts match.

The main approach of this section is to argue that $\varphi_{u,v}$ should not be nonzero too often. For intuition, suppose the adversary starts by making some small number of errors, which makes $\varphi_{u,v}$ equal to some number $N$. Then the number of iterations in which $\varphi_{u,v}$ is nonzero will be small as long as hash collisions happen infrequently enough that the resulting increase in $\varphi_{u,v}$ does not outweigh the decrease of $\varphi_{u,v}$ in a typical iteration, where a hash collision does not happen. This will follow from the independence of hash collisions across iterations and links of the network. The independence of hash collisions is a corollary of having an oblivious adversary, which we now describe further.

\gnote{Read here 2021}\ran{this entire part looks trivial to me. But, oh well. Another page to the 50.}
An oblivious adversary places their errors before seeing the execution of the protocol. What will specifically be useful to us is that they place their errors without knowing the randomness~$S$ used to seed the hash function\footnote{Indeed, simply hiding the randomness from the adversary is enough to get all of our results.}. Hence, the transcripts at the beginning of iteration $i$ are a function of the parties' inputs, the adversary's errors through iteration $i-1$, and the randomness used by the parties in iterations $1, \ldots, i-1$. The random seed used by pair of parties $(u,v) \in E$ and each iteration $i \in [100|\Pi|]$ is independent of all of these things; so the seed is selected independently from the transcript it is used to hash, and the events of hash collisions are independent. We briefly formalize this now. 

\begin{observation}
\label{obs:oblivious}
Suppose the adversary is oblivious. Fix any iteration $i$, and any link $(u,v) \in E$.
Let $S_{i,u,v}$ denote the random seed sampled for the link $(u,v)$ in iteration $i$, conditioned on an arbitrary setting of the transcripts at the beginning of the iteration $T_{u,v}(i)$ and $T_{v,u}(i)$. The (conditional) distribution of $S_{i,u,v}$ is uniform. 
\end{observation}
\begin{proof}
Conditioned on any fixing of all the random strings used in iterations $1, \ldots, i-1$, the seed $S_{i,u,v}$ is a uniformly random bit string; this follows because $S_{i,u,v}$ is sampled independently of all the previous seeds.

The seeds in previous iterations also uniquely determine $T_{u,v}(i)$ and $T_{v,u}(i)$, as fixing these strings eliminates all randomness of the protocol in iterations $1, \ldots, i-1$. This is where we use the fact that the adversary is oblivious; if they were not oblivious (that is, they have access to all the seeds that will be used throughout the protocol before it even happens; see~\cite{GKR-2} for details), they could choose their errors as a function of the random seed $S_{i,u,v}$, and so $T_{u,v}(i)$ and $T_{v,u}(i)$ could have a dependence on $S_{i,u,v}$, rather than just the seeds from iterations $\leq i-1$.

By summing over all settings to the randomness in iterations $1, \ldots, i-1$ that lead to the same settings for $T_{u,v}(i)$ and $T_{v,u}(i)$, we conclude the observation.
\end{proof}

This observation will let us bound the number of hash collision with high probability by using a Chernoff bound. We note that the only property of an oblivious adversary we require here is that the error pattern is independent of the seeds used throughout the protocol. Hence, the results in this section hold for \emph{any} oblivious adversary, regardless of whether the adversary is additive or fixing (see Section~\ref{sec:preliminaries} for the definitions of these adversaries).

\subsubsection{Proof of Lemma~\ref{lem:bounding-collisions-main}}
In analogy with the terminology of dangerous rounds from~\cite{haeupler14}, we define the notion of dangerous triples as follows.

\begin{definition}
Let $i$ be an iteration of \Alg, and let $u$ and $v$ be parties such that $(u,v) \in E$. Call the triple $(i,u,v)$ \emph{dangerous} if $B_{u,v} > 0$ at the beginning of iteration $i$.
\end{definition}

Now we state the lemma that we will prove in this section, which will be the main workhorse in proving Lemma~\ref{lem:bounding-collisions-main}.

\begin{lemma}
\label{lem:hash-collision-bound-obliv}
Let $\eps > 0$ be a sufficiently small number independent of $m$ and $|\Pi|$. Suppose an oblivious adversary, and
suppose the hash collision probability of $h$ is $p$ such that $p < \frac{1}{10C_6}$ in Algorithm \ref{alg:meeting-points}. Let $\Err$ denote the number of errors the adversary makes, $\CC$ denote the total communication in 
\Alg,
and let $D$ denote the number of dangerous triples $(i,u,v)$. Let $k$ be a an integer such that $k \geq 10C_6$ and $k = O(1 / \eps)$. Then with probability $1 - p^{\Omega(k \eps |\Pi|)}$, the event 
\begin{align*}
&\big (\CC \leq 200\alpha |\Pi| \K \quad \text{and}  \quad D \leq k \cdot (200 \alpha \eps) |\Pi| \big) \\ &\text {or }\Err > \frac{\eps}{\K} \CC
\end{align*}
holds, where $\alpha$ is the constant multiplying the communication complexity of an iteration in Lemma~\ref{lem:phi-rises-enough}.
\end{lemma}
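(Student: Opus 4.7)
The plan is to prove Lemma~\ref{lem:hash-collision-bound-obliv} by establishing three things in sequence: (i) hash collisions are one-sided and (due to obliviousness) conditionally independent with probability at most $p$; (ii) the number of dangerous triples $D$ is controlled by $\mathsf{Err}+X$, where $X$ is the total number of hash collisions; (iii) a Chernoff-style tail bound lets us bootstrap to bound both $X$ and $D$ jointly, and then Lemma~\ref{lem:phi-rises-enough} converts this into the claimed bound on $\CC$. The probability $p^{\Omega(k\eps|\Pi|)}$ will come from a standard computation $\binom{N}{t}p^{t}\le (eNp/t)^{t}$ applied with $N=\Theta(m|\Pi|)$ and $t=\Theta(k\eps|\Pi|)$.

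For step (i), I would condition on the history of the execution up to the beginning of iteration $i$. By the definition of $\varphi_{u,v}$ and the identity $B_{u,v}\le \varphi_{u,v}$ (Proposition~\ref{prop:phi-nonnegative}), a hash collision on link $(u,v)$ at iteration $i$ can only occur when $T_{u,v}(i)\ne T_{v,u}(i)$, i.e.\ when $(i,u,v)$ is a dangerous triple. Since the seed $S_{i,u,v}$ is freshly sampled at iteration $i$ independently of both the prior history and (by obliviousness of the adversary) the noise pattern, the conditional probability of a hash collision on link $(u,v)$ in iteration $i$ given the past is at most $p\cdot \mathbf{1}[(i,u,v)\text{ dangerous}]$. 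This makes $X$ stochastically dominated by $\mathrm{Bin}(D,p)$ with respect to the natural filtration, so a Chernoff bound (or direct union bound over subsets of $[100|\Pi|]\times E$ of size $t$) gives $\Pr[X\ge t] \le p^{\Omega(t)}$ whenever $t$ exceeds, say, $10\,p\,D$.

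For step (ii), I would use Claim~\ref{claim:simple-Hpot-claim} and Lemma~\ref{lem:potential-rises-overall-1} to track $\varphi_{u,v}$. Concretely, $\varphi_{u,v}$ starts at $0$, increases by at most an additive constant ($\le C_3$ in simulation/rewind, $\le C_6$ in meeting points) per error on the link $(u,v)$ or per hash collision on that link, and decreases by at least a positive constant on any meeting-points iteration in which neither error nor collision occurs and $\varphi_{u,v}>0$. Summing over all links and all iterations, the total number of dangerous iterations satisfies $D \le C(\mathsf{Err}+X)$ for some absolute constant $C$ that depends on the $C_i$'s. Plugging step (i)'s bound $X \le 10pD + t$ with $p < 1/(10C_6)$ (so $10 p C < 1/2$, absorbing the $pD$ into $D/2$), yields $D \le 2C\,\mathsf{Err} + 2C t$.

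For step (iii), I combine with Lemma~\ref{lem:phi-rises-enough}, which gives $\CC \le \alpha\K(T + \EHC) \le \alpha\K(100|\Pi| + \mathsf{Err}+X)$ since there are at most $100|\Pi|$ iterations. Under the assumption $\mathsf{Err}\le (\eps/\K)\CC$ (otherwise we are in the "otherwise" alternative), this becomes $\CC \le 100\alpha|\Pi|\K + \alpha\eps\CC + O(\alpha\K t)$, which for sufficiently small $\eps$ closes to $\CC \le 200\alpha|\Pi|\K + O(\alpha\K t)$. Setting $t = \Theta(k\eps|\Pi|)$, the failure probability is $p^{\Omega(k\eps|\Pi|)}$, and the resulting bounds are exactly $\CC \le 200\alpha|\Pi|\K$ and $D \le k\cdot 200\alpha\eps|\Pi|$, as required. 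The main obstacle I anticipate is justifying the conditional-independence claim in step (i) cleanly, because $D$ itself is a random variable depending on the same seeds that produce hash collisions; the martingale formulation (probabilities conditioned on the iteration-by-iteration history) rather than a naive Binomial$(D,p)$ is the right fix, and the constant $1/(10C_6)$ on $p$ is chosen precisely so that the bootstrap in step (ii) closes.
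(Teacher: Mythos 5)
Your overall strategy — one-sided hash collisions, a potential-function inequality that pins down the number of dangerous triples in terms of errors plus collisions, then a Chernoff-style tail bound — is indeed the same route the paper takes. Step~(ii) is essentially a paraphrase of the paper's surrogate process $\psi_{u,v}$ and its nonnegativity (Lemma~\ref{lem:psi-lb}, Eq.~\eqref{eq:bound-dangerous-obliv-1}): summing the increments/decrements of $\varphi_{u,v}$ gives exactly $D \le O(\Err + X)$, which is what the paper extracts from $\psi(100|\Pi|+1)\ge 0$. Step~(i) is the same one-sidedness observation plus obliviousness. So the approach is not a genuinely different route; the main difference is in how the two concentration inequalities are deployed, and that is where your proposal has a real gap.

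In step~(iii) you try to derive both conclusions of the lemma from a \emph{single} deviation parameter $t=\Theta(k\eps|\Pi|)$. From $\CC \le \alpha\K(100|\Pi| + \Err + X)$ and the bootstrap $X\lesssim \Err + t$, you arrive at $\CC \le 200\alpha|\Pi|\K + O(\alpha\K t)$, and then simply drop the $O(\alpha\K t)$ term. But with $t = \Theta(k\eps|\Pi|)$ that term is $\Theta(\alpha\K k\eps|\Pi|)$, which for the allowed range $k$ up to $1/\eps^2$ is as large as $\Theta(\alpha\K|\Pi|/\eps)$ and completely dominates $200\alpha|\Pi|\K$. So your derivation does \emph{not} yield $\CC\le 200\alpha|\Pi|\K$; it only yields the much weaker $\CC = O(k\eps|\Pi|\K)$. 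The paper avoids this by decoupling the two conclusions: it first bounds $\Err$ alone (Lemma~\ref{lem:errs-bounded-obliv}, via Lemma~\ref{lem:cc-bound}) — this is a Chernoff application at deviation $\Theta(|\Pi|)$, independent of $k$ — and then, conditioning on $\Err \le 200\alpha\eps|\Pi|$, bounds $D$ by $k\Err$ (Lemma~\ref{lem:bound-dangerous-obliv}) with the $p^{\Omega(k\eps|\Pi|)}$ tail, and takes a union bound. In other words, the $\CC$ bound and the $D$ bound come from two separate tail events at \emph{different} thresholds, not from the same $t$. Your argument would need the same two-scale structure (e.g.\ a small $t_1=\Theta(|\Pi|)$ to close the $\CC$ self-bound, and a separate $t_2=\Theta(k\eps|\Pi|)$ for the $D$ bound, with the $\Err$ bound fed forward), and the union of the two failure probabilities is then controlled using the hypothesis $k\le 1/\eps^2$.

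A secondary point, which you flag yourself: saying $X$ is ``stochastically dominated by $\mathrm{Bin}(D,p)$'' and then writing a tail bound ``$\Pr[X\ge t]\le p^{\Omega(t)}$ whenever $t\ge 10pD$'' is not well-posed when $D$ is itself a random quantity that depends on the collision outcomes being bounded. The cleaner formulation is exactly the one the paper uses in Lemma~\ref{lem:bound-dangerous-obliv}: assume $|\D|>k\Err$; the nonnegativity of $\psi$ then \emph{deterministically} forces $\sum_{\wtD} X_{i,u,v} \ge \frac{4}{5C_6+5}|\wtD|$, and the Chernoff bound (Claim~\ref{claim:additive-chernoff}) is applied to the event that at least a constant fraction of the conditionally-Ber$(p)$ indicators are $1$, which sidesteps the circularity rather than acknowledging it as an obstacle.
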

Note that a hash collision can only occur between $u$ and $v$ in an iteration $i \in [100 |\Pi|]$ when $(i,u,v)$ is a dangerous triple, by definition. Hence, the proof of Lemma~\ref{lem:bounding-collisions-main} from Lemma~\ref{lem:hash-collision-bound-obliv} follows easily:
\begin{proof}[Proof of Lemma~\ref{lem:bounding-collisions-main}]
Suppose $\Err \leq \frac{\eps}{\K} \CC$, and recall that $\log(1/p) = \Theta(\K / m)$. By Lemma~\ref{lem:hash-collision-bound-obliv}, the communication of \Alg is upper bounded by $200\alpha |\Pi|\K$ with probability $1 - p^{\Omega(k \eps |\Pi|)}$. When this occurs and also $\Err \leq \frac{\eps}{\K} \CC$, this implies that the number of errors is at most $200 \alpha \eps |\Pi|$. Furthermore, the number of hash collisions is at most the number of dangerous triples, which is at most $k \cdot (200 \alpha \eps) |\Pi|$, again by Lemma~\ref{lem:hash-collision-bound-obliv}. This concludes the proof that $\EHC$ is bounded by $(k+1)(200 \alpha \eps) |\Pi|$.
\end{proof}

The first part of Lemma \ref{lem:hash-collision-bound-obliv} argues that the communication complexity $\CC$ is bounded with high probability. First we will argue that if the communication complexity is too large, then the number of dangerous triples must be very large with respect to the number of errors the adversary can introduce. Then, we establish that the number of dangerous triples can only be so large if the fraction of hash collisions in these triples is too large, which happens with low probability.

\begin{lemma}
\label{lem:cc-bound}
Consider a run of Algorithm~\ref{alg:robust-protocol}, with~$\Err$ errors.
Let~$D$ be the number of dangerous triples, and assume 
$\CC > 200\alpha |\Pi| \K$, where $\alpha$ is the constant multiplying the communication in Lemma \ref{lem:phi-rises-enough}. Additionally, suppose that $\Err \leq \frac{\eps}{\K} \CC$.

If $\Err > 0$, then $D \geq \beta \cdot \Err$, where $\beta \eqdef \frac{\CC}{3\alpha\K\Err} \geq \frac{1}{3\alpha \eps}$. If $\Err = 0$, then $D=0$ trivially.
\end{lemma}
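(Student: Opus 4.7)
}
The plan is to bound the total communication $\CC$ using the per-iteration bound from Lemma~\ref{lem:phi-rises-enough}, and then use the hypothesis $\CC > 200\alpha|\Pi|\K$ together with $\Err \le (\eps/\K)\CC$ to force a lower bound on the number of dangerous triples. The case $\Err = 0$ will follow from a straightforward induction showing that $B_{u,v}$ never becomes positive in an error-free execution.

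First I would aggregate the per-iteration communication bound. By Lemma~\ref{lem:phi-rises-enough}, in each iteration $i$ the communication is at most $\alpha(1 + \ell_i)\K$, where $\ell_i$ denotes the number of links experiencing an error or a hash collision during iteration $i$. Since Algorithm~\ref{alg:robust-protocol} runs for at most $100|\Pi|$ iterations and $\sum_i \ell_i \le \EHC$ (each link contributing to $\ell_i$ accounts for at least one error or hash collision event), summing yields
\[
\CC \;\le\; 100\alpha|\Pi|\K \;+\; \alpha\K \cdot \EHC.
\]
Next I would bound $\EHC$ in terms of $\Err$ and $D$. Errors contribute exactly $\Err$ to $\EHC$. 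Each hash collision occurs on some link $(u,v)$ in some iteration $i$ during the meeting-points phase; since hash values on equal inputs never collide, a collision requires $B_{u,v}(i) > 0$, i.e., the triple $(i,u,v)$ must be dangerous. Because each pair of neighbors exchanges at most one hash per iteration, the number of hash collisions is at most $D$, giving $\EHC \le \Err + D$.

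Combining these two inequalities with the hypothesis $\CC > 200\alpha|\Pi|\K$ gives
\[
\alpha\K(\Err + D) \;\ge\; \CC - 100\alpha|\Pi|\K \;>\; \tfrac{1}{2}\CC,
\]
so $\Err + D > \CC/(2\alpha\K)$. Using $\Err \le (\eps/\K)\CC$ to subtract off the error term,
\[
D \;>\; \frac{\CC}{2\alpha\K} - \frac{\eps\CC}{\K} \;=\; \frac{\CC}{\K}\left(\frac{1}{2\alpha} - \eps\right) \;\ge\; \frac{\CC}{3\alpha\K}
\]
provided $\eps$ is a sufficiently small constant (say $\eps \le 1/(6\alpha)$). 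Since $\Err > 0$, this rearranges exactly to $D \ge \beta \cdot \Err$ with $\beta = \CC/(3\alpha\K\Err)$; the inequality $\beta \ge 1/(3\alpha\eps)$ is immediate from $\K\Err \le \eps\CC$.

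For the $\Err = 0$ case, I would argue by induction on the iteration number that $B_{u,v}(i) = 0$ for every link. The base case $i=1$ is immediate since all pairwise transcripts are empty. For the inductive step, in a noise-free iteration with all $B_{u,v} = 0$ the meeting-points hashes all match (no dangerous triples, hence no hash collisions), every party sets $status_u = 1$, flag passing propagates $netCorrect_u = 1$ undisturbed, and all parties simulate the same next chunk correctly, preserving $B_{u,v} = 0$. Thus no triple is ever dangerous, i.e., $D = 0$.

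The only nontrivial step is verifying that $\sum_i \ell_i \le \EHC$ (i.e., a hash collision counted in $\ell_i$ corresponds to a dangerous triple) and that multiple hash collisions on the same link within one iteration cannot occur — both follow directly from the meeting-points protocol exchanging exactly one hash per pair per iteration. The rest is bookkeeping and a small-constant choice of $\eps$.
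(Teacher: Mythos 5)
Your proof follows essentially the same route as the paper: both sum the per-iteration bound from Lemma~\ref{lem:phi-rises-enough} to get $\CC \le \alpha\K(100|\Pi| + \EHC)$, then bound $\EHC \le \Err + D$ using the observation that hash collisions only occur on dangerous triples, and finally rearrange using $\CC > 200\alpha|\Pi|\K$ and $\Err \le (\eps/\K)\CC$ to extract $D \ge \CC/(3\alpha\K) = \beta\Err$. The algebra is arranged slightly differently — you apply the communication bound first to get $\Err + D > \CC/(2\alpha\K)$ and then subtract the error term, while the paper collects the $\alpha\eps\CC$ term to the left first — but these are equivalent, and in fact your explicit constraint $\eps \le 1/(6\alpha)$ is the correct small-constant requirement (the paper's stated $\alpha\eps < 1/3$ is slightly too loose for their claimed constant of $1/3$, but this is immaterial since $\eps$ is taken to be ``sufficiently small'').

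One small inaccuracy in your closing paragraph: the meeting-points subroutine (Algorithm~\ref{alg:meeting-points}) actually exchanges \emph{several} hashes per link per iteration (two meeting-point transcripts plus the counter $k$), not one. The argument still goes through because the relevant accounting — as formalized via the process $\psi_{u,v}$ and the indicator $X_{i,u,v}$ in Section~\ref{sec:bounding-collisions} — charges at most one ``hash collision'' event per (iteration, link) pair, and $\ell_i$ in Lemma~\ref{lem:phi-rises-enough} counts distinct \emph{links}, not distinct hash exchanges. So the bound $\sum_i \ell_i \le \EHC \le \Err + D$ holds, but not for the one-hash-per-pair reason you gave. This does not affect the correctness of the argument.
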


\begin{proof}
The final statement follows easily - if there is never any error in the protocol, then there will never be a point at which any pair of transcripts are mismatched. 

Now we prove the statement when $\Err > 0$. We start by summing Lemma~\ref{lem:phi-rises-enough} over all the $100|\Pi|$ iterations $i$ of Algorithm~\ref{alg:robust-protocol} in order to bound the communication complexity of the entire protocol. 
Note that if we let $\ell(i)$ denote the number of errors and hash collisions in iteration $i$, then $\sum_{i\in[100|\Pi|]} \ell(i) \leq \EHC$, where $\EHC$ is the total number of errors and hash collisions experienced throughout the entire protocol. 
\begin{align*}
\CC & \le \sum_i  \alpha(1+\ell(i))\K \\
&\leq \alpha \K( 100|\Pi|   + \EHC ) \\
& \leq \alpha \K \left(100 |\Pi| + \frac{\CC \cdot \eps}{\K} + D\right)
\end{align*}
where the first inequality is Lemma~\ref{lem:phi-rises-enough}, and in the last inequality we use the fact that 
\[
\EHC \le \frac{\eps}{\K}\CC + D
\]
since $D$ upper bounds the number of hash collisions, and that the error rate is bounded by~$\eps / \K$. 
Rearranging, we get that
\begin{align*}
D &\geq \frac{(1 - \alpha \eps)\CC - 100 \alpha |\Pi| \K}{\alpha \K} \\
&\geq \frac{\CC}{3\alpha\K} \\
&= \beta \Err
\end{align*}
where the second inequality follows from the fact that $\CC > 200 \alpha |\Pi| \K$ and we take $\eps$ sufficiently small so that $\alpha \eps < 1/6$ , and the final equality comes from the definition of $\beta$.
\end{proof}

Now we proceed with establishing that the probability that $D$ is so large with respect to $\Err$ is relatively small. Let $\varphi_{u,v}(i)$ denote the value of $\varphi_{u,v}$ at the beginning of iteration $i$. Towards proving Lemma \ref{lem:hash-collision-bound-obliv}, define the random variable $X_{i,u,v}$ for all $(i,u,v)$ such that $\varphi_{u,v}(i) > 0$ as follows:
\[ X_{i,u,v} =
\begin{cases}
1 & \parbox[t]{0.6\columnwidth}{if hash collision occurs between $u$ and $v$ in iteration $i$\strut} \\
0 & \text{otherwise} \\
\end{cases}.
\]

Define the process $\psi_{u,v}$ as follows.
\begin{algorithm}[H]
\caption{The process $\psi_{u,v}$}
\begin{algorithmic}
\State $i \gets 1, \psi_{u,v}(1) \gets 0$
\ForAll{iterations $i$ from 1 to $100 |\Pi|$}
  \If {error occurs between $u$ and $v$ during iteration $i$, during any phase}
    \State $\psi_{u,v}(i+1) = \psi_{u,v}(i) + 6C_6$
  \ElsIf {$\psi_{u,v}(i) > 0$}
    \State $\psi_{u,v}(i+1) = \psi_{u,v}(i) + 6C_6 X_{i,u,v} - 5 (1 - X_{i,u,v})$
  \Else
    \State $\psi_{u,v}(i+1) = \psi_{u,v}(i)$
  \EndIf
\EndFor
\end{algorithmic}
\end{algorithm}

We remark that $\psi_{u,v}$ updates in such a way that it is always an upper bound on $\varphi_{u,v}$. We formalize this below.
\begin{lemma}
\label{lem:psi-lb}
For all iterations $i$ in \Alg and all $(u,v) \in E$, we have that $\psi_{u,v}(i) \geq \varphi_{u,v}(i)$, where $\varphi_{u,v}(i)$ denotes the value of the potential $\varphi_{u,v}$ at the beginning of iteration $i$.
\end{lemma}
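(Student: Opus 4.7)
I would prove the inequality by induction on $i$, comparing the increments of $\psi_{u,v}$ and $\varphi_{u,v}$ iteration by iteration. The base case $i=1$ is immediate: by the definition of the process, $\psi_{u,v}(1)=0$; at the start of \Alg{} all transcripts $T_{u,v}$ are empty, so $B_{u,v}=0$ and hence $\varphi_{u,v}(1)=0$ as well (using Proposition~\ref{prop:phi-nonnegative} or directly the definition deferred to the appendix).

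For the inductive step, assume $\psi_{u,v}(i)\ge\varphi_{u,v}(i)$ and track $\Delta\varphi_{u,v}\eqdef\varphi_{u,v}(i+1)-\varphi_{u,v}(i)$ across the four phases of iteration $i$. The three key ingredients are: Claim~\ref{claim:simple-Hpot-claim}, which says that in the flag-passing, simulation, and rewind phases $\varphi_{u,v}$ is non-increasing whenever there is no error on the link $(u,v)$, and otherwise increases by at most an additive constant bounded by $C_3$; Lemma~\ref{lem:potential-rises-overall-1}, which, in the meeting-points phase with neither an error nor a hash collision on $(u,v)$, gives a strict drop of $\varphi_{u,v}$ (by at least $5$) whenever $\varphi_{u,v}(i)>0$, while bounding the worst-case increase from a hash collision or an error by at most $5C_6$ per event; and Proposition~\ref{prop:status-simulate}, which guarantees that if $\varphi_{u,v}(i)=0$ (so $T_{u,v}=T_{v,u}$) then the hashes exchanged in meeting-points necessarily match, the status becomes ``simulate'', and $\varphi_{u,v}$ remains at $0$.

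I would then perform a case analysis matching the branches in the definition of $\psi_{u,v}$:
\emph{Case~1.} An error between $u$ and $v$ occurs in iteration $i$ (in any phase): $\psi_{u,v}$ rises by $6C_6$. Summing the per-phase bounds above, $\varphi_{u,v}$ rises by at most $5C_6$ from the meeting-points phase plus at most $C_3$ each from flag-passing, simulation, and rewind; since $C_6$ is chosen larger than $C_3$, the total is at most $6C_6$.
\emph{Case~2.} No error on $(u,v)$, $\varphi_{u,v}(i)>0$, and a hash collision on $(u,v)$: $\psi_{u,v}$ rises by $5C_6$. Only the meeting-points phase can change $\varphi_{u,v}$ (Claim~\ref{claim:simple-Hpot-claim} pins the other three), and the hash collision contributes an increase of at most $5C_6$ by Lemma~\ref{lem:potential-rises-overall-1}.
\emph{Case~3.} No error, no hash collision, $\varphi_{u,v}(i)>0$: $\psi_{u,v}$ decreases by $5$. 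Lemma~\ref{lem:potential-rises-overall-1} yields a drop in $\varphi_{u,v}$ of at least $5$ in the meeting-points phase; the other three phases leave $\varphi_{u,v}$ unchanged, so $\Delta\varphi_{u,v}\le-5=\Delta\psi_{u,v}$.
\emph{Case~4.} $\varphi_{u,v}(i)=0$ and no error on $(u,v)$: $\psi_{u,v}$ is unchanged. By Proposition~\ref{prop:status-simulate} the meeting-points phase leaves $\varphi_{u,v}$ at $0$; Claim~\ref{claim:simple-Hpot-claim} keeps it there through the remaining three phases. Hence $\varphi_{u,v}(i+1)=0\le\psi_{u,v}(i+1)$.
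In every case $\Delta\psi_{u,v}\ge\Delta\varphi_{u,v}$, so the inductive invariant is preserved.

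The main obstacle I expect is bookkeeping: making sure that the constants $6C_6$ and $5C_6$ chosen in the definition of $\psi_{u,v}$ really do dominate the worst-case additive contributions to $\varphi_{u,v}$ coming from all four phases simultaneously, and in particular that in Case~1 an adversary who triggers an error on $(u,v)$ in the meeting-points phase cannot additionally exploit the same error to force extra growth in the later phases that would exceed the $6C_6$ budget. This is handled by the fact that Claim~\ref{claim:simple-Hpot-claim} attributes any growth in the flag-passing, simulation, or rewind phases to a \emph{distinct} error event on $(u,v)$ in that same phase, so one has to count the per-phase errors separately; the bound $6C_6$ is then comfortably larger than any single-phase contribution, and Cases~2–4, where only the meeting-points phase contributes, align with the smaller updates $+5C_6$, $-5$, and $0$ respectively.
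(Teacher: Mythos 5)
Your induction and case split mirror the paper's proof almost exactly, but one step in your Case~2 is not justified by the lemma you cite. You assert that ``Claim~\ref{claim:simple-Hpot-claim} pins the other three [phases]'' so that only the meeting-points phase can move $\varphi_{u,v}$ when there is a hash collision but no error on $(u,v)$. That is not what Claim~\ref{claim:simple-Hpot-claim} says. It gives two separate facts: (a) the unconditional bound that $\varphi_{u,v}$ changes by at most $C_3$ in each of the simulation and rewind phases (and not at all in flag passing), and (b) that $\varphi_{u,v}$ does not \emph{increase} in the simulation or rewind phases \emph{only} in the absence of \emph{both} errors \emph{and} hash collisions on $(u,v)$ in the iteration as a whole. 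In your Case~2 there is a hash collision, so the ``no increase'' conclusion (b) is unavailable, and (a) only bounds the extra contribution by $2C_3$, not zero. Thus the step ``MP contributes at most $5C_6$, other phases contribute $0$, so $\Delta\varphi\le 5C_6$'' does not follow from the claim as stated, and an adversary-chosen hash collision can in principle also raise $B_{u,v}$ (hence $\varphi_{u,v}$) in the simulation and rewind phases. The paper's own proof elides this point as well (it cites only Lemma~\ref{lem:potential-rises-overall-1} in the hash-collision branch), so the fix is a short auxiliary case analysis that you should supply: if the hash collision drives $status_{u,v}=status_{v,u}=$~``simulate'', then Proposition~\ref{prop:status-simulate} zeroes out the meeting-points contribution and the other phases contribute at most $2C_3\le 5C_6$; otherwise at least one endpoint keeps status ``meeting points'', which (with no error on $(u,v)$) prevents both endpoints from simulating the next chunk on this link, so the simulation-phase contribution vanishes and one must check that the meeting-points increase plus the remaining rewind contribution of at most $C_3$ stays within $5C_6$.

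A minor secondary point: in Case~1 you charge $C_3$ to each of the flag-passing, simulation, and rewind phases, giving $5C_6+3C_3$. Claim~\ref{claim:simple-Hpot-claim} actually asserts that $\varphi_{u,v}$ is \emph{unchanged} in the flag-passing phase, so the correct worst-case total is $5C_6+2C_3$, which is the quantity that the paper's constraint $C_6\geq 2C_3$ is used to dominate. Your looser $5C_6+3C_3$ would instead require $C_6\geq 3C_3$, which is not explicitly guaranteed by the way the constants are set up. Your Cases~3 and~4 and the base case are fine as written.
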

\begin{proof}
We prove the claim by induction. Clearly it is true for iteration~$i=1$. Assume now that it is true for a certain~$i$. We will show that it is also true for iteration~$i+1$.

If $\varphi_{u,v}(i+1) = 0$, then the claim follows by taking $C_6$ to be divisible by 5 (since we can take $C_6$ to be sufficiently large and have no other constraints, this is doable). It is clear then that $\psi_{u,v}$ is always nonnegative by construction, and so $\psi_{u,v}(i+1) \geq \varphi_{u,v}(i+1)$.

Suppose there is an error between $u$ and $v$ in iteration $i$. We know that $\varphi_{u,v}$ increases by at most~$6C_6$ regardless of the number of errors or hash collisions in the entire iteration. This follows from Lemma \ref{lem:potential-rises-overall-1}, Proposition~\ref{prop:status-simulate}, and Claim \ref{claim:simple-Hpot-claim}: Lemma~\ref{lem:potential-rises-overall-1} and Proposition~\ref{prop:status-simulate} together show that $\varphi_{u,v}$ can increase by at most $5C_6$ in the Meeting Points phase, and Claim~\ref{claim:simple-Hpot-claim} shows that $\varphi_{u,v}$ can increase by at most $2C_3$ in all the other phases combined, so $C_6 \geq 2C_3$ yields the desired result. %
So we conclude that $\psi_{u,v}(i+1) \geq \varphi_{u,v}(i+1)$.

Now suppose that $\varphi_{u,v}(i+1) > 0$ and there is no error between $u$ and $v$ in iteration $i$. Then we must have that $\varphi_{u,v}(i) > 0$, since if $\varphi_{u,v}(i)=0$, we would have $B_{u,v}(i) = 0$ and, therefore there is no error or hash collision in iteration $i$ between $u$ and $v$ and hence the $\varphi_{u,v}$ cannot increase (Proposition~\ref{prop:status-simulate}, Lemma~\ref{lem:potential-rises-overall-1} for the Meeting Points phase, and Claim~\ref{claim:simple-Hpot-claim} for Flag Passing, Simulation, and Rewind phases). Furthermore, there can only be a hash collision at iteration $i$ between $u$ and $v$ if $\varphi_{u,v}(i) > 0$, which follows from Proposition \ref{prop:phi-nonnegative} and the observation that hash collisions can only happen if $B_{u,v}(i) > 0$ after iteration $i$. Hence, $\psi_{u,v}(i+1) = \psi_{u,v}(i) + 6C_6$ whenever there is a hash collision between $u$ and $v$, and $\psi_{u,v}(i+1) = \psi_{u,v}(i) - 5$ whenever there is no hash collision between $u$ and $v$. $\varphi_{u,v}$ increases by at most $5C_6$ in the presence of a hash collision and decreases by at least $5$ in the absence of one during the Meeting Points phase (Lemma \ref{lem:potential-rises-overall-1}). In the rest of the iteration, the potential can increase by at most $2C_3 < C_6$ if there was an error or hash collision on the link, and does not increase otherwise (Claim ~\ref{claim:simple-Hpot-claim}). Hence, $\varphi_{u,v}$ increases by at most $6C_6$ in the presence of an error or hash collision, and decreases by at least $5$ otherwise, and we conclude the proof.
\end{proof}

Let $\D$ denote the set of triples $(i,u,v)$ such that $\psi_{u,v}(i) > 0$. Then we claim that $D \leq |\D|$. This follows from the fact that $\psi_{u,v}(i) \geq B_{u,v}(i)$ for all $i$ (Lemma \ref{lem:psi-lb} and Proposition \ref{prop:phi-nonnegative}). Our goal will be to prove that $|\D|$ is not much larger than $\Err$, with high probability. Our strategy will be to use the fact that $\psi_{u,v}(i)$ is always nonnegative (again, an application of Lemma \ref{lem:psi-lb} and Proposition \ref{prop:phi-nonnegative}) and that the adversary cannot make too many errors, to argue that $\sum_{(i,u,v) \in \D} X_{i,u,v}$ must be bounded below by something relatively large. If there are many random variables $X_{i,u,v}$, then a Chernoff bound will let us argue that $\sum_{(i,u,v) \in \D} X_{i,u,v}$ should not be this large with high probability. However, the communication in the protocol is not a priori bounded\footnote{Except trivially, by the number of rounds of \Alg  times $m$, but this can be a factor $m$ more than $CC(\Pi)$.}: if there are many hash collisions or errors, then the parties might communicate more, which creates more budget for errors. So our first step is to bound the number of errors that the adversary can commit. 

\begin{lemma}
\label{lem:errs-bounded-obliv}
Suppose that the adversary is oblivious, and commits $\Err$ errors. Denote the hash collision probability in \Alg as~$p$, and suppose that $p < 1/30C_6$. Then,
\[
\Pr\left[ \Err \leq 200 \alpha \eps |\Pi| \ \bigvee \ \Err > \frac{\eps}{\K} \CC  \right ] \ge  1 - p^{\Omega(|\Pi|))},
\]
where the probability above is over the random seeds used for hashing throughout the protocol.
\end{lemma}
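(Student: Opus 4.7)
Since the oblivious adversary fixes its corruption pattern before any randomness is drawn, $\Err$ is a deterministic constant over the probability space we consider. The case $\Err \leq 200 \alpha \eps |\Pi|$ is immediate, so assume $\Err > 200 \alpha \eps |\Pi|$ and aim to bound $\Pr[\Err \leq \eps \CC / \K]$ over the random hash seeds. Under this bad event we have $\CC \geq \Err \K / \eps > 200 \alpha |\Pi| \K$, placing us in the hypothesis of Lemma~\ref{lem:cc-bound}, which then yields
\[ D \;\ge\; \beta \cdot \Err \;=\; \frac{\CC}{3 \alpha \K} \;\ge\; \frac{\Err}{3 \alpha \eps}. \]
Thus it suffices to show that the execution produces this many dangerous triples only with probability $p^{\Omega(|\Pi|/\eps)}$.

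The next step uses the auxiliary process $\psi_{u,v}$ to force many hash collisions whenever $D$ is this large. Because $\psi \geq \varphi \geq 0$ (Lemma~\ref{lem:psi-lb} together with Proposition~\ref{prop:phi-nonnegative}) and $\psi_{u,v}$ starts at $0$, summing the per-link ``total upward drift $\geq$ total downward drift'' inequality over all iterations gives
\[ 5\, N \;\le\; 6 C_6 \, \Err \;+\; 5 C_6 \, H, \]
where $N$ counts triples $(i,u,v)$ with $\varphi_{u,v}(i) > 0$, no error on $(u,v)$ in iteration $i$, and no hash collision there; and $H$ counts hash collisions in iterations with $\varphi > 0$ and no error on the link. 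Every dangerous triple has $\varphi > 0$, and at most $\Err + H$ such triples are ``spoiled'' by an error or collision on that link, so $N \geq D - \Err - H$. Combining and rearranging,
\[ 5 D \;\le\; (6 C_6 + 5)\, \Err \;+\; 5 (C_6 + 1)\, H. \]
Plugging in $\Err \leq 3 \alpha \eps D$ from the previous paragraph and taking $\eps$ sufficiently small so the $\Err$ term is dominated yields $H \geq D / \bigl(2 (C_6 + 1)\bigr)$.

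Finally, I apply a Chernoff tail. The seeds $S_{i,u,v}$ are drawn independently across triples in \texttt{InitializeState} and are independent of the oblivious error pattern. Exposing the seeds in a predetermined chronological order, at each exposure the entire history (previously exposed seeds, inputs, and the fixed error pattern) deterministically fixes the transcripts at $(i,u,v)$, and over the fresh seed $S_{i,u,v}$ the conditional collision probability is at most $p$; hence, conditioned on $D = d$, the count $H$ is stochastically dominated by $\mathrm{Bin}(d, p)$. Applying the bound $\Pr[\mathrm{Bin}(d, p) \geq k] \leq (e d p / k)^k$ with $k = d / \bigl(2(C_6+1)\bigr)$ gives
\[ \Pr\!\bigl[H \geq D/(2(C_6+1))\bigr] \;\le\; \bigl(2 e (C_6+1)\, p\bigr)^{D / (2 (C_6+1))} \;\le\; p^{\Omega(D)}, \]
where the last inequality uses $p < 1/(30 C_6)$ to write $2 e (C_6+1) p \leq p^{\gamma}$ for some constant $\gamma \in (0,1)$ depending only on $C_6$. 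Substituting $D \geq \Err/(3 \alpha \eps)$ together with $\Err > 200 \alpha \eps |\Pi|$ gives the claimed $p^{\Omega(|\Pi|/\eps)}$ bound (the bound becomes strictly tighter for larger $\Err$, so the worst case is indeed at the hypothesis threshold).

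\paragraph{Main obstacle.} The subtle step is the stochastic-domination, because the set of dangerous triples is itself random (it depends on which earlier coin tosses caused collisions that the meeting-points mechanism then propagated). The resolution is the chronological exposure argument sketched above: since the seed at each triple is sampled independently of everything else, conditioning on history leaves the conditional collision probability bounded by $p$ regardless of whether that history happened to make the triple dangerous. A coupling with $100 |\Pi| \cdot m$ i.i.d.\ $\mathrm{Bernoulli}(p)$ variables then dominates $H$ by $\mathrm{Bin}(D, p)$ on the event that exactly $D$ dangerous triples are realized, which is precisely what the Chernoff estimate requires. Carrying the constants through the chain $\Err > 200\alpha\eps|\Pi| \Rightarrow D \geq \Err/(3\alpha\eps) \Rightarrow $ exponent $\Omega(D)$ is then just arithmetic.
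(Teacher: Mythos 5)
Your argument is correct and follows the same logical chain as the paper's: apply Lemma~\ref{lem:cc-bound} to deduce $D\geq\Err/(3\alpha\eps)$, invoke nonnegativity of the $\psi_{u,v}$ process to force a constant fraction of the dangerous triples to be hash collisions, and finish with a Chernoff tail; the only structural difference is that you inline the content of Lemma~\ref{lem:bound-dangerous-obliv} rather than citing it, and you use the $\binom{d}{k}p^k$ form of Chernoff instead of the KL-divergence form of Claim~\ref{claim:additive-chernoff}. Two small remarks: your chronological-exposure discussion of the random index set is a bit more explicit than what the paper offers (which handles it in a single ``wlog'' sentence), and the final substitution $\Err>200\alpha\eps|\Pi|$, $D\geq\Err/(3\alpha\eps)$ yields $D=\Omega(|\Pi|)$ rather than $\Omega(|\Pi|/\eps)$ at the threshold — so the worst-case exponent is $\Omega(|\Pi|)$, not $\Omega(|\Pi|/\eps)$ as you (and the paper's lemma statement) assert; the paper's own proof has the same gap, so you are consistent with it.
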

\begin{proof}[Proof of Lemma~\ref{lem:errs-bounded-obliv}]
Suppose $\Err > 200 \alpha \eps |\Pi|$ and $\Err \leq \frac{\eps}{\K} \CC$. This implies that $\CC > 200 \alpha |\Pi| \K$, and so applying Lemma \ref{lem:cc-bound} we see that $D > 1/(3\alpha\eps) \Err$. As we noted previously, $|\D| \geq D$ (consequence of Lemma~\ref{lem:psi-lb} and argument above), and so it suffices to bound the probability that $|\D|$ exceeds $1/(3\alpha\eps) \Err$. By taking $\eps$ small enough with respect to $\alpha$ and $C_6$, we can safely assume that $3\alpha\eps < 1/(10C_6)$, and therefore we can apply Lemma \ref{lem:bound-dangerous-obliv}, which tells us that the probability of this is at most $\exp(-\Omega(\Err \log(1/p) / \eps))$.
\end{proof}
\begin{lemma}
\label{lem:bound-dangerous-obliv}
Suppose the adversary is oblivious, and let $\Err$ be the number of errors committed. Let $p$ be the hash collision probability of $h$, and assume that $p <1/{30C_6}$. Let $k$ be some integer such that $k \geq 10C_6$. Let $\D$ denote the set of triples $(i,u,v)$ such that $\psi_{u,v}(i) > 0$. Then,
\[ \Pr[|\D| > k \Err ] < p^{\Omega\left(k \Err\right)}, \]
where the $\Omega$ hides constants on the order of $1/C_6$.
\end{lemma}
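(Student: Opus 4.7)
The plan is to combine a deterministic drift analysis of the per-link process $\psi_{u,v}$ with a Chernoff-style tail bound on the independent hash-collision indicators $X_{i,u,v}$. First, for each link $(u,v)$, I would sum the one-step increments of $\psi_{u,v}$ over all $100|\Pi|$ iterations. Since $\psi_{u,v}(1)=0$ and $\psi_{u,v}\ge 0$ throughout (by Lemma~\ref{lem:psi-lb} together with Proposition~\ref{prop:phi-nonnegative}), the positive drift (contributing $+6C_6$ per error and $+5C_6$ per hash collision) must dominate the negative drift (contributing $-5$ per no-collision step in an iteration with $\varphi_{u,v}(i)>0$). This non-negativity constraint yields at most $\tfrac{6}{5}C_6\, e_{u,v} + C_6\, c_{u,v}$ decrement steps on link $(u,v)$. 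Adding the error and collision iterations themselves and summing over all links gives
\[
|\D| \;\le\; O(C_6)\,(\Err + c),
\]
where $c := \sum_{(u,v)} c_{u,v}$ is the total number of hash collisions. Contrapositively, using the hypothesis $k \ge 10 C_6$, the event $|\D|>k\Err$ forces $c \ge c^* := \Omega(k\Err/C_6)$.

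It then suffices to show $\Pr[c \ge c^*] \le p^{\Omega(k\Err)}$. Here I would use that the seeds $S_{i,u,v}$ are freshly and independently sampled in Algorithm~\ref{alg:initialize-state}, and that the oblivious adversary fixes its error pattern in ignorance of these seeds, so the variables $(X_{i,u,v})$ are mutually independent Bernoulli$(p)$. Although the set of ``live'' triples contributing to $c$ depends on the past $X$'s, it is adapted to the natural filtration: whether iteration $i$ is live for link $(u,v)$ is determined by the transcripts, errors, and previous collisions through iteration $i-1$. Enumerating the queried $X$'s in the order they are examined as $Y_1,Y_2,\ldots$ thus produces a sequence where each $Y_j$ is Bernoulli$(p)$ conditionally on the past. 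The standard exponential supermartingale $\prod_{j} (1+\lambda)^{Y_j}/(1+\lambda p)$, stopped at the first time one accumulates $c^*$ successes, then yields a tail bound of the form $\bigl(e\cdot O(C_6)\cdot p\bigr)^{c^*}$, which the hypothesis $p<1/(30C_6)$ converts to $p^{\Omega(c^*)}=p^{\Omega(k\Err)}$, as required.

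The main obstacle is the adaptivity in the Chernoff step: because the index set over which $c$ is summed is itself a random function of the $X$'s, neither a naive union bound over subsets of live triples (far too many) nor a plain Binomial tail (wrong distribution) applies. The martingale formulation above circumvents this by processing the $X$'s sequentially in the order they are queried and relying on the independence of each fresh hash outcome from the past---a property that obliviousness of the adversary, combined with fresh per-(iteration, link) seeds, provides. A secondary subtlety worth flagging is that $\psi_{u,v}$ can in principle remain positive while $\varphi_{u,v}=0$ and no error occurs, so such an iteration contributes to $|\D|$ without updating $\psi_{u,v}$; as observed in the proof of Lemma~\ref{lem:psi-lb}, however, $\varphi_{u,v}=0$ is absorbing absent a fresh error on the link, so these coasting intervals can be charged back to past errors and folded into the $\Err$ term of the drift bound.
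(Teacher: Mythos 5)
Your proposal is a genuine reorganization of the paper's argument rather than a verbatim reproduction, and the two are worth contrasting. The paper works with the \emph{global} sum $\psi = \sum_{(u,v)}\psi_{u,v}$: from $\psi(100|\Pi|+1)\ge 0$ it extracts the deterministic implication that, conditioned on $|\D|>k\Err$, the empirical collision fraction $\tfrac{1}{|\wtD|}\sum_{\wtD}X_{i,u,v}$ is at least $\tfrac{4}{5C_6+5}$; it then asserts (with an explicit ``we will assume all the $X_{i,u,v}$ are i.i.d.\ Ber$(p)$'' hand-wave about the deterministically-zero ones) a Chernoff bound on that fraction. You instead run the same drift argument link by link to first get the deterministic inequality $|\D|\le O(C_6)(\Err+c)$ where $c$ is the total collision count, reduce the event $|\D|>k\Err$ to $c\ge c^*=\Omega(k\Err/C_6)$, and then bound $\Pr[c\ge c^*]$ directly. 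These two decompositions are algebraically equivalent (both are just rearrangements of ``increments dominate decrements in $\psi$''), but your choice of pivot is arguably cleaner, and --- more substantively --- your exponential-supermartingale treatment of the sequentially revealed $X$'s is a more honest handling of the adaptivity than the paper's. The set of ``live'' triples over which collisions are counted is itself a random function of past collisions; the paper's Claim~\ref{claim:additive-chernoff} is phrased as if $\wtD$ and $|\wtD|$ were fixed, whereas your filtration-based argument addresses this head on, provided you also carry along the observation that the drift inequality $|\D_{\le\tau}|\le O(C_6)(\Err+c_{\le\tau})$ holds at every stopping time $\tau$, not just at the end. That last point is essential: the naive $\binom{N}{c^*}p^{c^*}$ bound over all $N=100|\Pi|m$ trials would be useless here, and it is exactly the deterministic cap on the number of live trials that rescues the exponent.

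There is, however, a concrete gap in the one place where you try to shore up a weakness you noticed. Your drift accounting partitions $\D$ into error iterations, collision iterations, $-5$-decrement iterations, and ``coasting'' iterations (where $\psi_{u,v}(i)>0$, $\varphi_{u,v}(i)=0$, and no error occurs, so the $\psi_{u,v}$-update is a no-op). The first three are bounded by $\Err$, $c$, and $O(C_6)(\Err+c)$ respectively, but the coasting iterations are not ``chargeable to past errors'' in the way you claim. Once $\varphi_{u,v}$ reaches $0$ while $\psi_{u,v}$ is still positive (which is possible precisely because $\varphi_{u,v}$ decreases by \emph{at least} $5$ per clean meeting-points iteration while $\psi_{u,v}$ decreases by \emph{exactly} $5$), $\psi_{u,v}$ freezes at that positive value for the entire remainder of the execution until the next error on $(u,v)$. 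A single early error can therefore produce a coasting interval of length $\Theta(|\Pi|)$, so the coasting contribution to $|\D|$ is not $O(\Err)$ and your inequality $|\D|\le O(C_6)(\Err+c)$ fails as stated. To be fair, the paper's own proof silently has the same problem: its displayed inequality $\psi(100|\Pi|+1)\le 6C_6\Err + \sum_{\wtD}(5C_6X_{i,u,v}-5(1-X_{i,u,v}))$ is also invalid if $\wtD$ contains coasting triples, since those contribute $0$ to the true increment but $-5$ to the right-hand side. The clean repair --- for both you and the paper --- is to define $\D$ via $\varphi_{u,v}(i)>0$ instead of $\psi_{u,v}(i)>0$. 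This still dominates the number of dangerous triples (because $\varphi_{u,v}\ge B_{u,v}$ by Proposition~\ref{prop:phi-nonnegative}), eliminates coasting from $\D$ entirely, and makes both drift accountings correct. If you make that substitution, your proof goes through.
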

\begin{proof}
Consider running the protocol after fixing the adversary's errors with uniformly random seeds for the hashes. Assume that $|\D| > k \Err$, and let $\widetilde{\D} \subseteq \D$ be the subset of triples where no error occurs. Since the number of errors is $\Err$, we get that $|\widetilde{\D}| \geq (k - 1) \Err$. 

We will argue that the fraction of hash collisions required in the $|\widetilde{\D}|$ (possibly) dangerous triples is far too large.

Define 
\[\psi \eqdef \sum_{(u,v) \in E} \psi_{u,v}\]
Note that $\psi$ is always nonnegative by design, since each of the $\psi_{u,v}$'s are nonnegative. Consider $\psi(100 |\Pi|+1)$, that is, the value of $\psi$ immediately after the final $100 |\Pi|$-th iteration of \Alg. Then, recalling the definition of $\psi_{u,v}$, we can upper bound $\psi(100 |\Pi|+1)$ as follows.
\begin{align} 
\nonumber
0 &\leq \psi(100 |\Pi| + 1) \\
&\leq 6C_6 \cdot \Err + \sum_{(i,u,v) \in \wt{\D}} (6C_6 X_{i,u,v} - 5(1-X_{i,u,v})).
\label{eq:bound-dangerous-obliv-1}
\end{align}
Hence,
\begin{equation}
\sum_{(i,u,v) \in \wtD} 6C_6 X_{i,u,v} - 5(1-X_{i,u,v}) \geq -6C_6 \cdot \Err.
\end{equation}

Now we claim that $\sum X_{i,u,v} \geq \frac{4}{6C_6+5}|\wtD|$.
\begin{align}
\sum_{(i,u,v) \in \wtD} \left( 6C_6 X_{i,u,v} -5 (1 - X_{i,u,v}) \right) &\geq -6C_6 \cdot \Err \nonumber \\
(6C_6 + 5)\left( \sum_{(i,u,v) \in \wtD} X_{i,u,v} \right) - 5|\wtD| &\geq -6C_6 \cdot \Err \nonumber \\
 \sum_{(i,u,v) \in \wtD} X_{i,u,v} &\geq \frac{4}{6C_6 + 5}|\wtD|, \label{eq:bound-danger-1}
\end{align}
where in the last line we use the fact that $|\wtD| \geq (k-1) \cdot \Err \geq 6C_6 \cdot \Err$. 

To summarize, we can now conclude that $|\wtD| \geq (k-1) \cdot \Err$ only if:
\begin{enumerate}
\item There are $|\wtD| \geq (k-1) \cdot \Err$ random variables $X_{i,u,v}$ such that $\psi_i(u,v) > 0$ and no error occurs between $u$ and $v$ in this iteration.
\item At least $4 / (6C_6 + 5)$ fraction of these random variables have value $1$, indicating a hash collision.
\end{enumerate}

To proceed with showing that the probability of this event is at most $p^{\Omega(k \Err)}$, we tie the probability on the left-hand side of Eq.~\eqref{eq:bound-danger-1} to the outcome of a certain coin-flipping experiment. Specifically, we will use Observation~\ref{obs:oblivious} to show that the probability of the above two events is upper bounded by the probability that we sample $|\wtD|$ i.i.d $\text{Ber}(p)$ random variables and at least $4/(6C_6 + 5)$ fraction of those variables end up as 1 (note that $|\wtD|$ is still a random variable here, but is defined to be the same in the process we care about and the coin-flipping process). This can be seen with a coupling argument. 

Order the triples $(i,u,v)$ in increasing order by iteration (arbitrarily by $(u,v)$). We define a Markov chain that counts hash collisions in $\wtD$, where the states are defined by the randomness seen so far in the protocol, the number of hash collisions that randomness implies are in $\wtD$, and whether the next triple to be seen is in $\wtD$. The transitions correspond to observing the randomness in the next triple we haven't seen. This Markov chain is well-defined: since the adversary is oblivious, whether or not a triple is in $\wtD$, as well as whether or not that triple has a hash collision, is fixed by the randomness up to that point\footnote{The observant reader will note that, technically, all we need in the states is the randomness seen so far in the protocol to make this chain well-defined, but the extra information may be useful to keep in mind. The presence of a hash collision depends on the transcripts, which depends on the previous randomness, which is why this chain needs to keep track of all the randomness so far - if this were unnecessary, we could make our state space just keep track of the hash collisions seen so far in $\wtD$.}.

We couple the Markov chain above to the process of generating $|\wtD|$ $\text{Ber}(p)$ random variables in the following way: whenever we reach a triple in $\wtD$ and observe the randomness used in it, we generate one $\text{Ber}(p)$ random variable (independently of the previous generated bits) and add it to a running sum. We will couple these processes such that the triple in $\wtD$ has a hash collision only if the $\text{Ber}(p)$ random variable turns out 1. This proves the desired result, that the number of 1's among the $|\wtD|$ $\text{Ber}(p)$ random variables stochastically dominates the number of hash collisions in this process.\footnote{For more background on coupling, see Chapter 5 of \cite{levin2017markov}.}

Observation~\ref{obs:oblivious} is the key part of this argument. Specifically, it shows that the hash collision probability for any transcripts $T_{u,v}(i) \neq T_{v,u}(i)$ is exactly $p$, and that additionally conditioning on previous hash collisions does not change this fact; the distribution of the random seed $S_{i,u,v}$ conditioned on $T_{u,v}(i)$, $T_{v,u}(i)$, and $\{ X_{j,a,b} : j \leq i-1, (a,b) \in E\}$ is still uniform. This is because all of these random variables are in fact a function of the seeds in previous iterations, and $S_{i,u,v}$ is independently sampled from these.

Indeed, the property we want is satisfied at the beginning (when there are 0 hash collisions in $\wtD$ so far, and 0 coins have been flipped). Consider taking a step in the Markov chain we defined. There are two cases: either the current state of the chain corresponds to a triple in $\wtD$, or it does not. If it does not correspond to a triple in $\wtD$, then we do not flip a coin, and the property remains satisfied. If it does correspond to a triple $(i,u,v)$ in $\wtD$, Observation~\ref{obs:oblivious} tells us that the probability of a hash collision is at most $p$: it is exactly $p$ if the transcripts are different, and 0 otherwise. Hence, we can couple this event to the event of observing a $\text{Ber}(p)$ random variable, such that the hash collision happens only if the $\text{Ber}(p)$ random variable turns up 1. So, the total number of 1's among the $\text{Ber}(p)$ random variables still stochastically dominates the number of hash collisions seen within $\wtD$, satisfying the desired property.

To summarize, to prove the result, it suffices to show that the probability that $4 / (6C_6 +5)$ fraction of at least $(k-1) \cdot \Err$ i.i.d $\text{Ber}(p)$ random variables are 1 is sufficiently small. Let $Y$ denote the event that $|\wtD| \geq (k-1) \cdot \Err$ and at least $4 / (6C_6 +5)$ fraction of the $\text{Ber}(p)$ random variables come up~1. Let $Y_d$ denote the event that at least $4 / (6C_6 +5)$ fraction of the $|\wtD|$ $\text{Ber}(p)$ random variables come up 1 conditioned on $|\wtD| = d$. Then we have that 
\begin{align*}
\Pr[Y] = \sum_{d = (k-1)\Err}^{\infty} \Pr[Y_d]\Pr[|\wtD| = d].
\end{align*}
Note that $Y_d$ is exactly the event that we generate $d \geq (k-1) \Err$ i.i.d $\text{Ber}(p)$ random variables and at least $4 / (6C_6 +5)$ fraction of them turn up 1. We now upper bound this event by~$p^{\Omega(d)}$, proving the desired result.

Now that we are dealing with a fixed number of i.i.d $\text{Ber}(p)$ random variables, we can do this using a Chernoff bound. Let $A_1, \ldots, A_d$ be iid $\text{Ber}(p)$ random variables. By taking $p < 1/(6C_6 + 5)$, we note that 
\[
\Pr\left[\sum_{i=1}^{d} A_{i} \geq \frac{4}{6C_6 + 5}d\right] 
\leq \Pr \left[\frac{1}{d}\sum_{i=1}^{d} A_{i} \geq p + \frac{3}{6C_6 + 5}\right].
\]
We bound this probability by $p^{\Omega(d)}$ in Claim \ref{claim:additive-chernoff}, by applying it with $\gamma = 3/(6C_6 + 5)$.
\end{proof}
\begin{claim}
\label{claim:additive-chernoff}
Fix a constant $0 < \gamma < 1/2$. Fix $d > 0$, and let $A_1, \ldots, A_d$ be i.i.d $\textrm{Ber}(p)$ random variables such that $p < 1 - 2 \gamma$. Then
\[ 
\Pr \left[\frac{1}{d}\sum_{j=1}^{d} A_j \geq p + \gamma \right] 
\leq p^{\Omega(d)} 
\]
where the $\Omega$ hides constants on the order of $\gamma$.
\end{claim}
\begin{proof}%
From a Chernoff bound, we get that
\begin{equation*}
\Pr\left [\frac{1}{d}\sum_{j=1}^{d} A_j \geq p + \gamma\right] \leq \exp\left(-D\left(p + \gamma \ \middle|\middle|\  p\right) \cdot d\right)
\end{equation*}
where $D(x||y) = x \ln\left( \frac{x}{y}\right) + (1-x) \ln\left( \frac{1-x}{1-y}\right)$ is the Kullback-Leibler divergence between two Bernoulli random variables. We claim that $D(p + \gamma || p) \geq \Omega(\log(1/p))$.
\begin{align*}
&D(p + \gamma || p) \\
&= (p + \gamma)\ln\left( \frac{p + \gamma}{p}\right) + (1-p-\gamma) \ln\left( \frac{1-p-\gamma}{1-p}\right) \\
&\geq \gamma \ln\left( \frac{\gamma}{p}\right) + (1-p-\gamma)\left(\frac{-2\gamma}{1-p}\right) \\
&\geq \gamma \ln\left( \frac{\gamma}{p}\right) - 2\gamma \\
&\geq \gamma \left(\ln\left( \frac{\gamma}{p}\right) - 2\right)
\end{align*}
where in the second line we use the inequality that $-x \leq \ln(1 - x/2)$ for $x \in (0,1)$, using the fact that $p < 1 - 2\gamma$.
\end{proof}

From the above lemmas, the proof of Lemma \ref{lem:hash-collision-bound-obliv} follows easily.
\begin{proof}[Proof of Lemma \ref{lem:hash-collision-bound-obliv}]
By Lemma \ref{lem:errs-bounded-obliv}, the probability of having more than $200 \alpha \eps |\Pi|$ errors in the protocol while also having $\Err \leq \frac{\eps}{\K} \CC$ is at most $p^{\Omega(|\Pi| / \eps)}$. The probability that the number of errors is smaller than $200 \alpha \eps |\Pi|$ and simultaneously the number of dangerous triples exceeds $k \cdot (200 \alpha \eps) |\Pi|$ is at most $p^{\Omega(k \eps |\Pi|)}$
, by Lemma~\ref{lem:bound-dangerous-obliv}.\footnote{Note that the $k$ in the statement of Lemma~\ref{lem:hash-collision-bound-obliv} and the $k$ with which we invoke Lemma~\ref{lem:bound-dangerous-obliv} here are different - technically, we invoke Lemma~\ref{lem:bound-dangerous-obliv} with its $k$ set to $k(200 \alpha \eps |\Pi|) / \Err$. We settle for this abuse of notation because $k$ plays the same role conceptually in Lemmas~\ref{lem:hash-collision-bound-obliv} and~\ref{lem:bound-dangerous-obliv}.} 
Hence, the union of these two events has probability at most 
\[
p^{\Omega(k \eps |\Pi|)},
\]
which is the only place where we use that $k = O(1/\eps)$. %
\end{proof}

Now that we have bounded the communication and number of hash collisions, we are ready to prove Theorem~\ref{thm:oblivious-adv-crs}.

\changelocaltocdepth{3}
\subsection{Completing the proof of Theorem~\ref{thm:oblivious-adv-crs}}\label{sec:proof-main-thm}
Recall that we set $K := m$ in InitializeState (Algorithm~\ref{alg:initialize-state}). Throughout this analysis, we assume that the number of errors that the adversary commits is $\Err$ such that $\Err \leq (\eps / m)\CC$, as this is what is implied by the adversary having rate $\eps / m$. Note that our final goal is to show that the probability that both $\Err \leq (\eps / m) \CC$ and that the protocol is not simulated correctly is small; at most $\exp(-\Omega(|\Pi|))$.

Lemma \ref{lem:phi-rises-enough} shows that in every iteration of \Alg, $\phi$ \emph{always} increases by at least $\K = m$. Hence, after at the end of the simulation after $100 |\Pi|$ iterations, we know that $\phi \ge 100 |\Pi| m$.

We conclude the argument by establishing that, when the protocol ends and $\phi \geq 100|\Pi|m$, the parties are done simulating the initial protocol correctly with high probability. We do this by appealing to the following claim, written for general $\K$, and apply it with $\K=m$ to finish the proof.

\begin{claim}
\label{claim:final-claim}
Suppose that $\phi \geq 100 |\Pi| \K$ at the end of \Alg. Furthermore suppose that $\CC$ and $\EHC$ satisfy the positive outcome of Lemma \ref{lem:bounding-collisions-main}. Specifically, that $\CC \leq 200\alpha |\Pi| \K$ and that $\EHC \leq (k+1) (200 \alpha \eps) |\Pi|$, with $k := (20C_6\alpha\eps^* / \eps) - 1$ for some $0 < \eps^* \leq 1/(4000 C_6 C_7 \alpha)$, such that $\eps$ is sufficiently smaller than $\eps^*$. 

Conditioned on these events, the parties have simulated the underlying protocol $\Pi$ correctly.
\end{claim}
\begin{proof}
First note that it is valid to invoke Lemma \ref{lem:bounding-collisions-main} with $k := (20C_6\alpha\eps^* / \eps) - 1$, since taking $\eps$ sufficiently small wrt $\eps^*$ makes $k$ be large enough for the lemma to apply. 

Recall our definition of the potential $\phi$:
\[
\phi = \sum_{(u,v) \in E} \left( (\K / m) G_{u,v} - \K \cdot \varphi_{u,v} \right) - C_1 \K B^* + C_7 \K \cdot \EHC.
\]

Since $\eps^*$ is sufficiently small, we get that 
\begin{align*}
C_7 \K \cdot \EHC \le C_7\K \cdot 4000 C_6  \alpha \eps^* |\Pi|  \le |\Pi| \K.
\end{align*}

Furthermore, the term $\sum  - \K \cdot \varphi_{u,v}$ is nonpositive, due to the fact that $\varphi_{u,v}$ is non-negative (Proposition \ref{prop:phi-nonnegative}). Therefore, we get that
\begin{equation}
\label{eq:put-all-together-1}
\sum_{(u,v) \in E} \frac{\K}{m}G_{u,v}  - C_1 \K B^* \geq 99 |\Pi| \K.
\end{equation}

Recall that $B^* = H^* - G^* \geq \max_{(u,v) \in E} (G_{u,v}) - \min_{(u,v) \in E}(G_{u,v})$. By plugging this into Eq.~\eqref{eq:put-all-together-1}, and recalling that $C_1\ge 2$, 
we get
\begin{align*}
 &99 |\Pi| \K 
 \leq \sum_{(u,v) \in E} \frac{\K}{m}G_{u,v}  - C_1 \K B^* \\
&\leq \K  \max_{(u,v) \in E} (G_{u,v}) - C_1 \K \left(\max_{(u,v) \in E} (G_{u,v}) - \min_{(u,v) \in E}(G_{u,v})\right) \\
&\leq \K  \max_{(u,v) \in E} (G_{u,v}) - \K\left (\max_{(u,v) \in E} (G_{u,v}) - \min_{(u,v) \in E}(G_{u,v}) \right) \\
&= \K \min_{(u,v) \in E}(G_{u,v})
\end{align*}
Hence, we conclude that $\min_{(u,v) \in E}(G_{u,v}) \geq 99 |\Pi|$. 
Therefore, each pair of parties have simulated $\Pi$ correctly for at least $|\Pi|$ chunks, which suffices to compute $\Pi$ correctly.

\end{proof}

To finish, we recall that Lemma \ref{lem:bounding-collisions-main} shows that with probability $1 - p^{\Omega(k \eps |\Pi|)}$, one of the following happens:
1) $\CC \leq 200\alpha |\Pi| m$ and $\EHC \leq (k+1) (200 \alpha \eps) |\Pi|$, or
2) $\Err > (\eps / m) \CC$. 

Since we take $p$ and $\alpha$ to be constants, the first case tells us that $\CC = O(|\Pi|m) = O(\CC(\Pi))$ with probability $1 - \exp(- \Omega(|\Pi|))$. In the first case, Claim \ref{claim:final-claim} also tells us that the parties have simulated $\Pi$ correctly. 

Finally, we recall that in Claim \ref{claim:final-claim}, we apply Lemma \ref{lem:bounding-collisions-main} with $k = \Theta(\eps^* / \eps)$. So we get that $1 - p^{\Omega(k \eps |\Pi|)} = 1 - \exp(- \Omega(\eps^* |\Pi|))$. Since $\eps^*$ is a fixed constant in terms of $\alpha, C_6$, and $C_7$, we can absorb it into the $\Omega$ to conclude that, with probability $1 - \exp(-\Omega(|\Pi|))$, the parties either simulate correctly or the event that $\Err > (\eps / m)\CC$ occurs. 
This finishes the proof of Theorem~\ref{thm:oblivious-adv-crs}.

\section{Acknowledgments}
We would like to thank the anonymous referees, who pointed out numerous inconsistencies and unclear phrasings. We think that addressing these has greatly improved the understandibility of this paper.

\bibliographystyle{IEEEtranS}
\bibliography{network}

\appendix
\section{The Meeting Points Mechanism}
\label{app:meetingpoints}
In this appendix we define and analyze the meeting points mechanism~\cite{haeupler14} and its respective potential~$\varphi_{u,v}$. Certain parts of the analysis mostly repeat~\cite{haeupler14} and are included here for completeness. The main difference from~\cite{haeupler14} is that the meeting points mechanism is interleaved over several iteration of the simulation protocol. This brings two potential difficulties. First, $\varphi_{u,v}$ may change outside the meeting-point phase. Second, the transcript may change \emph{while the meeting-points mechanism is still in progress}. We eliminate the latter by fixing a party's transcript until the meeting points mechanism reports the transcripts of both parties are consistent. The former requires a more careful analysis of~$\varphi_{u,v}$, and is handled in Claim~\ref{claim:simple-Hpot-claim}. 

\label{sec:meeting-points}
\subsection{Meeting Points Protocol Between Two Parties}
Below we describe the meeting points protocol that parties do pairwise. We write the algorithm as it is performed by some party $u$ with one of its neighbors $v$. In all variables below, we will drop the subscript of $(u,v)$ but it is implied. Specifically, $k$ below denotes $k_{u,v}$, and the same is true for $E$, $T$, $mpc1$, $mpc2$, and $status$.
\begin{algorithm}
\caption{\textbf{MeetingPoints}($u$,$v$, $S_{i,u,v}$, $\K$)}
\label{alg:meeting-points}
\begin{algorithmic}[1]
\small
 \Statex // Initialization:
  \State Method called by $u$, with $v \in N(u)$. $S_{i,u,v} \in \{ 0,1\}^{\Theta(|\Pi|\K)}$ is a large random seed to be split up and used for hashing. $\Pi \gets$ protocol to be simulated.
  \State $h \gets$ inner product hash family (Definition~\ref{def:ip-hash}) w/ input length $\Theta(|\Pi|\K)$, $p= 2^{-\Theta(\K / m)}$ sufficiently small, $o = \Theta(\K / m)$, $s = |S_{i,u,v}|/10$.
  \State $(S_1, S_2, \ldots, S_{10}) \gets S_{i,u,v}$. $S_{i,u,v}$ is split into ten seeds. Wlog we assume that $S_1, \ldots, S_5$ are for the hashes it sends, and $S_6, \ldots, S_{10}$ are for the hashes it uses for comparisons (so $v$ is using $S_6, \ldots S_{10}$ for the hashes it sends, and $S_1, \ldots, S_5$ for comparisons, respectively).
  \State $k,E, mpc1, mpc2 \gets 0$
  \Statex
  \Statex // Execute per activation:
  \State $k \gets k+1$
  \State $\widetilde{k} \gets 2^{\lfloor \log k \rfloor}$. Let $c$ be the largest integer such that $c\widetilde{k} \leq |T_{u,v}|$.
  \State $T_1 \gets T_{u,v}[1:c\widetilde{k}], T_2 \gets T_{u,v}[1:(c-1)\widetilde{k}]$
  \State Send $(h_{S_1}(k), h_{S_2}(T_1), h_{S_3}(T_1), h_{S_4}(T_2), h_{S_5}(T_2))$ to neighbor $v$.
  \State $(H_k, H_{T_1}^{(1)}, H_{T_2}^{(1)}, H_{T_1}^{(2)}, H_{T_2}^{(2)})$\strut
  \Statex $\qquad\gets (h_{S_6}(k), h_{S_7}(T_1), h_{S_8}(T_2), h_{S_9}(T_1), h_{S_{10}}(T_2))$\strut
  \State Receive $(H_k', H_{T_1}^{(1)'}, H_{T_1}^{(2)'}, H_{T_2}^{(1)'}, H_{T_2}^{(2)'})$ from our neighbor $v$. \label{step:recHash}
  \If {$H_k \neq H_k'$}
    \State {$E \gets E+1$}
  \EndIf
  \If {$k=1, E=0$, and $H_{T_1}^{(1)} = H_{T_1}^{(1)'}$} \label{step:HashesMatch}
    \State {$k \gets 0$}
    \State {$status \gets$ ``simulate''}
    \Return $status$
  \EndIf
  \If {$H_{T_1}^{(1)} = H_{T_1}^{(1)'}$ or $H_{T_1}^{(2)} = H_{T_2}^{(1)'}$}
    \State {$mpc1 \gets mpc1 + 1$}
  \ElsIf {$H_{T_2}^{(1)} = H_{T_1}^{(2)'}$ or $H_{T_2}^{(2)} = H_{T_2}^{(2)'}$}
    \State {$mpc2 \gets mpc2 + 1$}
  \EndIf
  \If {$2E \geq k$}~\label{step:resetMP}
    \State {$k \gets 0, E \gets 0, mpc1 \gets 0, mpc2 \gets 0$}
    \State {$status \gets$ ``meeting points''}
  \ElsIf {$k = \widetilde{k}$}
    \If {$mpc1 > 0.4k$}
      \State {$T \gets T_1$}
      \State {$k \gets 0, E \gets 0$}
    \ElsIf {$mpc2 > 0.4k$}
      \State {$T \gets T_2$}
      \State {$k \gets 0, E \gets 0$}
    \EndIf
    \State {$mpc1 \gets 0, mpc2 \gets 0$}
    \State {$status \gets$ ``meeting points''}
  \Else
    \State {$status \gets$ ``meeting points''}
  \EndIf \\
  \Return $status$
\end{algorithmic}
\end{algorithm}

 Roughly speaking, there are up to two types of actions performed in each round of meeting points. The pair of parties first send each other hashes of their truncated transcripts, and increment their respective $E$, $T_1$, or $T_2$ counters if applicable. In keeping with Haeupler's paper we will call this the \emph{verification} phase of meeting points. Note that the verification phase \emph{always} occurs during a round of meeting point exchange.

After exchanging hashes, the parties judge whether or not to take further action based on the values of $E$, $T_1$, and $T_2$. We will call this the \emph{transition} phase of meeting points. For example, if $2E_{u,v} > k_{u,v}$, then party $u$ will set all its variables in the meeting points computation to 0. We will call these transitions \emph{reset} transitions\footnote{This is called an error transition in~\cite{haeupler14}.}. Otherwise, if $mpc1_{u,v} > 0.4k_{u,v}$, party $u$ will transition to meeting point 1, and similarly for meeting point 2. These will be called \emph{meeting point} transitions.

We have the parties use separate seeds for each hash comparison, and so they take the large shared seed between them and split it into many smaller seeds. The reason for this is somewhat technical; it makes the events of hash collisions for the different comparisons independent (when the seeds themselves are independent), which will be useful in our analysis of removing the common random string in the second part of this work~\cite{GKR-2}\footnote{While it helps with our analysis, is not clear that using separate seeds is \emph{necessary} to remove the common random string.}.
It suffices to use a single seed in Meeting Points for the usage in Section 4 (with a common random string). 

\subsection{Notation}
We establish some notation that will be used in this section. \gnote{removed a bullet pt for reviewer 2}
\begin{itemize}
\item $E_{u,v}, k_{u,v}$, $mpc1_{u,v}$, and  $mpc2_{u,v}$ are all defined as the value of the corresponding variable that party $u$ has for the communication link $(u,v)$. We can also define these with $v$ coming first in the subscript (e.g. $E_{v,u}$); these will correspond to the value that party $v$ has for the link $(u,v)$.
\item $WM_{u,v}$ corresponds to the number of wrong matches or mismatches that contribute to the current values of $mpc1_{u,v}$ and $mpc2_{u,v}$ (the counters for party $u$) on link $(u,v)$. That is, if $(T_1)_{u,v} \in \{ (T_1)_{v,u}, (T_2)_{v,u}\}$ but party $u$ does not increment $mpc1$ due to an error, then we increment $WM_{u,v}$ by 1. Similarly, if  $(T_1)_{u,v} \not\in \{ (T_1)_{v,u}, (T_2)_{v,u}\}$ but party $u$ increments $mpc1$ due to an error or hash collision, we increment $WM_{u,v}$ by 1. Define $WM_{v,u}$ similarly, but for the increments and non-increments of party $v$.
\item Let $k^{\{u,v\}} = k_{u,v} + k_{v,u}$. Define $E^{\{u,v\}}, mpc1^{\{u,v\}}, mpc2^{\{u,v\}}$, and $WM^{\{u,v\}}$ similarly.
\item Given some number $V$ that depends on the transcript $T$, define $\Delta(V)$ to be the change in $V$ that results after one invocation to Meeting Points (Algorithm \ref{alg:meeting-points}) for all pairs of adjacent parties (i.e. in the Meeting Points phase of Algorithm~\ref{alg:robust-protocol}). Throughout the analysis, we assume we have fixed some particular invocation to the Meeting Points algorithm (i.e. corresponding to a fixed iteration $i$), and we want to understand how this invocation affects various statistics about the transcript $T$.
\item Similarly, given a number $V$ that depends on the transcript $T$, define $\Delta_{u,v}(V)$ to be the change in $V$ that results in parties $u$ and $v$ running Meeting Points (Algorithm \ref{alg:meeting-points}) with each other, and no other pair of parties making changes.
\item When it is understood that we are only talking about the interaction between a pair of parties $u$ and $v$, we will drop the superscript $\{u,v\}$ off terms such as $k^{\{u,v\}}, E^{\{u,v\}}$, $WM^{\{u,v\}}$, etc. with the understanding that we are only talking about this pairwise interaction.
\end{itemize}
\subsection{Potential Analysis}
\label{sec:app-mp-potential}

Following the paper of Haeupler~\cite{haeupler14}, we define $\varphi_{u,v}$ as follows. Let $1 < C_1 < C_2 < C_3 < C_4 < C_5 < C_6 < C_7$, where each $C_i$ is selected to be sufficiently larger than $C_{i-1}$ (or 1 if $i=1$).
\begin{equation}\label{eqn:Hpot}
 \varphi_{u,v} = \begin{cases}
\begin{aligned} &C_3 \cdot B_{u,v} - C_2 \cdot k^{\{u,v\}} \\ &+ C_5 \cdot E^{\{u,v\}} + 2C_6  \cdot WM^{\{u,v\}}\end{aligned} &\text{ if }k_{u,v} = k_{v,u}\Tstrut \\[2.5ex]
\Tstrut\begin{aligned} &C_3 \cdot B_{u,v} + 0.9C_4 \cdot k^{\{u,v\}} \\  &- C_4 \cdot E^{\{u,v\}} + C_6 \cdot WM^{\{u,v\}}\end{aligned} &\text{ if }k_{u,v} \neq k_{v,u}
\end{cases}
\end{equation}

Recall that we define our final potential function as follows:
\[
\phi = \sum_{(u,v) \in E} (\K/m)G_{u,v} - \K \cdot \varphi_{u,v} - C_1 \K B^* + C_7 \K \cdot EHC.
\]

We start with some simple claims about $\varphi_{u,v}$ that we use directly in the main proof. The proof of following claim (Claim \ref{claim:simple-Hpot-claim}), unlike the other claims in this section, requires knowledge of the robust protocol (Algorithm \ref{alg:robust-protocol}) beyond the definition of $\varphi_{u,v}$ and the Meeting Points protocol (Algorithm \ref{alg:meeting-points}).
\begin{claim}
\label{claim:simple-Hpot-claim}
The potential $\varphi_{u,v}$ does not change in the Flag Passing phase. In each of the Rewind and Simulation phases, it changes by at most $C_3$. In the absence of errors or hash collisions between $u$ and $v$ in the \emph{iteration as a whole}, $\varphi_{u,v}$ does not increase in the Rewind or Simulation phases.
\end{claim}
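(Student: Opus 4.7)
The plan is to dissect $\varphi_{u,v}$ into its constituent state variables and argue which of them can possibly change in each of the three phases. Recall that $\varphi_{u,v}$ is a function of $B_{u,v}$, $k^{\{u,v\}}$, $E^{\{u,v\}}$, $mpc1^{\{u,v\}}$, $mpc2^{\{u,v\}}$, and $WM^{\{u,v\}}$. The meeting-point state variables $k, E, mpc1, mpc2$ and $WM$ are read and written only inside Algorithm \ref{alg:meeting-points}, so they remain fixed throughout the Flag Passing, Simulation and Rewind phases. This observation will reduce the whole analysis to the behavior of $B_{u,v}$.

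For the Flag Passing phase, I would observe that Algorithm \ref{alg:flag-passing} only exchanges network-correctness bits along a spanning tree and never touches any $T_{u,v}$ or any meeting-point variable, so every quantity appearing in $\varphi_{u,v}$ is literally unchanged, and $\Delta \varphi_{u,v} = 0$. For the Simulation and Rewind phases, since only $B_{u,v} = \max(|T_{u,v}|,|T_{v,u}|) - G_{u,v}$ can move, it suffices to show $|\Delta B_{u,v}| \le 1$. In the Simulation phase the parties append at most one chunk to each of $T_{u,v}$ and $T_{v,u}$, so $|T_{u,v}|$, $|T_{v,u}|$ and $G_{u,v}$ all move by at most 1; hence $|\Delta B_{u,v}| \le 1$ and $|\Delta \varphi_{u,v}| \le C_3$. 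In the Rewind phase, the $alreadyRewound_{u,v}$ flag caps each party at a single truncation, so $|T_{u,v}|$ and $|T_{v,u}|$ each drop by at most 1 and $G_{u,v}$ drops by at most 1, again giving $|\Delta B_{u,v}| \le 1$.

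It remains to show that, when no errors or hash collisions occur on the link $(u,v)$ in the entire iteration, $B_{u,v}$ never increases during Simulation or Rewind. For Simulation, note that no error between $u$ and $v$ in the Meeting Points phase implies that $u$ and $v$ agree on $status_{u,v}$ (hence also on whether they exchange $\perp$ at the beginning of the simulation phase after flag-passing). If either side sends $\perp$, neither writes to its pairwise transcript, so $B_{u,v}$ is unchanged. Otherwise both sides simulate the same chunk, and in the absence of channel noise on $(u,v)$ the transcripts remain equal: $|T_{u,v}|$, $|T_{v,u}|$ and $G_{u,v}$ all increase by exactly one, leaving $B_{u,v}$ fixed.

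For Rewind, the main obstacle is a subtle timing analysis: one side may truncate before the other receives the rewind message, so $B_{u,v}$ might momentarily jump, and what we actually need is that the \emph{net} change over the whole phase is non-positive. The plan is to do a short case analysis over who, if anyone, issues a rewind on $(u,v)$. If $status_{u,v} = \text{``meeting points''}$, both parties skip the truncation block and neither sends a rewind, so nothing changes. Otherwise, since no error corrupts $(u,v)$ in the phase, every rewind message is delivered intact; the $alreadyRewound$ flag guarantees that a party which has already truncated ignores a later incoming rewind, and that a party which has not yet truncated will truncate exactly once upon receiving one. A direct enumeration of the four combinations (neither sends, only $u$ sends, only $v$ sends, both send) shows that by the end of the phase $|T_{u,v}|$ and $|T_{v,u}|$ have decreased by the same amount $\Delta \in \{0,1\}$, and any chunk dropped was common to both transcripts (so $G_{u,v}$ decreases by the same $\Delta$). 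Hence $B_{u,v}$ returns to its starting value and $\varphi_{u,v}$ does not increase, completing the proof.
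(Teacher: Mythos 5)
Your proof is correct and follows essentially the same route as the paper: reduce $\varphi_{u,v}$ to the $B_{u,v}$ term (since the meeting-point state variables $k, E, mpc, WM$ are only touched inside the Meeting Points subroutine), bound $|\Delta B_{u,v}|\le 1$ in the Simulation and Rewind phases, and then argue by cases that $B_{u,v}$ does not increase when the link $(u,v)$ is noise- and collision-free for the whole iteration. The paper presents the Simulation case as the contrapositive (three scenarios in which $B_{u,v}$ could rise, each shown to require an error or collision), and the Rewind case via a case split on $status$ and $alreadyRewound$, but the underlying reasoning is the same as yours. One small inaccuracy: the parenthetical ``hence also on whether they exchange $\perp$'' does not actually follow from $status_{u,v}=status_{v,u}$ --- a party's decision to send $\perp$ is governed by $netCorrect$, which depends on flags propagated over the whole spanning tree and can differ between $u$ and $v$ even if their pairwise statuses agree. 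Fortunately your very next sentence (``if either side sends $\perp$, neither writes to its pairwise transcript'') handles exactly that asymmetric case, so the error is cosmetic rather than substantive. Both your argument and the paper's implicitly rely on the fact that, absent noise and collisions on $(u,v)$ during Meeting Points, the two endpoints end that phase with the same $status$; this is true but deserves a sentence of justification (it follows because the $k$-hash comparison symmetrizes the $E$ counters and the $T_1$-hash comparison is symmetric when hashes are transmitted faithfully), and neither you nor the paper spell it out.
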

\begin{proof}
The only term in $\varphi_{u,v}$ that changes in these phases is $B_{u,v}$. $B_{u,v}$ does not change in the Flag Passing phase, can change by at most 1 in the Simulation phase, and can change by at most 1 in the Rewind phase. This establishes the first part of the claim.

Now we establish the second part of the claim. First we consider the Simulation phase. At a high level, $B_{u,v}$ can increase in the Simulation phase if one of the following is true: 1) $u$ and $v$ are in disagreement, but both decide to simulate anyway, 2) The adversary puts an error between $u$ and $v$ in the Simulation phase, or 3) $u$ decides not to simulate $\Pi_{u,v}$, but $v$ decides to simulate $\Pi_{v,u}$. We now formalize the three cases and prove that all of them require an error or hash collision between $u$ and $v$ somewhere in the iteration.

In the first case, $\Pi_{u,v} \neq \Pi_{v,u}$ but $netCorrect_u = netCorrect_v = 1$. Note that this implies that $status_{u,v} = status_{v,u} = \text{``simulate''}$, since otherwise one of the parties would have $netCorrect = 0$. This means there was an error or hash collision between $u$ and $v$ in the Meeting Points phase to make them think that their transcripts matched. In the second case, there was an error between $u$ and $v$ in the Simulation phase.
In the third case, note that $v$ simulates in $\Pi_{v,u}$, so we know that $netCorrect_v = 1$ and $v$ never received $\perp$ from $u$ in the Simulation phase. Furthermore, $u$ does not simulate. This can happen due to one of two reasons. The first case is that $netCorrect_u = 0$, and so $u$ did not want to simulate after the Flag Passing phase. In this case, $u$ would have sent $\perp$ to $v$. Since $v$ did not receive it, the adversary must have deleted it. The second case is that $u$ had $netCorrect_u = 1$, but received a $\perp$ on the link $(u,v)$. However, $netCorrect_v = 1$, so $v$ would not have sent this $\perp$. Hence, it was inserted by the adversary. Either way, there is an error on the link $(u,v)$ during the Simulation phase.

Now we consider the Rewind phase. If neither $u$ nor $v$ send or receive a rewind message on $(u,v)$, then it is clear that $B_{u,v}$ is unchanged. We assume that any rewind sent on the link $(u,v)$ reaches the recipient. If this is not the case, then there was a deletion on the link $(u,v)$ and we are done. Similarly, we assume that any rewind received on the link $(u,v)$ was sent by the other party (otherwise it was an insertion). We break the remainder of the proof into cases depending on which parties want to send rewinds on the link $(u,v)$.

Suppose that $status_{u,v} = status_{v,u} = \text{``simulate''}$, $alreadyRewound_{u,v} = alreadyRewound_{v,u} = 0$. Then the following actions all happen together in the same round or not at all: $u$ (resp. $v$) sends ``rewind'' to $v$ (resp. $u$), $u$ truncates $\Pi_{u,v}$, $v$ truncates $\Pi_{v,u}$, and $alreadyRewound_{u,v}$ and $alreadyRewound_{v,u}$ are set to 1. If all these actions happen, then $B_{u,v}$ does not increase. This is because $\max\{ |\Pi_{u,v}|, |\Pi_{v,u}|\}$ falls by one, and $G_{u,v}$ falls by at most 1. Furthermore, after these actions happen, $alreadyRewound_{u,v} = alreadyRewound_{v,u} = 1$, so $u$ and $v$ will not rewind on $(u,v)$ anymore.

If $status_{u,v} = \text{``meeting points''}$ or $alreadyRewound_{u,v} = 1$, then $u$ will not send a rewind message to $v$ nor take any action if it receives a rewind from $v$. Therefore, if $v$ does not send a rewind message to $u$, then $B_{u,v}$ is unchanged in this Rewind phase. If $v$ does send a rewind message to $u$, then we know that just before the message was sent we had $status_{v,u} = \text{``simulate''}$ and $alreadyRewound_{v,u} = 0$. If $status_{u,v} = \text{``meeting points''}$, then there must have been a hash collision or error between $u$ and $v$ in the Meeting Points phase, otherwise they would both have the same status. Otherwise if $alreadyRewound_{u,v}=1$, then $u$ has already truncated $\Pi_{u,v}$ by one chunk. So the net change to $B_{u,v}$ from the truncation of $\Pi_{u,v}$ and $\Pi_{v,u}$ that have happened in this rewind phase is nonpositive, as established in the previous paragraph. Furthermore, after this rewind is sent, both parties have $alreadyRewound=1$, and will not do anything further on this link in the rewind phase.
\end{proof}

\begin{proposition}
\label{prop:phi-nonnegative}
At the beginning of any iteration $i$ of the robust protocol (Algorithm \ref{alg:robust-protocol}), the following is true for all pairs $(u,v) \in E$:
\[ 0 \leq B_{u,v} \leq \varphi_{u,v}.
\]
As a corollary, $\sum_{(u,v) \in E} \varphi_{u,v} \geq \sum_{(u,v) \in E} B_{u,v}$.
\end{proposition}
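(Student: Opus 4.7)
The non-negativity $B_{u,v} \geq 0$ follows directly from the definition, since $G_{u,v}$ is the length of a common prefix of $T_{u,v}$ and $T_{v,u}$ and hence at most $\min(|T_{u,v}|, |T_{v,u}|)$. For the upper bound $B_{u,v} \leq \varphi_{u,v}$, I would proceed by induction on the iteration index~$i$. The base case $i = 1$ is immediate: InitializeState() (Algorithm~\ref{alg:initialize-state}) initializes $T_{u,v} = \emptyset$ and all meeting-points counters to zero, so both $B_{u,v}$ and $\varphi_{u,v}$ evaluate to 0 (either branch of the definition gives the same value).

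For the inductive step, I would track the change of $\varphi_{u,v} - B_{u,v}$ through each of the four phases of iteration~$i$. The Flag Passing phase leaves all relevant quantities unchanged. The Simulation and Rewind phases leave the meeting-points state variables $k^{\{u,v\}}, E^{\{u,v\}}, mpc1^{\{u,v\}}, mpc2^{\{u,v\}}, WM^{\{u,v\}}$ untouched, so by Claim~\ref{claim:simple-Hpot-claim} the only contribution to $\Delta \varphi_{u,v}$ is $C_3 \cdot \Delta B_{u,v}$. Since $C_3 > 1$, an increase in $B_{u,v}$ by 1 strictly widens the gap, while a decrease in $B_{u,v}$ (which requires $B_{u,v} \geq 1$ beforehand) is absorbed by the $(C_3 - 1) \cdot B_{u,v}$ of slack already present in the inductive hypothesis, together with the non-negative contributions from $C_5 E$ and $C_6 WM$ in case~1 (or the analogous terms in case~2).

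The main obstacle is the Meeting Points phase, which requires a case analysis of the single call to Algorithm~\ref{alg:meeting-points} on each link $(u,v)$. The cases split according to (i)~whether $k_{u,v} = k_{v,u}$ before the call, which selects the active branch of $\varphi_{u,v}$; (ii)~the transition type executed by each party, which is verification only, a reset triggered by $2E \geq k$, or a meeting-point transition at $k = \widetilde k$; and (iii)~whether channel errors or hash collisions corrupted the exchanged hashes. The protocol enforces helpful quantitative relationships in each subcase: a reset fires only when $2E \geq k$, so the negative term $-C_2 k^{\{u,v\}}$ is dominated by $C_5 E^{\{u,v\}}$ since $C_5$ is chosen sufficiently large relative to $C_2$; a meeting-point transition at $k = \widetilde k$ truncates $T_{u,v}$ by an amount that lowers $B_{u,v}$ correspondingly, unless wrong matches occurred, whose contribution is paid for by the $2 C_6 \cdot WM^{\{u,v\}}$ term; and transitions that desynchronize the parties (so that $k_{u,v} \neq k_{v,u}$) switch the active branch of $\varphi_{u,v}$, but the two branches are calibrated by the hierarchy $C_1 < C_2 < \cdots < C_6$ precisely so that the induced jump across the boundary is non-negative. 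Combining the subcases and invoking the inductive hypothesis yields $\varphi_{u,v} - B_{u,v} \geq 0$ at the start of iteration $i+1$, completing the induction.
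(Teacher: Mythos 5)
The paper's proof is \emph{not} by induction: it reads off the bound $B_{u,v}\le\varphi_{u,v}$ directly from two structural invariants of the meeting-points state, namely (a)~$2E^{\{u,v\}}\le k^{\{u,v\}}$, which the reset condition on line~\ref{step:resetMP} enforces at the start of every iteration, and (b)~the fact that once $k_{u,v}=k_{v,u}>4B_{u,v}$ without a meeting-point transition having fired, both parties have been comparing meeting points that lie within the common prefix, so $WM^{\{u,v\}}>0.3\,k^{\{u,v\}}$. Invariant~(a) handles the branch $k_{u,v}\ne k_{v,u}$ outright, and for the branch $k_{u,v}=k_{v,u}$ one splits on whether $C_2 k^{\{u,v\}}\le (C_3-1)B_{u,v}$; when it is not, invariant~(b) makes $2C_6 WM^{\{u,v\}}$ dominate $C_2 k^{\{u,v\}}$. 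No induction is needed.

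Your inductive plan is a genuinely different route, and in principle could be made to work, but as written it has two concrete gaps. First, the claim that ``a reset fires only when $2E\ge k$, so $-C_2 k^{\{u,v\}}$ is dominated by $C_5 E^{\{u,v\}}$'' inverts the actual invariant: the reset condition ensures that at the start of an iteration $2E^{\{u,v\}}\le k^{\{u,v\}}$, so $E$ can be arbitrarily small relative to $k$ (e.g., $E=0$, $k$ large), and $C_5 E$ does \emph{not} control $-C_2 k$ in the branch $k_{u,v}=k_{v,u}$. What actually pays for $-C_2k$ there is the $2C_6 WM^{\{u,v\}}$ term via invariant~(b) above; you mention $WM$ only in the meeting-point-transition subcase, not for the verification-only iterations where $k$ grows without a transition. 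Second, for the Rewind phase you claim a decrease in $B_{u,v}$ is ``absorbed by the $(C_3-1)B_{u,v}$ of slack already present in the inductive hypothesis,'' but the hypothesis $\varphi_{u,v}\ge B_{u,v}$ only gives slack $\ge 0$, not $\ge (C_3-1)B_{u,v}$. The extra slack is in fact present, but for a state-dependent reason: a party $u$ can only rewind $T_{u,v}$ when $status_{u,v}=\text{``simulate''}$, in which case $k_{u,v}=E_{u,v}=WM_{u,v}=0$, so $u$'s contribution to the non-$B$ terms vanishes and the other party's contribution is nonnegative by invariant~(a); hence $\varphi_{u,v}\ge C_3 B_{u,v}$ in precisely those states. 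Your plan needs to establish that stronger pointwise bound rather than appeal to the weaker inductive hypothesis. The paper sidesteps both issues because it never uses an inductive hypothesis at all; it derives $\varphi_{u,v}\ge C_3 B_{u,v}$ (or $\ge B_{u,v}$ when $k$ is small) afresh from the state.
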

\begin{proof}
If $k_{u,v} = k_{v,u}$, follows from fact that we can take $C_3 > 8C_2 + 1 $. Thus, if $k_{u,v}$ is large enough such that $C_2 k^{\{u,v\}} > (C_3-1) B_{u,v}$, this means that we have $k_{u,v} = k_{v,u} > 4 B_{u,v}$. But we know that when both $k_{u,v}$ and $k_{v,u}$ are larger than $2B_{u,v}$, both parties are including hashes of their pairwise transcript truncated below the chunk $G_{u,v}$, which should match. So this means the fact that $k_{u,v}$ and $k_{v,u}$ increased so much more without the parties doing a meeting point transition means that there were many mismatches due to errors - specifically, $WM^{\{u,v\}} > 0.6(1/2)k^{\{u,v\}}$. Hence, in this case, the sum of all the terms that do not include $B_{u,v}$ is nonnegative.

If $k_{u,v} \neq k_{v,u}$, then this follows from the fact that $k^{\{u,v\}} \geq 2E^{\{u,v\}}$ so $0.9C_4 k^{\{u,v\}} \geq C_4 E^{\{u,v\}}$.
\end{proof}

Ideally, we would like to lower bound $\Delta(\phi)$, the amount that $\phi$ changes overall in the Meeting Points phase, by $\sum_{u,v} \Delta_{u,v}(\phi)$, the sum of the amounts that $\phi$ changes just due to the Meeting Points interaction between each pair of adjacent parties $(u,v)$. The reason this is nontrivial is that $\phi$ includes the global term $B^*$, which is a max over terms from each party rather than a sum. 
To this end, we define a lower bound on $\Delta_{u,v}(\phi)$ as follows:
\begin{align*}
&\widetilde{\Delta_{u,v}(\phi)} \\
&= (\K/m)\Delta_{u,v}(G_{u,v}) - \K \cdot \Delta_{u,v}(\varphi_{u,v}) + C_1 \K \cdot \Delta_{u,v}(G^*).
\end{align*}

\begin{claim}
\label{claim:total-potential-rises}
 $\Delta(\phi) \geq \sum_{(u,v) \in E} \left( \widetilde{\Delta_{u,v}(\phi)} + C_7 \K \Delta_{u,v}(EHC) \right)$.
\end{claim}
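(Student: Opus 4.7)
The plan is to decompose $\Delta(\phi)$ into contributions from individual links together with a global contribution coming from the $B^*$ term, and then show that this global contribution dominates the $G^*$ correction terms appearing in $\widetilde{\Delta_{u,v}(\phi)}$.

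First, observe that the quantities $G_{u,v}$, $\varphi_{u,v}$, and the per-link increment to $\EHC$ depend only on state internal to the link $(u,v)$ (namely $T_{u,v}, T_{v,u}$ and the Meeting Points state variables $k_{\cdot,\cdot}, E_{\cdot,\cdot}, WM_{\cdot,\cdot}, mpc1_{\cdot,\cdot}, mpc2_{\cdot,\cdot}$ for that link). Since Meeting Points is run in parallel across all links, the action on link $(u,v)$ does not affect the state on any other link. Hence
\[
\Delta(G_{u,v}) = \Delta_{u,v}(G_{u,v}), \qquad \Delta(\varphi_{u,v}) = \Delta_{u,v}(\varphi_{u,v}), \qquad \Delta(\EHC) = \sum_{(u,v) \in E} \Delta_{u,v}(\EHC).
\]
Substituting these identities into the definition of $\phi$ yields
\[
\Delta(\phi) = \sum_{(u,v) \in E} \left[(\K/m)\Delta_{u,v}(G_{u,v}) - \K \Delta_{u,v}(\varphi_{u,v}) + C_7 \K \Delta_{u,v}(\EHC)\right] - C_1 \K \Delta(B^*),
\]
so the claim reduces to establishing the single inequality
\[
-\Delta(B^*) \;\ge\; \sum_{(u,v) \in E} \Delta_{u,v}(G^*).
\]

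To handle this, I would use the key observation that during the Meeting Points phase the transcripts can only be truncated (via the lines $T \gets T_1$ and $T \gets T_2$ in Algorithm~\ref{alg:meeting-points}), never extended. Consequently $|T_{u,v}|$ is non-increasing on every link, so $H^* = \max_{u,v}|T_{u,v}|$ is non-increasing, giving $-\Delta(H^*) \ge 0$ and therefore
\[
-\Delta(B^*) \;=\; \Delta(G^*) - \Delta(H^*) \;\ge\; \Delta(G^*).
\]
It then suffices to prove $\Delta(G^*) \ge \sum_{(u,v)} \Delta_{u,v}(G^*)$. By the same monotonicity argument applied to $G_{u,v}$ (truncation can only shrink the longest common prefix of $T_{u,v}$ and $T_{v,u}$), each $G_{u,v}$ is non-increasing, and therefore every $\Delta_{u,v}(G^*) \le 0$.

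I expect the only mildly delicate step is this final comparison, which I would handle by case analysis. Write $G^{*\prime}$ and $G'_{u,v}$ for the post-Meeting-Points values, and let $(u_0,v_0)$ achieve the new minimum, i.e. $G^{*\prime} = G'_{u_0,v_0}$. If $G'_{u_0,v_0} \ge G^*$, then $\Delta(G^*) \ge 0$ while every $\Delta_{u,v}(G^*) = \min(G^*, G'_{u,v}) - G^* \le 0$, and the inequality is trivial. Otherwise $G'_{u_0,v_0} < G^*$, and then $\Delta_{u_0,v_0}(G^*) = G'_{u_0,v_0} - G^* = \Delta(G^*)$, while every other summand $\Delta_{u,v}(G^*) \le 0$, so $\sum_{(u,v)} \Delta_{u,v}(G^*) \le \Delta_{u_0,v_0}(G^*) = \Delta(G^*)$. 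This establishes the claim.
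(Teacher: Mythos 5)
Your proof is correct and follows essentially the same route as the paper's: both decompose $\Delta(\phi)$ link-by-link, use $-\Delta(B^*)\ge\Delta(G^*)$ (since $H^*$ is non-increasing during Meeting Points), and then bound $\Delta(G^*)$ above $\sum_{(u,v)}\Delta_{u,v}(G^*)$ by observing that each $\Delta_{u,v}(G^*)\le 0$ while one of them equals $\Delta(G^*)$. The paper phrases this last step as $\Delta(G^*)=\min_{(u,v)}\Delta_{u,v}(G^*)\ge\sum_{(u,v)}\Delta_{u,v}(G^*)$, which is the same argument your case analysis spells out.
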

\begin{proof}
\begin{align*}
\Delta(\phi) 
&= (\K / m) \Delta(\sum G_{u,v}) - \K \cdot \Delta(\sum \varphi_{u,v}) \\ & \quad- C_1 \K \cdot \Delta(B^*) + C_7 \K \cdot \sum \Delta_{u,v}(EHC)\\
&\geq \sum (\K / m) \Delta(G_{u,v}) - \K \cdot \sum \Delta(\varphi_{u,v}) \\ & \quad + C_1 \K \cdot \Delta(G^*)  + C_7 \K \cdot \sum \Delta_{u,v}(EHC)\\
&= \sum (\K / m) \Delta(G_{u,v}) - \K \cdot \sum \Delta(\varphi_{u,v}) \\ & \quad + C_1 \K \cdot \min_{(u,v) \in E}(\Delta_{u,v}(G^*)) + C_7 \K \cdot \sum \Delta_{u,v}(EHC)\\
&\geq \sum (\K / m) \Delta(G_{u,v}) - \K \cdot \sum \Delta(\varphi_{u,v}) \\ & \quad + C_1 \K \cdot \sum \Delta_{u,v}(G^*) + C_7 \K \cdot \sum \Delta_{u,v}(EHC) \\
&= \sum_{(u,v) \in E} \left( \widetilde{\Delta_{u,v}(\phi)} + C_7 \K \Delta_{u,v}(EHC) \right).
\end{align*}
The second line follows from the fact that parties do not simulate in the meeting points phase, so $\Delta(B^*) = \Delta(H^*) - \Delta(G^*) \leq -\Delta(G^*)$. The third line follows from the fact that the parties all do meeting points in parallel and the definition of $G^*$: after doing the meeting points phase, there is some pair of parties $(u, v)$ such that the maximum chunk number that they have simulated correctly is the new value of $G^*$. Then, by definition, we have that $\Delta_{u,v}(G^*) = \Delta(G^*)$. The fourth line follows from the fact that parties do not simulate in meeting points, so $\Delta_{u,v}(G^*) \leq 0$.
\end{proof}

The main claim that we establish in this section is that if $status_{u,v}$ or $status_{v,u}$ is ``meeting points'' after the Meeting Points algorithm, then the function $\widetilde{\Delta_{u,v}(\phi)}$ rises by $\Omega(\K)$ in the meeting points phase between parties $u$ and $v$ in the absence of errors and hash collisions. Furthermore, if $status_{u,v} = \text{``simulate''}$ after Meeting Points, then $\varphi_{u,v}$ does not change. Note that this implies that the potential $\phi$ rises by at least $\Omega(c \cdot \K)$ where $c$ is the number of pairs of adjacent parties $(u, v)$ such that $status_{u,v} = \text{``meeting points''}$ and no errors or hash collisions occur between them, by Claim \ref{claim:total-potential-rises}. The proof of this is very similar to the main proof in~\cite{haeupler14}, where he effectively shows that $G_{u,v} - \varphi_{u,v}$ rises by $\Omega(1)$ in each iteration of Meeting Points and Simulation.

In the rest of the section, we will fix parties $u$ and $v$ and focus on how the potential between them changes after they do Meeting Points (Algorithm \ref{alg:meeting-points}) with each other. As noted earlier, we will drop the superscript $\{u,v\}$ off terms such as $k^{\{u,v\}}, E^{\{u,v\}}$, $WM^{\{u,v\}}$, etc. with the understanding that we are only talking about this pairwise interaction.

\begin{proposition}
\label{prop:status-simulate}
Fix parties $u$ and $v$ such that $(u,v) \in E$. If $status_{u,v} = status_{v,u} = \text{``simulate''}$ after Meeting Points, then $\varphi_{u,v}$ is unchanged after Meeting Points, and $\widetilde{\Delta_{u,v}(\phi)} = 0$.
\end{proposition}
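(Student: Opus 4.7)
\medskip

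\noindent\textbf{Proof plan.} The strategy is to trace through Algorithm~\ref{alg:meeting-points} and observe that the ``simulate'' transition is highly constrained: it can only fire via the check in Line~\ref{step:HashesMatch}, where the algorithm immediately sets $k \gets 0$ and returns. Hence very few state variables can change during such a call, and those that remain are exactly the ones that cancel out of $\varphi_{u,v}$.

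\medskip

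\noindent\textbf{Step 1: state entering the call.} The only location in Algorithm~\ref{alg:meeting-points} where $status$ is set to ``simulate'' is the block guarded by ``$k = 1$, $E = 0$, and $H_{T_1}^{(1)} = H_{T_1}^{(1)\prime}$''. Since the call begins with $k \gets k+1$, the condition $k=1$ forces $k_{u,v} = 0$ when the call begins; the condition $E=0$ (which is checked after the possible increment $E \gets E+1$) forces $E_{u,v} = 0$ at call entry. By symmetry, since $status_{v,u}$ also becomes ``simulate'', we have $k_{v,u} = E_{v,u} = 0$ entering the call. In particular, $k^{\{u,v\}} = 0$ and $E^{\{u,v\}} = 0$ both before and (as we will see) after the call, so we are in the top branch of \eqref{eqn:Hpot} at both times.

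\medskip

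\noindent\textbf{Step 2: what does not change.} Because the ``simulate'' branch returns before reaching the $mpc1$/$mpc2$ updates and before the meeting-point transitions that would execute $T \gets T_1$ or $T \gets T_2$, neither party modifies $mpc1$, $mpc2$, or its partial transcript in this call. Since $WM^{\{u,v\}}$ counts only wrong matches/mismatches that contribute to the current $mpc1$ or $mpc2$ values, and these counters are unchanged, $WM^{\{u,v\}}$ is unchanged as well. Similarly, because no transcript is modified, $B_{u,v}$ is unchanged. Finally, after the call $k_{u,v} = k_{v,u} = 0$ (both are reset) and $E_{u,v} = E_{v,u} = 0$ (unchanged from entry).

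\medskip

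\noindent\textbf{Step 3: assembling the potentials.} Using the top branch of \eqref{eqn:Hpot} before and after the call,
\[
\varphi_{u,v}^{\mathrm{new}} - \varphi_{u,v}^{\mathrm{old}} = C_3 \Delta_{u,v}(B_{u,v}) - C_2 \Delta_{u,v}(k^{\{u,v\}}) + C_5 \Delta_{u,v}(E^{\{u,v\}}) + 2 C_6 \Delta_{u,v}(WM^{\{u,v\}}) = 0,
\]
as every term on the right vanishes by Step~2. Moreover, since Meeting Points never simulates new bits of $\Pi$ and (by Step~2) never truncates a transcript in this case, $\Delta_{u,v}(G_{u,v}) = 0$; because $G^* = \min_{(u,v)} G_{u,v}$ and only the link $(u,v)$ can change in the localized quantity $\Delta_{u,v}$, we also have $\Delta_{u,v}(G^*) = 0$. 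Substituting into
\[
\widetilde{\Delta_{u,v}(\phi)} = (\K/m)\,\Delta_{u,v}(G_{u,v}) - \K \cdot \Delta_{u,v}(\varphi_{u,v}) + C_1 \K \cdot \Delta_{u,v}(G^*)
\]
gives $\widetilde{\Delta_{u,v}(\phi)} = 0$, as claimed.

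\medskip

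\noindent\textbf{Difficulty.} There is no substantive obstacle here; the proof is essentially a careful bookkeeping argument on Algorithm~\ref{alg:meeting-points}. The one point that deserves care is verifying that $E_{u,v} = 0$ on entry (not merely after the $E \gets E+1$ update), which is immediate from the fact that $E$ can never decrease within a call without $k$ also being reset.
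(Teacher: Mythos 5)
Your proof is correct and takes essentially the same approach as the paper: the paper's argument is a terse assertion that all variables in $\varphi_{u,v}$ remain unchanged (together with $G_{u,v}$ and $G^*$), and your proof supplies the careful line-by-line verification — via the observation that the only ``simulate'' return point is the $k=1$, $E=0$, hash-match branch, which forces $k_{u,v}=k_{v,u}=0$ and $E_{u,v}=E_{v,u}=0$ on entry and returns before touching $mpc1$, $mpc2$, $WM$, or the transcripts.
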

\begin{proof}
If we have $status_{u,v} = status_{v,u} = \text{``simulate''}$ after Meeting Points, then all variables in $\varphi_{u,v}$ remain unchanged when the Meeting Points method returns. We note that no kind of truncation occurs in this case, so neither $G_{u,v}$ nor $G^*$ change after the Meeting Points phase in this case, hence $\widetilde{\Delta_{u,v}(\phi)}=0$.
\end{proof}
\begin{lemma}
\label{lem:potential-rises-verification}
Fix parties $u$ and $v$ such that $(u,v) \in E$. Then the \emph{verification} phase of Meeting Points between $u$ and $v$ causes the potential $\varphi_{u,v}$ to rise by at most $5C_6$, and we have that $\widetilde{\Delta_{u,v}(\phi)} \geq -5C_6 \K$ when only taking into account changes from the verification phase. Furthermore, in the absence of errors or hash collisions, $\varphi_{u,v}$ falls by at least five, and $\widetilde{\Delta_{u,v}(\phi)} \geq 5\K$ when only taking into account changes from the verification phase.
\end{lemma}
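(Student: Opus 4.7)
The first observation is that the verification phase changes none of the ``transcript-side'' quantities: no party truncates $T_{u,v}$, no party extends it, and no party updates its simulation progress. In particular $B_{u,v}$, $G_{u,v}$ and $G^\ast$ are unchanged by verification, so Claim~\ref{claim:total-potential-rises} (restricted to the verification component) reduces to $\widetilde{\Delta_{u,v}(\phi)} = -\K\cdot \Delta_{u,v}(\varphi_{u,v})$. It therefore suffices to establish the two numerical claims about $\Delta\varphi_{u,v}$ and then read off the bounds on $\widetilde{\Delta_{u,v}(\phi)}$ by multiplying by $-\K$. A second, easy observation is that verification begins with the line $k \gets k+1$ at both parties and makes no further change to either $k_{u,v}$ or $k_{v,u}$, so $\Delta k^{\{u,v\}}=2$ and the branch selector of~\eqref{eqn:Hpot} (``$k_{u,v}=k_{v,u}$'' vs.\ ``$k_{u,v}\neq k_{v,u}$'') is preserved throughout verification. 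Thus the analysis splits cleanly into two cases, with no risk that $\varphi_{u,v}$ switches branches mid-phase.

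\medskip

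\textbf{Case~1: $k_{u,v}=k_{v,u}$ before (hence after) verification.} Here
\[
\Delta\varphi_{u,v} \;=\; -C_2\,\Delta k^{\{u,v\}} + C_5\,\Delta E^{\{u,v\}} + 2C_6\,\Delta WM^{\{u,v\}} \;=\; -2C_2 + C_5\,\Delta E + 2C_6\,\Delta WM.
\]
In the error-free, collision-free regime, each of the five hashes that $u$ sends is received by $v$ intact, and each is compared by $v$ against the corresponding hash of equal inputs on $v$'s side (same $k_{u,v}=k_{v,u}$, and by definition of the truncations $T_1,T_2$ prefixes of the common transcript). Lemma~\ref{lem:unifHash} only asserts a \emph{collision} probability; since we are assuming matching inputs, the equalities hold outright, so $\Delta E = 0$, and by the definition of $WM$ (no wrong matches or mismatches can occur when every action of $u$ and $v$ is triggered by the correct indicator) we get $\Delta WM = 0$. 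Hence $\Delta\varphi_{u,v} = -2C_2 \le -5$, which uses only $C_2\ge 5/2$. In the presence of errors or hash collisions, we use the trivial bounds $\Delta E \le 2$ and $\Delta WM \le 4$ (each party makes at most one $mpc1/mpc2$ decision per verification and its $E$ counter increments by at most 1), yielding $\Delta\varphi_{u,v} \le -2C_2 + 2C_5 + 8C_6 \le 5C_6$ once $C_6$ is taken sufficiently larger than $C_5$.

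\medskip

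\textbf{Case~2: $k_{u,v}\neq k_{v,u}$.} Here
\[
\Delta\varphi_{u,v} \;=\; 0.9C_4\,\Delta k^{\{u,v\}} - C_4\,\Delta E^{\{u,v\}} + C_6\,\Delta WM^{\{u,v\}} \;=\; 1.8C_4 - C_4\,\Delta E + C_6\,\Delta WM.
\]
In the error-free, collision-free regime, the hashes $h_{S_1}(k_{u,v})$ and $h_{S_6}(k_{v,u})$ compared by $v$ (and the symmetric pair compared by $u$) are evaluated on \emph{distinct} inputs $k_{u,v}\neq k_{v,u}$; by Lemma~\ref{lem:unifHash} the absence of a hash collision means the hashes disagree, so both parties increment $E$ and $\Delta E = 2$, while $\Delta WM = 0$ as in Case~1. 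Thus $\Delta\varphi_{u,v} = 1.8C_4 - 2C_4 = -0.2C_4 \le -5$ once $C_4\ge 25$. With errors or hash collisions, we use $\Delta E\ge 0$ and $\Delta WM \le 4$, giving $\Delta\varphi_{u,v}\le 1.8C_4 + 4C_6 \le 5C_6$, which is achieved by taking $C_6$ sufficiently larger than $C_4$ (as the hierarchy $C_1<\cdots<C_7$ allows).

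\medskip

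Combining both cases yields $\Delta\varphi_{u,v}\le 5C_6$ in general and $\Delta\varphi_{u,v}\le -5$ when no error or hash collision affects the link $(u,v)$ during this verification step; multiplying by $-\K$ and using $\widetilde{\Delta_{u,v}(\phi)} = -\K\cdot\Delta\varphi_{u,v}$ from the first paragraph gives the two bounds on $\widetilde{\Delta_{u,v}(\phi)}$. The main obstacle in carrying this out rigorously is the bookkeeping for $\Delta WM$: one must chase through the if/elif logic of Algorithm~\ref{alg:meeting-points} and verify that, in the noise-free regime, every increment or non-increment of $mpc1_{u,v}$ and $mpc2_{u,v}$ is ``correct'' in the sense used to define $WM$, so that $\Delta WM=0$ holds. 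This reduces to the observation that in the absence of channel errors, the hashes received equal the hashes sent, and by Lemma~\ref{lem:unifHash} (no collisions) the induced equality/inequality pattern of hashes matches the equality/inequality pattern of the hashed transcripts.
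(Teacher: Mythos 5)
Your approach matches the paper's: isolate the verification phase, observe that $B_{u,v}$, $G_{u,v}$, $G^*$ are unchanged so $\widetilde{\Delta_{u,v}(\phi)} = -\K\,\Delta\varphi_{u,v}$, note that $k$ at each endpoint increments by exactly one so the branch of~\eqref{eqn:Hpot} is preserved, and then case on $k_{u,v}=k_{v,u}$ vs.\ not, reading off the $E$ and $WM$ deltas in the noiseless and noisy regimes.

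There is, however, a concrete arithmetic gap in your Case~1. You adopt $\Delta WM^{\{u,v\}}\le 4$ and then claim
\[
-2C_2 + 2C_5 + 8C_6 \;\le\; 5C_6 \qquad\text{``once $C_6$ is sufficiently larger than $C_5$.''}
\]
This inequality rearranges to $3C_6 \le 2C_2 - 2C_5$. Since $C_2 < C_5$ by the constant hierarchy, the right-hand side is negative, so the inequality is false for \emph{all} positive $C_6$ — and worse, enlarging $C_6$ only worsens the violation. The constant $5C_6$ in the lemma statement simply cannot absorb an $8C_6$ term; the resolution is that the correct per-verification bound is $\Delta WM^{\{u,v\}}\le 2$, not $4$: the \texttt{elseif} structure in Algorithm~\ref{alg:meeting-points} means each party makes at most one wrong $mpc1$/$mpc2$ decision per round, contributing at most $1$ to its own $WM$ counter (this is exactly the bound the paper invokes: ``$E$ and $WM$ increment by at most 2''). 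In fact your own parenthetical justification (``each party makes at most one $mpc1/mpc2$ decision'') supports $\le 2$, and with it the arithmetic goes through: $-2C_2 + 2C_5 + 4C_6 \le 5C_6$ holds once $C_6 \ge 2C_5 - 2C_2$, which the hierarchy provides. Your Case~2 is unaffected since there the $WM$ coefficient is $C_6$ (not $2C_6$) and the $-0.4C_4 k$ head-room is large, but you should also tighten $\Delta WM \le 2$ there to match the claimed constant exactly. Everything else in the proof is sound.
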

\begin{proof}
We start by noting that $E$ and $WM$ increment by at most 2 in any verification phase. Furthermore, both $k_{u,v}$ and $k_{v,u}$ are incremented by one: this means that if they start equal, they will stay equal. Conversely, if they start different, they will stay different. Furthermore, note that, since this is a verification phase, no transition occurs, so $B_{i,i+1}$ remains the same. Thus $\varphi_{u,v}$ rises by at most $5C_6$, if we take $C_6$ to be sufficiently large with respect to $C_5$. We now proceed to prove that $\varphi_{u,v}$ falls in the absence of errors and hash collisions. \\
Case $k_{u,v} = k_{v,u}$:

 $WM$ increments only if there was an error or hash collision in the verification phase. Now suppose there is no error or hash collision. In this case, $E$ and $WM$ do not increment, and $k$ increments by 2 (one increment for each party), so $\varphi_{u,v}$ falls by at least five by taking $C_2 > 2.5$. \\
Case $k_{u,v} \neq k_{v,u}$:

Note that $WM$, by definition, increments only in the presence of an error. Suppose that $E$ does not increment by 2. This means that there was a hash collision or error - otherwise, both parties would have incremented $E$. Now suppose that $E$ does increment by 2. In this case, $\varphi_{u,v}$ falls by at least $(-0.9C_4 + C_4)2$. By choosing $C_4$ to be a sufficiently large number, this is at least five.

The conclusions about $\widetilde{\Delta_{u,v}(\phi)}$ follow from the fact that $G_{u,v}$ and $G^*$ remain unchanged in the verification phase, so $\widetilde{\Delta_{u,v}(\phi)} = -\K \varphi_{u,v}$.
\end{proof}

Before proceeding with the proof, we define some notation we will use.

\textbf{Notation for Lemma \ref{lem:potential-rises-overall-1} and its proof (including proofs of helper claims)}:
\begin{itemize}
\item We will often drop superscripts on quantities such as $k^{\{u,v\}}, WM^{\{u,v\}}, E^{\{u,v\}}$, with the understanding that all quantities have $\{u,v\}$ as an implied superscript.
\item Quantities like $k_{u,v}, WM_{u,v}, E_{u,v}$ will refer to the value of these variables just before the transition phase (that is, \emph{after} the verification phase). We note that there is one proposition for which we will use $k_{u,v}$ to denote the value of the variable before the previous verification phase as well - we will be explicit about this abuse of notation in this case.
\item Define $k_{u,v}', WM_{u,v}', E_{u,v}'$ to be the value of the corresponding variables after the transition phase. Define $k_{v,u}', WM_{v,u}', E_{v,u}'$ similarly. Finally, define $k' = k_{u,v}' + k_{v,u}'$, and define $E'$ and $WM'$ similarly.
\item We abuse (our own) previous notation and define $\Delta(k_{u,v})$ to be $k_{u,v}' - k_{u,v}$ - that is, the difference in the variable after transitioning. Define $\Delta(E_{u,v})$, $\Delta(WM_{u,v})$, $\Delta_{u,v}(G^*)$, and $\Delta(B_{u,v})$ similarly as the change incurred in the relevant value by the transition phase between $u$ and $v$.
\end{itemize}

\begin{lemma}
\label{lem:potential-rises-overall-1}
Fix parties $u$ and $v$ such that $(u,v) \in E$, and suppose that $status_{u,v} = \text{``meeting points''}$ after the Meeting-Points phase. Then Meeting-Points phase between $u$ and $v$ causes the potential $\varphi_{u,v}$ to increase by at most $5C_6$. In the absence of errors or hash collisions, $\varphi_{u,v}$ decreases by at least five. Furthermore, we have that $\widetilde{\Delta_{u,v}(\phi)} \geq -5C_6 \cdot \K$, and in the absence of errors or hash collisions, $\widetilde{\Delta_{u,v}(\phi)} \geq 5\K$.
\end{lemma}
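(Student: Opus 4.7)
The Meeting-Points phase naturally decomposes into the \emph{verification step} (exchanging hashes and updating $E$, $k$, $mpc1$, $mpc2$, $WM$) and the \emph{transition step} (executing one of: no change; a reset when $2E \geq k$; or a meeting-point move to $T_1$ or $T_2$ when the threshold $0.4k$ is crossed). Lemma~\ref{lem:potential-rises-verification} already gives exactly the bounds we want from the verification step alone: $\Delta_{u,v}^{\mathrm{verif}}(\varphi_{u,v}) \le 5C_6$ in general, $\le -5$ in the absence of errors or hash collisions on $(u,v)$, and the matching bounds for $\widetilde{\Delta_{u,v}(\phi)}$. So my plan is to reduce to Lemma~\ref{lem:potential-rises-verification} by proving the single auxiliary claim that the transition step neither raises $\varphi_{u,v}$ nor lowers $\widetilde{\Delta_{u,v}(\phi)}$.

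To prove this auxiliary claim I will case-split on the pair of actions $(a_u, a_v)$ where each of $a_u, a_v \in \{\mathrm{none}, \mathrm{reset}, \mathrm{mp1}, \mathrm{mp2}\}$, and within each case split on whether $k_{u,v}$ equals $k_{v,u}$ before and after the transition (which selects the first or second branch of~\eqref{eqn:Hpot}). In the reset subcases, the triggering inequality $2E_{u,v} \geq k_{u,v}$ (or similarly for $v$) makes the drop in $C_5 E^{\{u,v\}}$ (or in $-C_4 E^{\{u,v\}}$) dominate the drop in $-C_2 k^{\{u,v\}}$ (or in $0.9C_4 k^{\{u,v\}}$) provided $C_5 > 2C_2$ and $C_4$ is sufficiently large; the $WM$ term is conceptually reset together with $mpc1, mpc2$ and therefore only helps. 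When exactly one party resets, the formula for $\varphi_{u,v}$ also switches branches in~\eqref{eqn:Hpot}, and I must verify separately that this switch, combined with the halving of $k^{\{u,v\}}$ and $E^{\{u,v\}}$, is non-increasing; the ordering $C_2 \ll C_3 \ll C_4 \ll C_5 \ll C_6$ is chosen precisely so that all such transitions work out.

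The meeting-point transitions are the main obstacle. When $u$ moves to $T_1$ (say), the truncation can either be \emph{correct}, in which case $B_{u,v}$ drops by the amount truncated (of order $\tilde k$), so the $C_3 B_{u,v}$ term alone overpowers the loss of $-C_2 k^{\{u,v\}}$ or $+0.9 C_4 k^{\{u,v\}}$; or \emph{incorrect}, in which case the $0.4k$ threshold implies that a large fraction of the $mpc1$ increments came from wrong matches, yielding $WM^{\{u,v\}} = \Omega(k^{\{u,v\}})$ and the coefficient $2C_6$ (respectively $C_6$) on $WM^{\{u,v\}}$ covers all losses from the change in $k$, $E$, and also any drop in $G_{u,v}$. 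The $G_{u,v}$ drop in the incorrect case is what makes the analysis for $\widetilde{\Delta_{u,v}(\phi)}$ subtle: the $(\K/m)\Delta_{u,v}(G_{u,v})$ term can be as negative as $-(\K/m)\tilde k$, which is $\K$ times a quantity of order $\tilde k/m$, and must be absorbed by $-\K\cdot\Delta_{u,v}(\varphi_{u,v})$; here I use that $C_3$ (and hence the potential saving from $WM$) is chosen sufficiently large relative to $1/m$. The $C_1 \K \Delta_{u,v}(G^*)$ term is always $\ge C_1 \K \Delta_{u,v}(G_{u,v})$ since $G^*$ is a minimum over edges, and $C_1 > 0$ is small enough that this extra term never hurts.

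This case analysis essentially follows the strategy of Haeupler's Lemma~7.4 in~\cite{haeupler14}, adapted to our setting where (i) we must also track the $G_{u,v}$- and $G^*$-terms appearing inside $\widetilde{\Delta_{u,v}(\phi)}$, and (ii) a single invocation of the Meeting-Points procedure now consists of a single verification--transition pair rather than many such pairs run back-to-back. The main obstacle is the meeting-point transition subcases in which $u$ and $v$ move to \emph{different} meeting points (e.g.\ $u$ picks $T_1$ while $v$ picks $T_2$), where the bookkeeping of whether each individual match/mismatch was correct, and of how $k^{\{u,v\}}$, $E^{\{u,v\}}$, and $WM^{\{u,v\}}$ redistribute across the two formulas in~\eqref{eqn:Hpot}, is the most delicate; I expect to handle this by carefully tracking contributions per party and invoking the constraints $C_3 \gg C_2$, $C_4 \gg C_3$, and $C_6 \gg C_5$ in sequence.
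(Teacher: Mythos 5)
Your reduction plan has a genuine gap: the auxiliary claim that \emph{the transition step never raises $\varphi_{u,v}$ nor lowers $\widetilde{\Delta_{u,v}(\phi)}$} is false. Take the case $k_{u,v} \neq k_{v,u}$, exactly one party (say $u$) performs a reset transition, and $k_{u,v}=1$. After the verification step we may have $k_{u,v}=1$, $E_{u,v}=1$, which triggers the reset $2E_{u,v}\geq k_{u,v}$. In the $k_{u,v}\neq k_{v,u}$ branch of Eq.~\eqref{eqn:Hpot}, $u$'s contribution has the term $0.9C_4\,k_{u,v} - C_4\,E_{u,v}$, so resetting both to~$0$ changes $\varphi_{u,v}$ by $-0.9C_4 + C_4 = +0.1C_4 > 0$. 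This is a strict increase in the transition step alone, and your stated reduction to Lemma~\ref{lem:potential-rises-verification} would be stuck exactly here. The paper's Proposition~\ref{prop:transition-1} handles this by \emph{amortizing} the transition-step increase against the preceding verification step: since $k_{u,v}=1$ implies $k_{u,v}$ was $0$ just before verification, $u$'s local state $(k_{u,v},E_{u,v},WM_{u,v})$ is identical before the verification step and after the transition step, so $u$'s net contribution over the whole iteration is zero, and $v$'s verification step alone supplies the required drop by~$5$ (absent errors and collisions). You would need the same lump-together argument for this one case; every other case can indeed be handled with the transition step in isolation (as in Propositions~\ref{prop:transition-2}, \ref{prop:transition-3}, \ref{prop:transition-4}).

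Two smaller remarks. First, your identified ``main obstacle'' --- $u$ and $v$ moving to different meeting points --- is actually the easier direction: the paper's Proposition~\ref{prop:transition-4} treats simultaneous meeting-point transitions uniformly via the $B_{u,v}'=0$ vs.~$B_{u,v}'>0$ dichotomy (respectively, $C_3\cdot\Delta(B_{u,v})$ dominating, or $WM\geq 0.4\,k_{u,v}$ feeding the $C_6$ term), without distinguishing which of $T_1,T_2$ each side picked. Second, your concern that $C_3$ must be ``large relative to $1/m$'' to absorb the $(\K/m)\Delta_{u,v}(G_{u,v})$ term is a non-issue: that coefficient is $\K/m \leq \K$, and the changes in $G_{u,v}$, $G^*$, $B_{u,v}$ are all $O(k)$, so the standard ordering $1 < C_1 < C_2 < \cdots$ with each $C_i$ a sufficiently large \emph{absolute} constant already covers it.
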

Note that the above lemma includes the verifiation and transition phases. We have already established that this is true for the verification phase \emph{alone} in Lemma \ref{lem:potential-rises-verification}. If we could also establish that $\varphi_{u,v}$ does not increase and $\widetilde{\Delta_{u,v}(\phi)}$ does not decrease in the transition phase, we would be able to conclude this lemma as a corollary. However, there is one case for which we cannot do this - in this case, we need to lump together the transition phase with the previous verification phase to argue that the potential rises enough there to pay for a possible decrease in our transition phase. This will occur in Proposition \ref{prop:transition-1}. We now split Lemma \ref{lem:potential-rises-overall-1} into cases and prove each case separately. We assume that there is no error in the transition phase. Since the decisions of the parties here only depend on their state, any error does not affect the transition and its corresponding effect on $\varphi_{u,v}$ or $\widetilde{\Delta_{u,v}(\phi)}$.

\begin{proposition}
\label{prop:transition-1}
Fix parties $u$ and $v$ such that $(u,v) \in E$. Suppose that $k_{u,v} \neq k_{v,u}$, and exactly one party does a meeting point or reset transition. Then the current invocation of Meeting Points \emph{as a whole} causes $\varphi_{u,v}$ to rise by at most $5C_6$ and $\widetilde{\Delta_{u,v}(\phi)} \geq -5C_6 \cdot \K$ . If no error or hash collision occurs, then $\varphi_{u,v}$ falls by at least five, and $\widetilde{\Delta_{u,v}(\phi)} \geq 5\K$.
\end{proposition}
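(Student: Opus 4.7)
The proof splits (WLOG taking $u$ to be the transitioning party) into two subcases: (A) $u$ performs a reset and (B) $u$ performs a meeting-point transition.

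In subcase (A), the key observation is that $E_{u,v} = \lceil k_{u,v}/2 \rceil$ \emph{exactly} at the moment the reset fires. Indeed, since $u$ did not reset in the previous invocation, we had $2 E_{u,v}^{\text{old}} < k_{u,v}^{\text{old}}$; the current verification increments $k_{u,v}$ by $1$ and $E_{u,v}$ by at most $1$, and for the reset condition $2 E_{u,v} \geq k_{u,v}$ to now hold, $E_{u,v}$ must have incremented and the pre-verification value $E_{u,v}^{\text{old}}$ is forced to be $\lfloor (k_{u,v}^{\text{old}}-1)/2 \rfloor$. The reset leaves $T_{u,v}$ (and hence $B_{u,v}$, $G_{u,v}$, and $G^*$) untouched, and since $v$ does not transition with $k_{v,u}' = k_{v,u} \geq 1$, we remain in the ``different $k$'s'' branch of $\varphi_{u,v}$. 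A direct computation then gives
\begin{equation*}
\Delta^{\text{trans}} \varphi_{u,v} \;=\; -0.9 C_4 \, k_{u,v} + C_4 \lceil k_{u,v}/2 \rceil \;\leq\; 0.1 C_4,
\end{equation*}
with the maximum attained at $k_{u,v} = 1$ and the expression strictly negative for $k_{u,v} \geq 2$. Combining with Lemma~\ref{lem:potential-rises-verification}, which caps $\Delta^{\text{ver}} \varphi_{u,v}$ at $5 C_6$ in general and at $-5$ in the error-free case, yields $\Delta \varphi_{u,v} \leq 5 C_6$ overall, and (taking $C_4$ sufficiently large) $\Delta \varphi_{u,v} \leq -5$ when no error or hash collision occurs. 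Since $G_{u,v}$ and $G^*$ are unaffected, $\widetilde{\Delta_{u,v}(\phi)} = -\K \cdot \Delta \varphi_{u,v}$, so the bounds on $\widetilde{\Delta_{u,v}(\phi)}$ follow immediately.

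Subcase (B) is the more delicate case. Here $k_{u,v} = \widetilde{k}_{u,v}$ (a power of two) and $mpc_j^{\{u,v\}} > 0.4\, k_{u,v}$ for some $j \in \{1, 2\}$, so $u$ truncates $T_{u,v}$ to $T_j$ and zeros out $k_{u,v}, E_{u,v}$. The contribution of the $k, E$ drops to $\Delta^{\text{trans}} \varphi_{u,v}$ is bounded by $0.1 C_4 k_{u,v}$ as in (A). The novel ingredient is the analysis of the truncation's effect on $B_{u,v}$ and $G_{u,v}$ (and through $G_{u,v}$, on $G^*$). Two scenarios arise: if $|T_j| \geq G_{u,v}$, the truncation leaves $G_{u,v}$ unchanged and decreases $|T_{u,v}|$, so $B_{u,v}$ can only decrease, which is favorable for $\widetilde{\Delta_{u,v}(\phi)}$; if $|T_j| < G_{u,v}$, the truncation must be a ``wrong'' meeting-point match, forcing $WM^{\{u,v\}} = \Omega(k_{u,v})$ (since more than $0.4 \, k_{u,v}$ incorrect $mpc_j$ increments have accumulated), and the $C_6 \cdot WM^{\{u,v\}}$ term in $\varphi_{u,v}$, with $C_6$ chosen sufficiently larger than $C_1, \ldots, C_5$, absorbs any drop in $G_{u,v}$ or $G^*$. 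Combined with Lemma~\ref{lem:potential-rises-verification}, this yields $\Delta \varphi_{u,v} \leq 5 C_6$ and $\widetilde{\Delta_{u,v}(\phi)} \geq -5 C_6 \K$, and in the error-free regime only ``correct'' meeting points can occur, simplifying the analysis to yield $\Delta \varphi_{u,v} \leq -5$ and $\widetilde{\Delta_{u,v}(\phi)} \geq 5 \K$.

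The main obstacle is the careful accounting in subcase (B): for each sub-configuration of the truncation relative to $G_{u,v}$, $|T_{v,u}|$, and $G^*$, one must show that any unfavorable change in $G_{u,v}$ or $G^*$ is either dominated by the corresponding decrease in $B_{u,v}$ or paid for by a large-enough $WM^{\{u,v\}}$ (amplified by~$C_6$). This argument closely mirrors the analysis of the meeting-points mechanism in~\cite{haeupler14}, adapted to the fact that our meeting points are interleaved across iterations of the outer simulation.
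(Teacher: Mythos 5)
Your computation that $E_{u,v} = \lceil k_{u,v}/2 \rceil$ exactly at the moment a reset fires is correct and slightly sharper than the paper's $E_{u,v} \leq 0.5k_{u,v}+0.5$. But the proof has a genuine gap in subcase (A), and subcase (B) is handled in a way that is both unnecessarily weak and not clearly correct.

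\textbf{The gap in subcase (A), at $k_{u,v}=1$.} You correctly compute $\Delta^{\mathrm{trans}}\varphi_{u,v}=+0.1C_4$ when $k_{u,v}=1$, and then combine this with Lemma~\ref{lem:potential-rises-verification}, which (in the error-free case) gives only $\Delta^{\mathrm{ver}}\varphi_{u,v}\le -5$. The resulting total is $\le -5+0.1C_4$, and ``taking $C_4$ sufficiently large'' is exactly the wrong direction: it \emph{increases} $0.1C_4$, so for $C_4>50$ the total is positive, not $\le -5$. For this step to close you would need a verification decrease that scales with $C_4$ (the proof of that lemma actually yields $-0.2C_4$ when $k_{u,v}\ne k_{v,u}$, at which point $-0.2C_4+0.1C_4=-0.1C_4\le-5$ does hold), but citing the lemma's $-5$ bound alone is insufficient. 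The paper sidesteps this entirely: it observes that when $k_{u,v}=1$, party $u$'s $k,E,WM$ all cycle from $0$ to $0$ over the full round trip (verification plus reset), so $u$'s net contribution to $\varphi_{u,v}$ is $0$, and the decrease of $-0.1C_4$ comes purely from $v$'s side of the verification. The same problem breaks your general bound too: $5C_6+0.1C_4>5C_6$, so without the cancellation observation you also do not establish the first half of the proposition in this corner case.

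\textbf{Subcase (B).} You bound the $(k,E)$-drop by $0.1C_4k_{u,v}$ ``as in (A),'' but the meeting-point condition $E_{u,v}<0.5k_{u,v}$ gives $\Delta^{\mathrm{trans}}\varphi_{u,v}\le -0.4C_4 k_{u,v}$, which is strictly negative for \emph{every} $k_{u,v}\ge1$ (the problematic $k_{u,v}=1$ case of (A) does not arise here). Discarding this forces you to recover the decrease via the $WM$-based case split, which is the machinery the paper uses for the \emph{equal}-$k$ propositions (\ref{prop:transition-3} and \ref{prop:transition-4}), not here. Moreover the claim that $|T_j|<G_{u,v}$ forces $WM^{\{u,v\}}=\Omega(k_{u,v})$ is not clearly justified: such a $T_j$ is a genuine prefix of $T_{v,u}$ and could coincide with one of $v$'s meeting points, yielding legitimate (not ``wrong'') matches. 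The paper's route is simpler and robust: the $-0.4C_4k_{u,v}$ term dominates every other term, whose coefficients are $C_3,C_1,1\ll C_4$, and no case analysis on $WM$ is required.
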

\begin{proof}[Proof of Proposition \ref{prop:transition-1}]

 Suppose wlog that the transitioning party is party $u$. Since party $v$ did \emph{not} transition, we have that $k_{u,v}' \neq k_{v,u}'$ after the transition as well. We now analyze the parameters for party $u$, to analyze how party $u$'s contribution to the potential changes. We assume that $WM_{u,v}$ was initially 0, otherwise the decrease in $WM_{u,v}$ to 0 will increase $\varphi_{u,v}$. Now, we know that party $u$ will set $k_{u,v}$ and $E_{u,v}$ to 0. Setting $k_{u,v}$ to 0 will result in a decrease in potential, and setting $E_{u,v}$ to 0 will result in an increase in potential. The net change of $\varphi_{u,v}$ from these two actions is $-0.9C_4 k_{u,v} + C_4 E_{u,v}$.

Note that for a meeting point transition, we have that $E_{u,v} < 0.5 k_{u,v}$, so the expression above is $-0.9C_4 k_{u,v} + C_4 E_{u,v} \leq -0.4 C_4 k_{u,v}$. Note that a meeting points transition can affect $B_{u,v}$, $G^*$, and $G_{u,v}$ as well. However, that the change in each of these values is at most $2k_{u,v}$, and the constants multiplying them in $\widetilde{\Delta_{u,v}(\phi)}$ are $C_3$, $C_1$, and 1 respectively. Since we can take these to be much smaller than $C_4$, the effect on $\Delta(\varphi_{u,v})$ (resp. $\widetilde{\Delta_{u,v}(\phi)}$) from changes in these variables are negligible compared to $-0.4C_4 k_{u,v}$ (resp. $0.4 C_4 k_{u,v} \cdot \K$). Hence, by taking $C_4$ to be large enough, we get the desired result, that $\varphi_{u,v}$ falls by at least five and $\widetilde{\Delta_{u,v}(\phi)} \geq 5\K$.

Now we turn our attention to reset transitions. Note that for reset transitions, $G_{u,v}$ and $G^*$ are unchanged, so $\widetilde{\Delta_{u,v}(\phi)} = -\K \Delta(\varphi_{u,v})$. Recall that, when $2E_{u,v} \geq k_{u,v}$, $u$ will reset. So, we know that \emph{before} her most recent increment of $k_{u,v}$ in the last verification phase (and possibly $E_{u,v}$), we either had that $k_{u,v}, E_{u,v} = 0$ or $2E_{u,v}$ is strictly less than $k_{u,v}$. This means that, after potentially incrementing $E_{u,v}$ and $k_{u,v}$ in the following iteration, we have that (currently) $E_{u,v} \leq 0.5k_{u,v} + 0.5$.\footnote{Note that this inequality also holds if we started with $k_{u,v}$, $E_{u,v}$ equal to 0.}. Plugging into the expression for potential difference above, we see that
\begin{equation}
\label{eq:transition-1}
\Delta(\varphi_{u,v}) \leq C_4(-0.9 k_{u,v} + E_{u,v}) \leq C_4(-0.4k_{u,v} + 0.5).
\end{equation}
Note that when $k_{u,v} > 1$, this is at most $-5$ for sufficiently large $C_4$, and hence $\widetilde{\Delta_{u,v}(\phi)} \geq 5\K$. Combining this with the rise in potential during the verification phase (Lemma \ref{lem:potential-rises-verification}), we conclude the proof of Proposition \ref{prop:transition-1} in the case when $k_{u,v} > 1$.

When $k_{u,v} = 1$, we note that, before the previous verification phase, we must have had $k_{u,v} = 0$. So if we compare $k_{u,v}$ \emph{before the verification phase} to $k_{u,v}'$ \emph{after the transition phase}, we see that they are both equal. The same is true of $E_{u,v}$ before the verification phase and $E_{u,v}'$. So party $u$'s contribution to $\varphi_{u,v}$ does not increase after one iteration of the meeting points protocol, and all the other terms in $\widetilde{\Delta_{u,v}(\phi)}$ are also unchanged. But what about party $v$? By assumption, party $v$ does not transition. Further, in the absence of errors and hash collisions we know that $status_{v,u}=\text{``meeting points''}$ after Meeting Points, since $k_{u,v} \neq k_{v,u}$. Therefore, $v$ goes through a verification phase. By Lemma \ref{lem:potential-rises-verification} and the fact that $v$ does not transition, we get that the contribution of party $v$ to $\varphi_{u,v}$ decreases by at least five in the absence of errors and hash collisions, and increases by at most $5C_6$ in their presence. Hence, overall, $\varphi_{u,v}$ falls by at least five in the absence of errors, and rises by at most $5C_6$ in the presence of errors.
\end{proof}
\textbf{Note on notation}: $\widetilde{\Delta_{u,v}(\phi)}$ changes meaning for the rest of the proof, to match with $\Delta(\cdot)$.
\smallskip

Proposition \ref{prop:transition-1} was the only case where we needed to lump together verification and transition phases to argue that the potentials behave like we want. For the remainder of the argument, it suffices to show that $\varphi_{u,v}$ falls in the transition phase, and that $\widetilde{\Delta_{u,v}(\phi)} > 0$ rises. Hence, we abuse our previous notation to define $\widetilde{\Delta_{u,v}(\phi)} := (\K/m)\Delta_{u,v}(G_{u,v}) - \K \cdot \Delta(\varphi_{u,v}) + C_1 \K \cdot \Delta_{u,v}(G^*)$, where we recall that $\Delta_{u,v}(\cdot)$ is now defined to be the change in a variable after the transition phase \emph{only}.

\begin{proposition}
\label{prop:transition-2}
Fix parties $u$ and $v$ such that $(u,v) \in E$. Suppose that $k_{u,v} \neq k_{v,u}$, and both parties do some transition. Then $\varphi_{u,v}$ falls by at least one in the transition phase. Furthermore, $\widetilde{\Delta_{u,v}(\phi)} \geq \K$.
\end{proposition}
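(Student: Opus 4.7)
Since both parties transition, after the transition phase we have $k_{u,v}' = k_{v,u}' = 0$, $E_{u,v}' = E_{v,u}' = 0$, and $WM_{u,v}' = WM_{v,u}' = 0$, so $\varphi_{u,v}$ switches to the first branch of Eq.~\eqref{eqn:Hpot} and evaluates to $C_3 \cdot B_{u,v}'$. Since before the transition $k_{u,v} \neq k_{v,u}$ we were in the second branch, the net change is
\[
\Delta(\varphi_{u,v}) = C_3 \cdot \Delta(B_{u,v}) - 0.9 C_4 \cdot k + C_4 \cdot E - C_6 \cdot WM,
\]
where $k, E, WM$ are the post-verification values. The plan is to split into three subcases according to the type of transition (reset or meeting-point) that each party performs, and in each subcase bound $\Delta(B_{u,v})$, $\Delta_{u,v}(G_{u,v})$, and $\Delta_{u,v}(G^*)$, so as to conclude both $\Delta(\varphi_{u,v}) \le -1$ and $\widetilde{\Delta_{u,v}(\phi)} \ge \K$.

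If both parties reset, no transcript is altered, hence $\Delta(B_{u,v}) = \Delta_{u,v}(G_{u,v}) = \Delta_{u,v}(G^*) = 0$. As in the proof of Proposition~\ref{prop:transition-1}, a reset can only be triggered at the first instant $2E \ge k$ for each party, giving $E_{u,v} \le k_{u,v}/2 + 1/2$ and $E_{v,u} \le k_{v,u}/2 + 1/2$, so $E \le k/2 + 1$ and
\[
\Delta(\varphi_{u,v}) \le -0.9 C_4 k + C_4 (k/2 + 1) = -0.4 C_4 k + C_4 \le -1
\]
for $C_4$ large enough, since $k_{u,v} \neq k_{v,u}$ with both $\ge 1$ forces $k \ge 3$. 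Then $\widetilde{\Delta_{u,v}(\phi)} = -\K \Delta(\varphi_{u,v}) \ge \K$.

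If both parties do meeting-points transitions, each party truncates its pairwise transcript to one of its meeting points $T_1, T_2$; the truncation length is at most $2k$, so $\Delta_{u,v}(G_{u,v}) \ge -2k$ and $\Delta_{u,v}(G^*) \ge -2k$. Following Haeupler's meeting-points invariant~\cite{haeupler14}, the counters $mpc1, mpc2$ track genuine common points of the two transcripts up to the noise they observed, so in the absence of wrong matches a meeting-point transition can only move the pair of transcripts to a consistent position (i.e.\ $\Delta(B_{u,v}) \le 0$), and each wrong match contributes at most $O(1)$ to $\Delta(B_{u,v})$. Combined with $E \le k/2$ (since neither party reset), this gives
\[
\Delta(\varphi_{u,v}) \le O(C_3) \cdot WM - 0.9 C_4 k + C_4 \cdot k/2 - C_6 \cdot WM \le -0.4 C_4 k - (C_6 - O(C_3)) \cdot WM,
\]
which is at most $-1$ by taking $C_6 \gg C_3$ and $C_4$ large. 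The negative contributions of $\Delta_{u,v}(G_{u,v})$ and $\Delta_{u,v}(G^*)$ to $\widetilde{\Delta_{u,v}(\phi)}$ are on the order of $(\K/m + C_1 \K) \cdot 2k$, which is dominated by $-\K \Delta(\varphi_{u,v}) \ge 0.4 C_4 k \K$ for $C_4$ sufficiently larger than $C_1$, yielding $\widetilde{\Delta_{u,v}(\phi)} \ge \K$. The mixed subcase (one reset, one meeting-points) combines the two estimates: the resetting party contributes $-0.9 C_4 k_{\cdot} + C_4 E_{\cdot}$ as in the both-reset case, while the meeting-points party contributes the corresponding $k, E$ terms as well as the $O(C_3) \cdot WM$ bound on $\Delta(B_{u,v})$, and the same calculation yields $\Delta(\varphi_{u,v}) \le -1$ and $\widetilde{\Delta_{u,v}(\phi)} \ge \K$.

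The hard part is the both-meeting-points subcase: one must argue that independent simultaneous truncations by $u$ and $v$ cannot inflate $B_{u,v}$ beyond what the $C_6 \cdot WM$ term can absorb. This is precisely the content of Haeupler's invariant on the meeting-point counters $mpc1, mpc2$, which translates in our setting because the verification phase guarantees that these counters reliably reflect hash agreements on $T_1$ and $T_2$ up to the number $WM$ of wrong matches. Importing this invariant from~\cite{haeupler14} (with minor bookkeeping to account for the fact that meeting points in our protocol is interleaved across iterations) is the main technical step.
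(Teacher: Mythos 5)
Your overall structure—case-splitting by transition type—reaches the right conclusion, but it is more complicated than necessary and contains a step that does not hold. The paper handles this proposition with a single unified argument: regardless of whether a given party resets or moves to a meeting point, the reset check in Line~\ref{step:resetMP} forces $E_{\cdot,\cdot} \le 0.5 k_{\cdot,\cdot} + 0.5$, so $E \le 0.5k + 1$ and $\Delta(\varphi_{u,v}) \le C_4(-0.4k + 1)$ plus a crude $O(C_3 k)$ term from $\Delta(B_{u,v})$, which is dominated once $C_4$ is large and $k \geq 3$. All of $\Delta(B_{u,v})$, $\Delta_{u,v}(G_{u,v})$, $\Delta_{u,v}(G^*)$ are simply bounded by $\pm 2k$ because each party truncates at most $2\widetilde{k} \le 2k$ chunks.

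The specific problem is your claim in the both--meeting-points subcase that $\Delta(B_{u,v}) \le O(WM)$, justified by ``Haeupler's meeting-points invariant.'' That invariant is about the \emph{synchronized} regime $k_{u,v} = k_{v,u}$, where both parties' meeting points sit at the same depths $\widetilde{k}$ and $2\widetilde{k}$ and the counters $mpc1,mpc2$ track agreement at matching positions; this is precisely the delicate content of Proposition~\ref{prop:transition-4}. In the present proposition $k_{u,v} \ne k_{v,u}$, so $\widetilde{k}_{u,v} \neq \widetilde{k}_{v,u}$ in general and the two parties' meeting points live at unrelated chunk depths. In that regime, both parties can truncate to a prefix of the \emph{same} original transcript, with no hash collisions and hence $WM = 0$, and yet end up at different lengths; $B_{u,v}$ then increases by up to $2k$ while $WM$ stays zero, falsifying the $O(WM)$ bound. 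The fix is easy and in fact makes your proof shorter: replace the $O(WM)$ bound by the unconditional $|\Delta(B_{u,v})| \le 2k$ bound, after which the same $-0.4 C_4 k$ term absorbs everything for $C_4$ sufficiently large and $k \geq 3$, and the case split collapses into the paper's unified argument. The appeal to the invariant is both invalid and unnecessary here.
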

\begin{proof}[Proof of Proposition \ref{prop:transition-2}]
Since both parties transition, we will have $k_{u,v}' = k_{v,u}' = 0$ after the transition. Furthermore, we will have $E_{u,v}' = E_{v,u}' = WM_{u,v}' = WM_{v,u}' = 0$ due to the transitioning. Hence, $ \Delta(\varphi_{u,v}) \leq -0.9 C_4 k + C_4 E - C_6 WM$. Just like in the proof of Proposition \ref{prop:transition-1}, we note that the reset condition (Line~\ref{step:resetMP}) implies that we have $E_{u,v} \leq 0.5 k_{u,v} + 0.5$, and similarly $E_{v,u} \leq 0.5 k_{v,u} + 0.5$. So $E \leq 0.5 k + 1$, and so we get that
\[
\Delta(\varphi_{u,v}) \leq C_4 (-0.4 k + 1).
\]

Since we know that both $k_{u,v}$ and $k_{v,u}$ are greater than 1 after the verification phase and we also know that $k_{u,v} \neq k_{v,u}$, we conclude that $k_{u,v} + k_{v,u} \geq 3$. Therefore, the above expression is at most $-0.2C_4$. This establishes that $\Delta(\varphi_{u,v}) \leq -1$. To see that indeed $\widetilde{\Delta_{u,v}(\phi)} \geq \K$, we additionally note that $G^*$ and $G_{u,v}$ change by at most $2k$, and so we get that $\widetilde{\Delta_{u,v}(\phi)} \geq \K \cdot (-2k + 0.4 C_4 k - 2C_1 k - C_4) \geq \K \cdot k(0.4C_4 - 2C_1 - 2) - C_4\K$. By taking $C_4$ sufficiently large so that $0.05C_4 \geq 2C_1 + 2$ and using the fact that $k \geq 3$, we see that this in turn is $\geq 0.05 C_4 \K$, which is greater than $\K$ since we needed $C_4 > 20$ earlier.
\end{proof}

\begin{proposition}
\label{prop:transition-3}
Fix parties $u$ and $v$ such that $(u,v) \in E$. Suppose that $k_{u,v} = k_{v,u}$, and exactly one party transitions. Then $\varphi_{u,v}$ falls by at least one in the transition phase. Furthermore, $\widetilde{\Delta_{u,v}(\phi)} \geq \K$.
\end{proposition}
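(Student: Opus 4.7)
Let $k := k_{u,v} = k_{v,u}$ denote the shared counter value just before the transition phase, and assume without loss of generality that $u$ is the party that transitions. After the transition, $u$'s meeting-points state is reset to $k'_{u,v} = E'_{u,v} = WM'_{u,v} = 0$ while $v$'s state is unchanged, so $k'_{u,v} \neq k'_{v,u}$ afterwards. Consequently $\varphi_{u,v}$ is evaluated via the second branch of Eq.~\eqref{eqn:Hpot} after the transition but via the first branch before, and expanding both branches carefully yields
\begin{equation*}
\Delta\varphi_{u,v} \;=\; C_3 \Delta B_{u,v} + (0.9C_4 + 2C_2)\,k - C_5\,E_{u,v} - (C_4 + C_5)\,E_{v,u} - 2C_6\,WM_{u,v} - C_6\,WM_{v,u}.
\end{equation*}
The positive contribution $(0.9C_4 + 2C_2)k$ arises precisely because the coefficient of $k^{\{u,v\}}$ flips sign when crossing the case boundary of~\eqref{eqn:Hpot}, and the plan is to show that it is always offset by one of the non-positive terms on the right-hand side.

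The first case I would dispatch is $u$ performing a \emph{reset} transition. The reset predicate gives $E_{u,v} \geq k/2$, and a reset changes no transcripts, so $\Delta B_{u,v} = \Delta_{u,v} G_{u,v} = \Delta_{u,v} G^* = 0$. Substituting $E_{u,v} \geq k/2$ into the identity above yields $\Delta\varphi_{u,v} \leq (0.9C_4 + 2C_2 - C_5/2)\,k$, which by choosing $C_5$ sufficiently larger than $C_2$ and $C_4$ (e.g., $C_5 \geq 1.8C_4 + 4C_2 + 4$) becomes at most $-2k \leq -2$. Because only the $\varphi_{u,v}$ term in $\widetilde{\Delta_{u,v}(\phi)}$ changes, this immediately gives $\widetilde{\Delta_{u,v}(\phi)} = -\K\,\Delta\varphi_{u,v} \geq 2\K \geq \K$.

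The \emph{meeting-points} sub-case is the main obstacle. Here $u$ truncates $T_{u,v}$ to $T_1$ or $T_2$ because, say, $mpc1_{u,v} > 0.4k$, while $v$ refrained from transitioning, meaning $mpc1_{v,u}, mpc2_{v,u} \leq 0.4k$ and $2E_{v,u} < k$. I plan a further dichotomy depending on whether the truncation is ``informative.'' In the favorable branch, $u$'s new prefix agrees with $T_{v,u}$ up through that prefix, so $|T_{u,v}|$ drops by at least $\widetilde{k} = k$ without damaging the common prefix, giving $\Delta B_{u,v} \leq -k$ and $\Delta_{u,v} G_{u,v} = 0$; then $C_3 \Delta B_{u,v}$ alone dominates $(0.9C_4 + 2C_2)k$ by taking $C_3$ large enough. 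In the unfavorable branch, $u$ truncates to a point that $v$ does not agree with, which I would handle by mirroring Haeupler's two-party round-by-round accounting~\cite{haeupler14} applied to the single link $(u,v)$: the strict inequality $mpc1_{u,v} > 0.4k$ together with $mpc1_{v,u} \leq 0.4k$ forces $WM_{u,v} + WM_{v,u} = \Omega(k)$, since the perceived match counts of the two endpoints can drift apart only through wrong matches (spurious agreements from hash collisions or noise) or analogous mismatches. Choosing $C_6$ large compared to $C_3, C_4, C_5$ then yields $\Delta\varphi_{u,v} \leq -\Omega(C_6\,k)$ in this branch as well.

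To pass from $\Delta\varphi_{u,v}$ to $\widetilde{\Delta_{u,v}(\phi)}$ in the meeting-points sub-case, I use the trivial bounds $|\Delta_{u,v} G_{u,v}|, |\Delta_{u,v} G^*| \leq 2k$, since $u$'s truncation shortens $T_{u,v}$ by at most $2\widetilde{k} = 2k$ chunks and $G^*$ is determined by this coordinate alone during the transition. Substituting,
\begin{equation*}
\widetilde{\Delta_{u,v}(\phi)} \;\geq\; -\tfrac{2k\,\K}{m} \;-\; 2C_1\,\K\,k \;-\; \K\cdot\Delta\varphi_{u,v} \;\geq\; \K,
\end{equation*}
provided $C_6$ is chosen large enough relative to $C_1, C_3, C_4, C_5$. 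The genuine difficulty of the proof is the round-by-round accounting in the unfavorable meeting-points branch, which is the direct multi-party analogue of Haeupler's Lemma~7.4 restricted to the single link $(u,v)$; everything else reduces to substitution into the defining formula for $\varphi_{u,v}$ and choosing the constants in~\eqref{eqn:Hpot} in the correct order.
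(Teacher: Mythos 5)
Your reset sub-case matches the paper's argument and is arguably cleaner: you expand both branches of Eq.~\eqref{eqn:Hpot} to obtain an explicit identity for $\Delta\varphi_{u,v}$, substitute $E_{u,v} \geq k/2$, and choose $C_5$ large. The paper reaches the same conclusion with similar bookkeeping (including the key observation that $k_{v,u}' = k_{v,u} = k_{u,v} \leq 2E_{u,v}$, so the term that switched sign is controlled by $E_{u,v}$).

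The meeting-point sub-case is where your argument has a real gap, and the root cause is that your dichotomy is not the right one. You split on whether $u$'s post-truncation prefix agrees with $T_{v,u}$. The paper splits on whether, among the hashed meeting-point candidates that were actually compared over the last $\sim 0.5\,k_{u,v}$ iterations, there was a \emph{true} match between $(T_1)_{u,v}$ and one of $(T_1)_{v,u},(T_2)_{v,u}$. These are not equivalent, and yours does not produce a usable lower bound in the ``favorable'' branch. Concretely, your claim $\Delta B_{u,v} \leq -k$ when the truncated prefix agrees with $T_{v,u}$ fails: if $|T_{v,u}| > |T_{u,v}|$ then $B_{u,v} = |T_{v,u}| - G_{u,v}$ and shortening $T_{u,v}$ does not lower the $\max$, so $B_{u,v}$ does not decrease; worse, if $G_{u,v}$ exceeded the new length $c\widetilde{k}$, the truncation shrinks $G_{u,v}$ and $B_{u,v}$ can \emph{increase}. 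So $C_3\Delta B_{u,v}$ cannot be relied upon to offset $(0.9C_4+2C_2)k$; in fact there is no reason for that term to be negative at all. (There is also a constant-ordering tension: the paper fixes $C_3 < C_4$, so ``take $C_3$ large enough to dominate $0.9C_4$'' pushes against the ordering and needs $C_4$ to be much larger than $C_2$, which is brittle.)

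The paper avoids $B_{u,v}$ entirely here and instead forces a large drop in the $WM$ term in \emph{both} branches of its dichotomy, by appealing directly to the definition of $WM$ as the count of wrong matches/mismatches feeding $mpc1, mpc2$. If there was a true match, then $v$ should have incremented $mpc1_{v,u}$ in essentially all of the last $0.5\,k_{u,v}$ rounds, but did so fewer than $0.4\,k_{v,u}$ times, giving $WM_{v,u} \geq 0.1\,k_{u,v}$; if there was no true match, then every one of $u$'s $\geq 0.4\,k_{u,v}$ increments of $mpc1_{u,v}$ was a wrong match, giving $WM_{u,v} \geq 0.4\,k_{u,v}$. Either way $WM^{\{u,v\}}$ is $\Omega(k)$, the transition resets $WM_{u,v}$ and halves the coefficient on $WM_{v,u}$, and $C_6$ being much larger than $C_1,\dots,C_5$ makes this drop dominate all other changes, including $\Delta B_{u,v}$ and $\Delta G_{u,v}, \Delta_{u,v}G^*$ which are bounded in magnitude by $2k$. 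Your ``unfavorable'' branch gestures at exactly this accounting via ``mirroring Haeupler,'' which is fine as far as it goes, but to repair the proof you should discard the favorable/unfavorable dichotomy and apply the true-match/no-true-match dichotomy uniformly, so that the $WM$ term carries the argument in all cases.
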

\begin{proof}[Proof of Proposition \ref{prop:transition-3}]
Note that since only one party transitions, $k_{u,v}' \neq k_{v,u}'$. Wlog, suppose that the transitioning party is $u$.

Let us suppose that the transition was a reset transition, so $E_{u,v} \geq 0.5k_{u,v}$. Then $u$'s contribution to $\varphi_{u,v}$ will fall, since it sets $E_{u,v}' = k_{u,v}' = 0$ and we can take $C_5 > 2C_2 + 1$. But a priori it seems possible that party $v$ may have its contribution to $\varphi_{u,v}$ \emph{rise}. This is because the contribution of $k_{v,u}$ to the potential was $-C_2 k_{v,u}$, but after party $u$ transitions the contribution is $0.9C_4 k_{v,u}'$, since after $u$'s transition we are using the potential function for unequal $k$'s.
To address this, we note that $k_{v,u}' = k_{v,u} = k_{u,v} \leq 2E_{u,v}$, where the first equality follows because party $v$ did not transition and the second equality holds by assumption. Hence, the change in $\varphi_{u,v}$ is at most $-C_5 E_{u,v} + C_2k_{u,v} + C_2k_{v,u} + C_4k_{v,u}' \leq (-0.5C_5 + 2C_2 + C_4)k_{u,v}$. For sufficiently large choice of $C_5$, this quantity is at most $-1$. Since $G_{u,v}$ and $G^*$ do not change for a reset trasition, this also implies that $\widetilde{\Delta_{u,v}(\phi)} \geq \K$.

Now let us suppose that the transition was a meeting point transition. For simplicity, assume that party $u$ transitions to meeting point 1; identical reasoning will hold for transitioning to meeting point 2.

Note that $u$ only transitions when $mpc1_{u,v} \geq 0.4k_{u,v}$. Since $v$ is not transitioning, we know that $mpc1_{v,u} < 0.4 k_{v,u} = 0.4 k_{u,v}$ and $mpc2_{v,u} < 0.4 k_{u,v}$. Furthermore, we know that for the last $0.5k_{u,v}$ iterations (if $k_{u,v}=1$, then for 1 iteration), both parties have exchanged hashes of the same meeting points. Either there was truly a match among these meeting points or there was not. If there was truly a match (wlog say it is with $v$'s first meeting point), then we know that $WM \geq WM_{v,u} > 0.1k_{v,u} = 0.05 k$, since $v$ did not increment either $mpc1$ more than $0.4k_{u,v}$ times. Since $k'_{u,v} \neq k'_{v,u}$ after the transition, the $WM$ term in $\varphi_{u,v}$ falls by $0.05C_6 \cdot k$ as a result. If there was not truly a match, then we know that $WM_{u,v} \geq 0.4k_{u,v}$, since $u$ incremented its $mpc1$ counter $0.4k_{u,v}$ times despite the lack of a true match. $WM_{u,v}$ resets to 0 after $u$ transitions, so the of $WM$ term in $\varphi_{u,v}$ falls by at least $0.2C_6 \cdot k$ after the transition.

Furthermore, the contribution of each of the other terms in $\varphi_{u,v}$ is at most $2C_5 \cdot k$ due to this transition. Hence, by taking $C_6$ to be sufficiently large with respect to $C_5$, we get that $\varphi_{u,v}$ falls by at least $\Omega(C_6 \cdot k)$, which is at least 1 for sufficiently large $C_6$. Furthermore, $G_{u,v}$ and $G^*$ also only change by at most $2k$, and so by taking $C_6$ to be sufficiently large, we get that $\widetilde{\Delta_{u,v}(\phi)} \geq \K$.
\end{proof}

\begin{proposition}
\label{prop:transition-4}
Fix parties $u$ and $v$ such that $(u,v) \in E$. Suppose that $k_{u,v} = k_{v,u}$, and both parties transition. Then $\varphi_{u,v}$ falls by at least one in the transition phase. Furthermore, $\widetilde{\Delta_{u,v}(\phi)} \geq \K$.
\end{proposition}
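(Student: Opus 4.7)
The plan is to split into subcases by the type of transition each party performs (reset or meeting point), mirroring the subcase structure of Propositions~\ref{prop:transition-1}--\ref{prop:transition-3}. Because $k_{u,v}=k_{v,u}$ holds both before the transition (by assumption) and after (both $k'_{u,v}=k'_{v,u}=0$), the old and new values of $\varphi_{u,v}$ both use the first branch of~\eqref{eqn:Hpot}; since every transition resets $mpc1,mpc2$ (hence $WM$) to zero, the change writes uniformly as $\Delta\varphi_{u,v}=C_3\,\Delta B_{u,v}+C_2\,k-C_5\,E-2C_6\,WM$. In $\widetilde{\Delta_{u,v}(\phi)}=(\K/m)\Delta_{u,v}(G_{u,v})-\K\cdot\Delta\varphi_{u,v}+C_1\K\cdot\Delta_{u,v}(G^*)$, the $-\K\cdot\Delta\varphi_{u,v}$ term is meant to dominate, so it suffices to show $\Delta\varphi_{u,v}\le-\Omega(1)$ while $|\Delta G_{u,v}|$ and $|\Delta G^*|$ are no larger than $O(k)$.

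For the both-reset subcase I would adapt the argument of Proposition~\ref{prop:transition-2}: each party's reset condition gives $E_{u,v}\geq 0.5\,k_{u,v}$ and $E_{v,u}\geq 0.5\,k_{v,u}$, so $E\geq 0.5k$; and the fact that just before the most recent verification either $k=E=0$ or $2E<k$ strictly yields $E\leq 0.5k+1$. Reset transitions do not truncate transcripts, so $\Delta B_{u,v}=\Delta G_{u,v}=\Delta G^*=0$, leaving $\Delta\varphi_{u,v}\leq(C_2-0.5C_5)k+C_5$, which is at most $-2$ for $k\geq 2$ (as it is, since both $k_{u,v},k_{v,u}\geq 1$) and $C_5$ large enough, yielding $\widetilde{\Delta_{u,v}(\phi)}\geq 2\K\geq\K$.

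For the both-meeting-point subcase I would split on whether the triggering matches are genuine. If the chosen meeting points truly agree (e.g., $(T_1)_{u,v}=(T_1)_{v,u}$ when both transition to their first meeting point), then $T'_{u,v}=T'_{v,u}$, so $B'_{u,v}=0$, and invoking the standard meeting-points invariant $k=O(B_{u,v})$ from~\cite{haeupler14} makes the $-C_3\,(B_{u,v}-B'_{u,v})$ contribution dominate the $C_2 k$ term for $C_3$ sufficiently larger than $C_2$. Otherwise no true match exists and each party's triggering $mpc$ counter reached $0.4\,k_{\cdot,\cdot}$ purely via errors or hash collisions, forcing $WM\geq 0.4k$; the $-2C_6 WM$ term of at most $-0.8C_6 k$ then overwhelms everything else once $C_6$ is large enough. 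The mixed subcase (one party resets, the other does a meeting-point transition) combines the two arguments: the resetting party contributes the $-C_5 E\leq -0.5C_5 k_{u,v}$ drop while the meeting-point party contributes either a $-C_3\,\Delta B_{u,v}$ drop or a $-2C_6 WM$ drop.

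The principal obstacle is the both-meeting-point case when the two parties transition to ``crossed'' meeting points (e.g., $u$ transitions to $T_1$ via a true match with $v$'s $T_2$, while $v$ symmetrically transitions to $T_1$ via a true match with $u$'s $T_2$). In this scenario $B'_{u,v}$ need not drop to zero, and $G_{u,v}$ or even $G^*$ can strictly decrease, because one party may truncate past what was previously common. Handling it will require the meeting-points invariant $k=O(B_{u,v})$ of~\cite{haeupler14} together with a choice of $C_3$ large enough relative to $C_1, C_2, 1$ (while remaining much smaller than $C_6$) so that the $-C_3\,\Delta B_{u,v}$ piece of $-\K\Delta\varphi_{u,v}$ absorbs the worst-case $C_1\K\cdot\Delta_{u,v}(G^*)$ and $(\K/m)\Delta_{u,v}(G_{u,v})$ contributions in $\widetilde{\Delta_{u,v}(\phi)}$.
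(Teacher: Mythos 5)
Your high-level decomposition (both reset / both meeting-point / mixed) is reasonable, and the reset cases are essentially right modulo a small slip (in the both-reset subcase you upper-bound $\Delta\varphi_{u,v}$ by $(C_2-0.5C_5)k+C_5$, but since $E\geq 0.5k$ the bound should simply be $(C_2-0.5C_5)k$; the extraneous $+C_5$ would break the conclusion for $k=2$, though the intended inequality holds once you remove it). The paper lumps ``at least one party resets'' into a single clean bound via $E_{u,v}\geq k_{u,v}/2=k/4$ and uses $C_5 > 28C_3+1$, which is slightly more efficient than splitting both-reset from mixed, but either organization works.

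The real gap is in your both-meeting-point case. You write ``invoking the standard meeting-points invariant $k=O(B_{u,v})$ from~\cite{haeupler14} makes the $-C_3(B_{u,v}-B'_{u,v})$ contribution dominate the $C_2 k$ term.'' There is no such unconditional invariant: in the presence of errors and hash collisions, $k_{u,v}$ can grow far past $4 B_{u,v}$, and indeed the paper explicitly opens the both-meeting-point analysis with the case $k_{u,v} > 4B_{u,v}$. In that regime the $-C_3\Delta B_{u,v}$ term is negligible (at most $C_3 B_{u,v} < C_3 k_{u,v}/4$, while $C_2 k$ is positive), and your argument stalls. The paper's fix is to observe that if $k_{u,v}>4B_{u,v}$ then for the last $k_{u,v}/2$ steps both parties were hashing prefixes strictly below $G_{u,v}$ (these prefixes genuinely agree), yet they did not transition at step $k_{u,v}/2$, so one of them must have missed at least $0.4(k_{u,v}/2)=0.05k$ true matches. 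That forces $WM\geq 0.05k$, and the $-2C_6\,WM$ term in $\Delta\varphi_{u,v}$ then dominates once $C_6$ is large. Your proposal never produces this $WM$ lower bound when $k$ is large relative to $B_{u,v}$, and without it the claim is unproved in precisely the regime the adversary can push you into.

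Two smaller remarks. First, the ``crossed meeting points'' obstacle you flag at the end is not actually an obstacle: $T_1$ has length $c\widetilde k$ and $T_2$ has length $(c-1)\widetilde k$, so $u$'s $T_1$ matching $v$'s $T_2$ forces $c_u=c_v-1$ while $v$'s $T_1$ matching $u$'s $T_2$ forces $c_v=c_u-1$, a contradiction; the paper's simple dichotomy ($B'_{u,v}=0$, handled by the $-C_3 B_{u,v}$ drop, versus $B'_{u,v}\neq 0$, handled by $WM\geq 0.4 k_{u,v}$) already covers everything. Second, the paper also treats the degenerate $k_{u,v}\in\{1,2\}$, $B_{u,v}=0$ possibility separately (deducing $WM\geq 0.5 k_{u,v}$ from the failure to match in the first steps); your outline does not account for this.
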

\begin{proof}[Proof of Proposition \ref{prop:transition-4}]
The main difference from Proposition \ref{prop:transition-3} is that we have $k_{u,v}' = k_{v,u}'$ after the transition. Note that $|\Delta(B_{u,v})|, |\Delta(G_{u,v})|$, and $|\Delta_{u,v}(G^*)|$ are all upper bounded by $2k$. Suppose that at least one party (wlog, $u$) does a reset transition. Note that this transition can only decrease $WM$, which in turn only decreases $\varphi_{u,v}$, so we assume that $WM = 0$. Furthermore, recall that $E_{u,v} \geq k_{u,v}/2 = k/4$. Then the difference in $\widetilde{\Delta_{u,v}(\phi)}$ caused after both parties transition is at least
\begin{align*}
\widetilde{\Delta_{u,v}(\phi)} 
&\geq (\Delta(G_{u,v}) + C_5 E_{u,v} + C_2 \Delta(k) - C_3 \Delta(B_{u,v}) 
	\\ & \quad \phantom{(}{}+ C_1 \Delta_{u,v}(G^*))\K \\
&\geq ((C_5/4) k - 2 C_3 k - C_2 k - 2 C_1 k - 2k)\K \\
&\geq ((C_5/4) - 7C_3)k \cdot \K
\end{align*}
By taking $C_5$ to be larger than $28C_3 + 1$ and noting that $k = k_{u,v} + k_{v,u} > 1$, we see that $\widetilde{\Delta_{u,v}(\phi)} > \K$.

Now suppose both parties do a meeting point transition. Suppose that $k_{u,v} > 4 B_{u,v}$. Note that $k_{u,v}$ is a power of two by definition since a meeting point transition is occurring. So if $B_{u,v} > 0$, then $k_{u,v}$ is divisible by 4. Then we must have had at least $k_{u,v}/4$ iterations where $u$ had some value for $k_{u,v}$ that was at least $k_{u,v}/4 > B_{u,v}$, and the same for $v$ with $k_{v,u}$. In this case, one of party $u$'s two meeting points corresponds to $\Pi_{u,v}[1:c(k_{u,v}/4)]$, where $c$ is defined to be the largest integer such that $c \cdot k_{u,v}/4 \leq G_{u,v}$. This uses the fact that $k_{u,v} > 0$ after the Meeting Points phase implies that $status_{u,v} = \text{``meeting points''}$, and that this prevents $u$ from simulating or rewinding $\Pi_{u,v}$. The analogous fact is true of $v$ as well, for identical reasons. Then, by the definition of $G_{u,v}$, we have that
\[  \Pi_{u,v}[1:c(k_{u,v}/4)] =  \Pi_{v,u}[1:c(k_{u,v}/4)].\]
However, the parties did not transition at step $k_{u,v}/2$ -- this means that we must have $WM \geq 0.4(k_{u,v}/2) = 0.1k$. Since $C_6$ is sufficiently large, we get that $\varphi_{u,v}$ falls by at least 1 and that $\widetilde{\Delta_{u,v}(\phi)} \geq \K$.

Note that if $B_{u,v} = 0$, then it is possible that we have $k_{u,v} \in \{1, 2\}$, which we now address for completeness. If $k_{u,v}=1$, then the parties exchanged a single meeting point and are now doing a meeting point transition. But it cannot be that their hashes matched for the first meeting point - if this were the case, they would have made their status ``simulate'' instead of going into the transition phase. But since we know $B_{u,v} = 0$, their hashes should have matched on the first meeting point. Hence, we get that $WM \geq k_{u,v} = 1$. If $k_{u,v}=2$, then note that the parties did not match their Meeting Points when they had previously had $k_{u,v} = k_{v,u} = 1$. Since $B_{u,v}=0$, they should have matched, and so we get that $WM \geq 0.5k_{u,v}$. Hence, either way, $WM$ is sufficiently large so that $\varphi_{u,v}$ falls by at least 1 and that $\widetilde{\Delta_{u,v}(\phi)} \geq \K$.

Now we assume that $k_{u,v}, k_{v,u} \leq 4 B_{u,v}$. Note that this implies that $B_{u,v} > 0$, since $k_{u,v}$ and $k_{v,u}$ are both at least 1. First, we consider the case where $B_{u,v}' \neq 0$. In this case, their communication before now must have had at least $0.4k_{u,v}$ hash collisions or corruptions to make them both increment their meeting point counters enough to transition, and so $WM$ is at least $0.4k_{u,v}$ before the transition. The decrease in $WM$ from the transition means that $\varphi_{u,v}$ falls by at least 1 and that $\widetilde{\Delta_{u,v}(\phi)} \geq \K$.

Now assume that $B_{u,v}' = 0$. Then the potential change from this meeting point transition is at least
\begin{align*} 
&\K \cdot \left(\Delta(G_{u,v}) + C_1\Delta(G^*) + C_2 \Delta(k) - C_3 \Delta(B_{u,v}) \right) \\
&\geq \K \cdot \left( (1+C_1+C_2) (-4k) - C_3 (-k/8) \right).
\end{align*}
 By taking $C_3$ to be large enough, we get that $\widetilde{\Delta_{u,v}(\phi)} \geq \K$.

Since $\widetilde{\Delta_{u,v}(\phi)} > \K$, $\varphi_{u,v}$ falls by at least one, as $\Delta_{u,v}(G^*)$ and $\Delta(G_{u,v})$ are both nonpositive.
\end{proof}
\begin{proof}[Proof of Lemma \ref{lem:potential-rises-overall-1}]
First, note that an argument basically identical to the proof of Proposition \ref{prop:status-simulate} shows that any party $v$ with $status_{v,u} = \text{``simulate''}$ after Meeting Points does not contribute any change to $\varphi_{u,v}$. Hence, we can ignore these parties when establishing Lemma \ref{lem:potential-rises-overall-1}.
\begin{itemize}
\item Suppose neither party transitions. Then the only change to $\varphi_{u,v}$ is in the verification phase, and Lemma \ref{lem:potential-rises-verification} establishes the claim.
\item Suppose one party transitions and $k_{u,v} \neq k_{v,u}$. Then Proposition \ref{prop:transition-1} establishes the claim.
\item Suppose both parties transition and $k_{u,v} \neq k_{v,u}$. Then Lemma \ref{lem:potential-rises-verification} and Proposition \ref{prop:transition-2} establish the claim.
\item Suppose one party transitions and $k_{u,v} = k_{v,u}$. Then Lemma \ref{lem:potential-rises-verification} and Proposition \ref{prop:transition-3} establish the claim.
\item Suppose both parties transition and $k_{u,v} = k_{v,u}$. Then Lemma \ref{lem:potential-rises-verification} and Proposition \ref{prop:transition-4} establish the claim.
\end{itemize}
\end{proof}

Now we can prove the final lemma of this section, which was used directly in the proof that the potential $\phi$ rises during the Meeting Points phase (Lemma~\ref{lem:phi-rises-enough}).

Note that, now that we are done with the proof of Lemma \ref{lem:potential-rises-overall-1}, the meaning of $\Delta(\cdot)$ goes back to the original one; namely, it gives the total change in some variable over a single invocation of the Meeting Points protocol, instead of only reflecting the change from the transition phase.
\begin{lemma}
\label{lem:network-potential-rises}
Let $c$ be the number of pairs $(u,v) \in E$ such that $status_{u,v}$ or $status_{v,u}$ is ``meeting points'' after the Meeting Points phase. Let $\ell_1$ denote the total number of links with errors and hash collisions during the Meeting Points phase. Then after all adjacent parties do Meeting Points, the overall potential rise $\Delta(\phi)$ is at least $5c \cdot \K + 0.4C_7 \ell_1 \cdot \K$.
\end{lemma}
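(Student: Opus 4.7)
The plan is to decompose $\Delta(\phi)$ using Claim~\ref{claim:total-potential-rises} and bound the contribution of each link $(u,v) \in E$ separately by case analysis on the pair of statuses after the Meeting Points phase and on whether an error or hash collision occurred on that link.

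Starting from Claim~\ref{claim:total-potential-rises}, we have
\[
\Delta(\phi) \;\geq\; \sum_{(u,v) \in E} \Bigl(\,\widetilde{\Delta_{u,v}(\phi)} \;+\; C_7 \K \, \Delta_{u,v}(\EHC)\Bigr).
\]
I would partition the edges into three classes: (i) edges where both $status_{u,v} = status_{v,u} = \text{``simulate''}$ (and thus do not contribute to $c$); (ii) edges where at least one of the two statuses is ``meeting points'' and no error or hash collision occurs on the link during the Meeting Points phase; and (iii) edges where an error or hash collision does occur on the link during this phase. Let $c_1$ be the number of type (ii) edges and $c_2$ the number of type (iii) edges; by definition $c \leq c_1 + c_2$, and since each type (iii) edge must host at least one error or hash collision, $c_2 \leq \ell_1$, where $\ell_1 = \sum_{(u,v) \in E} \ell_{1,(u,v)}$ and $\ell_{1,(u,v)}$ is the count of errors and hash collisions on link $(u,v)$ during this phase.

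For class (i), Proposition~\ref{prop:status-simulate} gives $\widetilde{\Delta_{u,v}(\phi)} = 0$ and there are no errors or hash collisions on these links (otherwise $status_{u,v} = status_{v,u} = \text{``simulate''}$ would be inconsistent with the standard conclusion of Meeting Points), so they contribute $0$. For class (ii), Lemma~\ref{lem:potential-rises-overall-1} (applied symmetrically in $u$ and $v$) gives $\widetilde{\Delta_{u,v}(\phi)} \geq 5\K$, contributing at least $5c_1 \K$ in total. For class (iii), Lemma~\ref{lem:potential-rises-overall-1} gives only $\widetilde{\Delta_{u,v}(\phi)} \geq -5C_6 \K$, but here the $C_7 \K \Delta_{u,v}(\EHC)$ term contributes exactly $C_7 \K \, \ell_{1,(u,v)}$, and summing over class (iii) yields at least $-5 C_6 \K c_2 + C_7 \K \ell_1 \geq (C_7 - 5C_6)\K \ell_1$, using $c_2 \leq \ell_1$.

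Adding the three contributions and using $c_1 \geq c - c_2 \geq c - \ell_1$,
\[
\Delta(\phi) \;\geq\; 5\K c_1 \;+\; (C_7 - 5C_6) \K \ell_1 \;\geq\; 5c \K \;+\; (C_7 - 5C_6 - 5)\, \K \ell_1.
\]
Since $C_7$ is chosen sufficiently larger than $C_6$, we have $C_7 - 5C_6 - 5 \geq 0.4\, C_7$, giving the desired bound $\Delta(\phi) \geq 5c \K + 0.4\, C_7 \ell_1 \K$. The main subtlety—and essentially the only non-routine step—is ensuring that the bookkeeping of $c$, $c_1$, $c_2$ and $\ell_1$ is tight enough that the $-5C_6 \K c_2$ loss from ``bad'' edges can be absorbed by the $C_7 \K \ell_1$ gain from counting the errors and hash collisions, which is exactly what the ordering $C_7 \gg C_6$ from the definition of $\phi$ in Eq.~\eqref{eqn:potential} was designed to enable.
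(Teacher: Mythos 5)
Your proof is correct and follows the paper's approach: decompose $\Delta(\phi)$ via Claim~\ref{claim:total-potential-rises}, then bound per-edge contributions with Proposition~\ref{prop:status-simulate} and Lemma~\ref{lem:potential-rises-overall-1}, and absorb the $-5C_6\K$ losses from erroneous edges by taking $C_7$ sufficiently large relative to $C_6$.

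One small misstatement is worth flagging. Your parenthetical claim — that $status_{u,v}=status_{v,u}=\text{``simulate''}$ is inconsistent with an error or hash collision on the link — is false: a hash collision between $u$ and $v$ is precisely the event that leaves both parties with ``simulate'' status despite $T_{u,v}\neq T_{v,u}$. Class (i) should therefore be taken to exclude erroneous links by fiat (so that the three classes form a genuine partition), not by logical implication from the statuses. Relatedly, for class~(iii) links on which both statuses happen to be ``simulate,'' you should cite Proposition~\ref{prop:status-simulate} (which gives $\widetilde{\Delta_{u,v}(\phi)}=0$) rather than Lemma~\ref{lem:potential-rises-overall-1}, whose hypothesis is that some status is ``meeting points.'' Neither change affects your bound, since $0\geq -5C_6\K$; the final inequality and the constant-setting are exactly the paper's.
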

\begin{proof}[Proof of Lemma \ref{lem:network-potential-rises}]
By Claim \ref{claim:total-potential-rises}, we recall that
\[\Delta(\phi) \geq \sum_{(u,v) \in E} \left(\widetilde{\Delta_{u,v}(\phi)} + C_7 \K \Delta_{u,v}(EHC)\right) \]

For any pair $(u,v)$ that have an error or hash collision between them during the Meeting Points phase, $\Delta_{u,v}(EHC) \geq 1$, so we get that $\widetilde{\Delta_{u,v}(\phi)} + C_7 \K \Delta_{u,v}(EHC) \geq C_7 \cdot \K - 5C_6 \cdot \K \geq 0.5 C_7 \cdot \K$, where this follows from Lemma \ref{lem:potential-rises-overall-1} in the case where either $status_{u,v}$ or $status_{v,u}$ was ``meeting points''. In the case where both parties have $status_{u,v} = \text{``simulate,''}$, Proposition \ref{prop:status-simulate} gives us that $\varphi_{u,v}$ is unchanged after Meeting Points, and so are $G^*$ and $G_{u,v}$. Hence, $\widetilde{\Delta_{u,v}(\phi)} + C_7 \K \Delta_{u,v}(EHC) \geq C_7 \cdot \K$.

For any pair $(u,v)$ such that $status_{u,v} = \text{`` meeting points''}$ and there is no error between them, Lemma \ref{lem:potential-rises-overall-1} gives us that $\widetilde{\Delta_{u,v}(\phi)} \geq 5\K$.

For any pair $(u,v)$ such that $status_{u,v} = status_{v,u} = \text{``simulate''}$ after Meeting Points and there is no error between them, Proposition \ref{prop:status-simulate} gives us that $\widetilde{\Delta_{u,v}(\phi)} \geq 0$.

Let $s$ denote the number of links $(u,v)$ for which $status_{u,v}$ or $status_{v,u}$ is ``meeting points'' and there is an error or hash collision on the link. 
We have that
\begin{align*}
\Delta(\phi) &\geq \sum_{(u,v) \in E} \left( \widetilde{\Delta_{u,v}(\phi)} + C_7 \K \Delta_{u,v}(EHC)\right) \\
&\geq 5(c-s) \cdot \K + 0.5 C_7 \ell_1 \cdot \K \\
&\geq 5(c-s) \cdot \K + (0.4 C_7 + 5) \ell_1 \cdot \K \\
&\geq 5c \cdot \K + 0.4 C_7 \ell_1 \cdot \K
\end{align*}
The second line follows from the previous analyses in the lemma. The third line follows from taking $C_7\geq 50$. The final line follows from noting that $s \leq \ell_1$ by definition, since it counts a subset of the links with an error or hash collision. This completes the proof.
\end{proof}

\end{document}